\documentclass[11pt,draft]{article}
\usepackage{fullpage}
\usepackage{amsmath, amssymb, amsthm, amsfonts, latexsym, color, paralist, cancel,ifdraft,mathtools}
\usepackage[final]{hyperref}
\usepackage[final]{graphicx}
\usepackage{scalerel,stackengine}
\stackMath
\newcommand\reallywidehat[1]{%
\savestack{\tmpbox}{\stretchto{%
  \scaleto{%
    \scalerel*[\widthof{\ensuremath{#1}}]{\kern.1pt\mathchar"0362\kern.1pt}%
    {\rule{0ex}{\textheight}}%
  }{\textheight}%
}{2.4ex}}%
\stackon[-6.9pt]{#1}{\tmpbox}%
}
\usepackage{authblk}
\usepackage[x11names]{xcolor}
\usepackage[modulo]{lineno}
\usepackage{cleveref}
\usepackage{float}
\usepackage{tikz}
\usetikzlibrary{positioning, arrows,calc, fit, shapes, chains, matrix}
\usetikzlibrary{decorations.pathreplacing}
\usepackage{tikzsymbols}
\usepackage{ifthen}
\usepackage{subcaption}
\usepackage{adjustbox}

\ifdraft{}{}
\usepackage[refpage]{nomencl}
\makenomenclature

\hypersetup{
  colorlinks   = true, %
  urlcolor     = blue, %
  linkcolor    = {blue!55!black}, %
  citecolor   = red %
}

  \theoremstyle{plain}
  \newtheorem{theorem}{Theorem}[section]
  \newtheorem{lemma}[theorem]{Lemma}
  \newtheorem{corollary}[theorem]{Corollary}
  
  \newtheorem{claim}[theorem]{Claim}
  \theoremstyle{definition}
  \newtheorem{definition}[theorem]{Definition}
    \newtheorem{remark}[theorem]{Remark}
  
  \newtheorem{example}{Example}
  \newtheorem{conjecture}{Conjecture}

\newcommand{\comment}[1]{}

\newcommand{\problem}[1]{\ensuremath{\mathsf{#1}\xspace}} %

\newcommand{\defeq}{\stackrel{\mathrm{def}}{=}}

\newcommand{\norm}[1]{\left\| #1 \right\|}
\newcommand{\bra}[1]{\langle #1 \vert}

\newcommand{\ket}[1]{\vert #1 \rangle}

\newcommand{\ketbra}[2]{\vert #1 \rangle \langle #2 \vert}
\newcommand{\kb}[1]{\vert #1 \rangle \langle #1 \vert}
\newcommand{\braket}[2]{\langle #1 \vert #2 \rangle}

\newcommand{\eps}{\varepsilon}
\newcommand{\real}{\mathbb{R}}
\newcommand{\sizepath}{\ensuremath{k^{\frac{2kd}{\eps} \log_{\left(1 + \frac{\eps}{4}\right)}|\Sigma|}}}

\newcommand{\lightcone}[1]{#1^{\triangle}}
\newcommand{\outsidelightcone}[1]{#1'}
\newcommand{\globalproj}[1]{\tilde{#1}}
\newcommand{\localproj}[1]{#1}

\newcommand\defclass[5]{
\begin{definition}[#1]\label{#2}
#3 \\
\;\;\textbf{Completeness.} #4 \\
\quad\quad\textbf{Soundness.} #5
\end{definition} 
}

\newcommand\defproblem[5]{
\begin{definition}[#1]\label{#2}
#3 \\
\hspace{1cm} \textbf{Yes.} #4 \\
\noindent\;\;\textbf{No.} #5
\end{definition} 
}

  \newcommand{\class}[1]{\textup{#1}}
  \newcommand{\cP}{\class{P}} 
  \newcommand{\PCP}{\class{PCP}} 
  \newcommand{\NP}{\class{NP}} 
  \newcommand{\MA}{\class{MA}}   
    
  \newcommand{\RP}{\class{RP}}

    \newcommand{\complex}{\mathbb{C}}

 \newcommand{\ayes}{A_{\rm yes}} 
  \newcommand{\ano}{A_{\rm no}} 
\usepackage{authblk}
  
\begin{document}

\pagenumbering{gobble}

\title{Stoquastic PCP vs. Randomness}
\author[1]{Dorit Aharonov}
\author[2]{Alex B. Grilo}
\affil[1]{Hebrew University of Jerusalem, Jerusalem, Israel}
\affil[2]{CWI and QuSoft, Amsterdam, The Netherlands}
\date{}
\maketitle

\begin{abstract}
The derandomization of MA, the probabilistic version of NP, is a long standing open question. In this work, we connect this problem to a variant of another major problem: the quantum PCP conjecture.  
Our connection goes through the surprising quantum characterization of MA by Bravyi and Terhal. They proved the MA-completeness of the problem of deciding whether the groundenergy of a uniform stoquastic local Hamiltonian is zero or inverse polynomial. We show that the gapped version of this problem, i.e. deciding if a given uniform stoquastic local Hamiltonian is frustration-free or has energy at least some constant $\epsilon$, is in NP. Thus, if there exists a gap-amplification procedure for uniform stoquastic Local Hamiltonians (in analogy to the gap amplification procedure for constraint satisfaction problems in the original PCP theorem), then MA = NP (and vice versa). 
Furthermore, if this gap amplification
  procedure exhibits some additional 
(natural) properties, then P = RP.
We feel this work opens up a rich set of new directions to explore, which might lead to progress on both quantum PCP and derandomization.  

We also provide two small side results of potential interest. First, we are able to generalize our result by showing that deciding if a uniform stoquastic Local Hamiltonian has negligible or constant frustration can be also solved in NP. Additionally, our work reveals a new MA-complete problem which we call SetCSP, stated in terms of classical constraints on strings of bits, which we define in the appendix. As far as we know this is the first (arguably) natural MA-complete problem stated in non-quantum CSP language.
\end{abstract}
\newpage
\pagenumbering{arabic}
\section{Introduction}
It is a long standing open question, whether the randomized version of NP,
called MA (for Merlin-Arthur) can be derandomized, namely, whether MA = NP.
MA is defined like NP, except the verifier is probabilistic: If $x$ is a yes-instance, there exists a proof $y$ of polynomial size, such that the verifier always accepts $(x,y)$. If $x$ is a no-instance, the verifier rejects $(x,y)$ for any polynomial $y$, with high probability. 

The derandomization of MA is implied by widely believed conjectures such as NEXP does not have polynomial size circuits~\cite{ImpagliazzoKW02},
as well as by the stronger P =
BPP conjecture~\cite{Goldreich2011}, which itself is implied by the existence of one-way
functions~\cite{HastadILL99} and also by commonly conjectured  circuit lower
bounds~\cite{BabaiFNW93,NisanW94,ImpagliazzoW97,SudanTV01,KabanetsI2004}. The
(somewhat counter-intuitive at first) connection between lower-bounds on
computation, and derandomization (which can be seen as an upper-bound result)
coined the intriguing name ``Hardness versus Randomness''~\cite{NisanW94}. Our work follows this path, and provides a result 
of a similar flavor: we connect the derandomization of 
MA (as well as, more weakly, to that of RP) with a different hardness problem in computational complexity -- that of quantum 
PCP -- hence the title of this paper. 

Our starting point is a seminal and beautiful paper of Bravyi and Terhal
\cite{BravyiT09}, where they introduced 
the first natural MA complete problem, 
which surprisingly is defined in the quantum setting\footnote{ \label{fn:ma-complete}
For PromiseMA, it is folklore that one can define complete problems by extending NP-complete problems (see, e.g. \cite{WilliansShor19}):
we define an exponential family of 3SAT formulas (given as input succinctly) and 
we have to decide if there is an
assignment that satisfies all of the formulas, or for every assignment, a random
formula in the family will not be satisfied with good probability.}.  
To explain this quantum characterization of MA, and how we use it to make the connection to quantum PCP, let us make a detour to quantum Hamiltonian complexity.

\subsection{Hamiltonian complexity and stoquastic Hamiltonians} 
The power of QMA~\cite{KitaevSV02} the quantum version of NP, had been a major area of study in the past decade, leading to enormous 
progress in our understanding of the complexity of quantum states and the reductions between Hamiltonians~(see \cite{Osborne12,GharibianHLS15}). In 
QMA, a polynomial time {\it quantum} verification algorithm receives a {\em quantum}
proof $\ket{\psi}$ for some classical input $x$. and applies a quantum polynomial algorithm on both $x$ and $\ket{\psi}$. If $x$ is a yes-instance, there is a $\ket{\psi}$ which causes 
the verifier to accept with high probability; 
otherwise, no matter what the quantum proof $\ket{\psi}$ is, $x$ should be rejected with high probability.  In addition to being a natural generalization of classical proof
systems, the relevance of QMA was evidenced by Kitaev, who showed that
estimating the lowest eigenvalue of a local Hamiltonian, a central problem in many body quantum physics, is
complete for QMA~\cite{KitaevSV02}. Kitaev's theorem is the quantum analog of the Cook-Levin theorem \cite{Cook71,Levin73}, 
and it makes a strong connection between 
a major question in condensed matter physics (namely, groundstates of local Hamiltonians), and a major problem in Theoretical Computer Science, (namely, optimal solutions for constraint satisfaction problems).
In fact, the connection is even deeper since what is shown is that the latter is simply a special case of the former.  

More concretely\footnote{See \Cref{sec:shamiltonian} for a detailed
definition.}, physicists associate with an $n$ qubit system a self-adjoint operator called a Hamiltonian, which corresponds to the {\it energy} of the system, and can be
usually decomposed as a sum of terms corresponding to interactions between a small number of particles. Of major interest is the lowest eigenvalue of this operator, and its corresponding eigenstates - called {\it groundstates}. Looking at this problem through the Theoretical Computer Science lens, 
Kitaev defined the $k$-Local Hamiltonian problem \cite{KitaevSV02, AharonovN02}, whose input is a Hamiltonian $H$ on an
$n$ particle system given as a sum of $m$ local terms, each of
them acting non-trivially on at most $k$ out of the $n$ particles (the term local only refers to the fact that $k$ is assumed to be small, there are no geometrical restrictions on 
the interaction). We are also given as input two parameters, $\alpha$ and $\beta$. 
We then ask 
if the smallest eigenvalue of $H$ is smaller  than $\alpha$, or all states have energy larger than $\beta$. The hardness of the Local Hamiltonian problem depends on
the input {\it promise gap} defined as $\beta - \alpha$; Kitaev showed that the
problem is QMA-complete for some inverse polynomial promise gap~\cite{KitaevSV02, AharonovN02}. 

Bravyi, DiVincenzo, Oliveira and Terhal~\cite{BravyiDOT08} asked how
the problem behaves when the terms are restricted such that their off-diagonal elements are all non-positive, a property that they named ``stoquastic'' (as a combination of the words stochastic and quantum). This
property implies a lot of structure on groundstates (See
\Cref{lem:gspace-structure}), 
and in physics it is associated with the lack of the ``sign problem'', in which case one can associate 
with the Hamiltonian a classical Markov Chain Monte Carlo experiment and study it (See \cite{BravyiDOT08}); such systems 
are considered far easier than general Hamiltonians. \cite{BravyiBT06} showed that the
stoquastic Local Hamiltonian problem is StoqMA-complete, where StoqMA is a complexity class that sits between MA and QMA. 

Importantly for this paper, Bravyi and Terhal~ \cite{BravyiT09} then showed that the stoquastic Local
Hamiltonian problem is MA-complete if we pick $\alpha = 0$ and $\beta \geq
\frac{1}{poly(n)}$, or in other words, if we want to decide whether the Hamiltonian is {\it frustration-free}\footnote{A Hamiltonian is frustration-free if there is a state in the groundspace of 
all local terms.} or the lowest eigenvalue is at least inverse polynomial. This was, to the best of  our knowledge, the first MA-complete problem which is not an extension of NP-complete problems into the randomized
setting\footnote{As commented on in \Cref{fn:ma-complete}}  (see also 
\cite{Bravyi14}). A simple  observation important for the current work is that in the MA-complete problem of \cite{BravyiT09}, the groundspaces of the local terms are all spanned by {\it subset-states}, i.e., states which are the {\it uniform} superposition of a subset of strings. We call such Hamiltonians {\it uniform} stoquastic Hamiltonians.

This paper is concerned with the {\it gapped}
version of the uniform stoquastic Local Hamiltonian problem. 
Gapped versions of 
NP-hard problems have played a crucial role in the 
topic of
probabilistically checkable proofs (PCPs) which had
revolutionized classical Theoretical Computer Science over the past three decades. Before we define the gapped version of the uniform stoquastic Hamiltonian problem, let us introduce the by-now-standard notion of PCPs in more detail.

\subsection{The PCP theorem}
The ``mother'' of all NP-complete problems is 3SAT. An instance to this problem is a
Boolean formula $\phi$ in the form $\phi(x) = C_1 \wedge C_2 \wedge ... \wedge
C_m$, where $C_i = (y_{i, 1} \vee y_{i,2} \vee y_{i,3})$ is a clause and
$y_{i,j} \in
\{x_1,...,x_n, \overline{x_1}, ... \overline{x_n}\}$.  We ask if there exists an
assignment $x \in \{0,1\}^n$ such that $\phi$ is satisfied.

The problem MAX3SAT$_\delta$, parametrized by some function $\delta(n)$, is a
generalization of 3SAT. In this problem we have to distinguish between the cases
where $\phi$ is satisfiable or for every assignment of the input variables, at
least a $\delta(n)$ fraction of the clauses are not satisfied.

By picking $\delta(n) = \frac{1}{m}$, MAX3SAT$_\delta$ becomes equivalent to 3SAT,
and therefore it is NP-complete. In PCPs we are 
interested in versions of the problem with significantly larger $\delta$. The celebrated PCP theorem~\cite{AroraLMSS98,AroraS98} states the remarkable result that there exists some
constant $\eps$ independent of $n$, such that the problem MAX3SAT$_\eps$ is
NP-complete. This problem with constant $\eps$ is called the {\it gapped} version of the problem, and the PCP theorem proves that the gapped version of this problem is as hard as the original one. 

In her celebrated alternative proof to the PCP theorem, Dinur \cite{Dinur07}
used an explicit {\it gap amplification} procedure, or {\it reduction}. The reduction takes an $n$-bit instance $\phi$ of 3SAT (or equivalently  of
MAX3SAT$_{\frac{1}{m}}$) to an instance $\phi'$ of
MAX3SAT$_{\eps}$, acting on not many more than $n$ bits, for some constant $\eps$. It is required that $\phi$ is
satisfiable iff $\phi'$ is satisfiable. Importantly, when $\phi$ is not
satisfiable, every assignment to the variables of $\phi'$ leaves at least
$\eps$ fraction of the clauses of $\phi'$ unsatisfied. This shows that even when there is a constant $\eps$ promise gap between the yes- and no-cases, the problem remains NP-hard. 
The PCP theorem is one of the crown jewels of Computational
Complexity Theory, with far-reaching applications such as hardness of approximation, verifiable delegated computation, program obfuscation, and cryptocurrencies (e.g., \cite{Hastad2001, GoldwasserKR15, BonehISW17,Ben-SassonCG0MTV14}).

Does a quantum version of the PCP theorem ~\cite{AharonovN02,AharonovALV09,AharonovAV13} hold? The gap amplification version of the 
quantum PCP conjecture can be stated as follows: the Local Hamiltonian problem remains QMA-complete even when the promise gap is a constant $\eps$. This PCP conjecture has implications also to our understanding of 
multipartite entanglement, specifically, whether multipartite entanglement can persist at ``room-temperature'' (see \cite{AharonovAV13}). Despite much work on 
this direction \cite{AharonovALV09, Arad11,Hastings13, BrandaoH13a, FreedmanH14,AharonovE15,EldarH17,NirkheVY18}, progress had so far been limited, and it remains a major open problem \footnote{
We mention that a the {\it game} version of quantum PCP, whose connection to the above quantum PCP is not well understood, was recently resolved \cite{NatarajanV18}.}. 

What about PCPs for MA? To the best of our knowledge, the only work on PCPs for randomized classes is Drucker's~\cite{Drucker11}, which proves a PCP theorem for AM, another randomized version of NP. 
The following is a complete problem: an instance is a (succinctly given) {\it collection} of Boolean
formulas $\{\phi_r\}$, and we want to decide between the yes-case: every formula in this family
is satisfiable, or the no-case: with
high probability a uniformly random formula in this family is not satisfiable. Drucker proved that if one replaces "not satisfiable" in the description of no instances, by "$\epsilon$-far from satisfiable" (for some fixed constant $\epsilon$) - namely every assignment violates at least $\epsilon$ fraction of the constraints - the problem remains AM hard. 
The proof relies on the PCP theorem for NP, 
and does not seem to provide insight as to how to carry it over 
to MA. 

\subsection{Our Contribution}
This work creates a surprising link between the long standing problem of derandomizing MA, to the seemingly 
unrelated question of quantum PCP for stoquastic Hamiltonians. By the work of \cite{BravyiT09} the latter question can be viewed as a variant of PCP for MA. Before stating the results, let us 
provide some definitions.

\paragraph{Definition (Informal):} The ($\epsilon,k,d$)- Gapped, Uniform, Stoquastic, Frustration-Free, Local Hamiltonian problem.)
 An input is a set of $m$ positive semi-definite uniform stoquastic terms $H_1, \ldots, H_{m}$, where each $H_i$ acts on $k$ out of the $n$ qudit system and $\norm{H_i}
 \leq 1$; moreover every qubit participates in at most $d$ terms. 
For $H = \frac{1}{m} \sum_{j = 1}^{m} H_j$, decide which of the following holds, given the promise that one is true: \\
\hspace{1cm} \textbf{Yes.} There exists a quantum state
       $\ket{\psi}$ such that
      $\bra{\psi} H \ket{\psi}
        =0$. (the groundenergy of $H$ is $0$) \\
\noindent\;\;\textbf{No.} For all quantum states $\ket{\psi}$
      it holds that
      $\bra{\psi} H \ket{\psi}
        \geq \eps$ . (all eigenvalues are at least $\epsilon$)

{~}

Thus, we are given a bounded degree uniform stoquastic $k$-local Hamiltonian which is
either frustration-free or constantly frustrated. 
Our main result is that distinguishing these two 
is in NP.

\newcommand{\bodymainthm}{For any constants $\eps>0,k,d$, the problem of deciding whether a uniform $d$-bounded degree stoquastic $k$-local Hamiltonian $H$ is frustration-free or $\eps$-frustrated, is in \NP{}.}
\begin{theorem} [{\bf Main: uniform stoquastic frustration-free Local Hamiltonian
  is in NP}]
\label{thm:main}
\bodymainthm
\end{theorem}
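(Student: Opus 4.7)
My approach is to design a deterministic polynomial-time verifier whose witness classically captures the combinatorial structure of the quantum subset-state witness used in the MA protocol of~\cite{BravyiT09}.

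By the structure theorem for uniform stoquastic Hamiltonians (\Cref{lem:gspace-structure}), the groundspace of each local term $H_j$ is spanned by subset states $\ket{T^j_\alpha}$ over disjoint subsets $T^j_\alpha\subseteq\{0,1\}^{A_j}$. A common groundstate then exists if and only if there is some $S\subseteq\{0,1\}^n$ which is \emph{locally consistent}: for every $j$ and every outer assignment $y\in\{0,1\}^{\bar{A}_j}$, the slice of $S$ along $y$ is either empty or equals some $T^j_\alpha$. Equivalently, $S$ is a union of orbits of the ``move graph'' whose edges exchange local views belonging to the same $T^j_\alpha$. My intended NP witness is a single representative string $x\in S$ together with, for each term $j$, the index $\alpha(j)$ of the ground subset of $H_j$ containing $x|_{A_j}$. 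The verifier checks (i) that $x|_{A_j}\in T^j_{\alpha(j)}$ for every $j$ (local feasibility), and (ii) a local compatibility condition on the labels $\alpha(j)$ of overlapping terms. Each check only examines a constant-size neighborhood of the Hamiltonian hypergraph, and since each qudit touches at most $d$ terms, the whole verification runs in polynomial time.

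Completeness is straightforward: take $x$ in the support of any common groundstate $\ket{S}$ and read off the labels $\alpha(j)$ from the consistent slices of $S$; the verifier accepts. Soundness is the main obstacle. From an accepted $(x,\alpha)$ I would try to build the uniform superposition over the orbit of $x$ in the move graph and argue that local consistency of the labels forces this orbit to lie entirely in the feasible set, so that the resulting state is in the kernel of every $H_j$ and thereby witnesses frustration-freeness. The delicate point is that bare local feasibility of $x$ does \emph{not} propagate along the move graph, so the compatibility condition (ii) must be strong enough to rule out ``locally feasible'' seeds whose orbits leave the feasible set, yet weak enough that it can be checked without examining the exponentially many strings in the orbit. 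I expect the constant promise gap $\eps$ to play a crucial role here: a local inconsistency that cannot be patched globally should translate, via a union-bound argument on the $d$-bounded degree hypergraph of local terms, into a uniform energy lower bound that exceeds $\eps$, yielding the dichotomy required by the NP reduction.
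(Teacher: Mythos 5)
Your witness $(x,\alpha)$ carries no more information than $x$ alone: for a good string $x$, the ground subsets $T^j_\alpha$ of each term $H_j$ are pairwise disjoint, so the label $\alpha(j)$ is \emph{uniquely determined} by $x|_{A_j}$ (cf.\ \Cref{rem:unique-phi}). The compatibility condition (ii) therefore collapses to a purely local predicate on $x$ — essentially a radius-one check in the hypergraph of terms — and this cannot distinguish yes-instances from no-instances. Good strings $x$ (strings lying in some $T^j_{\alpha(j)}$ for every $j$) can exist in abundance even when $H$ is $\eps$-frustrated: the obstruction is not that $x$ fails some nearby compatibility test, but that its orbit under the move graph eventually escapes the feasible set, and that escape can happen \emph{many} local moves away from $x$. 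Any radius-one check will accept such an $x$ with its induced labels, so soundness fails. The paper gets around this by making the witness just a string and having the verifier explicitly exhaust all paths of length $r=\sizepath$ out of $x$ — an $\eps$-, $k$-, $d$-dependent (but still constant) radius, not radius one.

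What your plan is genuinely missing is the argument that a \emph{constant} radius suffices in the no-case, which is the entire technical content of the paper's \Cref{lem:constant-step-path}. A ``union bound on the $d$-bounded hypergraph'' does not obviously produce this; the issue is not a counting problem over terms but a combinatorial statement about how subset-states expand under groundspace projections. The paper's route is: (a) a one-term expansion lemma (if a subset-state of good strings is $\delta$-frustrated against one term, projecting onto that term's groundspace multiplies the support size by $1+\delta/2$); (b) a greedy choice of a linear number of \emph{sequentially} frustrated, non-overlapping terms, carefully avoiding the correlation issue your \Cref{ex:parallel}-style worry hints at; (c) exponentiating across constantly many such layers to exceed $|\Sigma|^n$, forcing a bad string to appear within constantly many layers; and (d) a light-cone argument (\Cref{lem:lightcone}) to trim the polynomially many projections in these layers down to constantly many local steps, because ``bad for $P_{i^*}$'' is a local property of the dits in $Q_{i^*}$ and projections outside the light cone of $P_{i^*}$ cannot create it. None of these steps is a union bound, and the light-cone step in particular is the piece that makes the search polynomial-time. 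I'd suggest abandoning the label-compatibility framing and instead trying to prove the quantitative expansion statement; once you have \Cref{lem:constant-step-path}, the NP verifier is just the brute-force path search as in the paper's proof of \Cref{thm:main}.
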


We note that the same problem, except with inverse polynomial gap, is MA-complete. 

The restriction in \Cref{thm:main} 
that the yes-instances have perfect completeness (frustration free) seems too strong. Indeed, further work enables us to strengthen \Cref{thm:main} and relax the requirement, such that yes-instances are just negligibly frustrated: 

\newcommand{\bodythmfrustrated}{The problem of
deciding whether a uniform stoquastic Hamiltonian $H$ on $n$ qubits has negligible frustration\footnote{A function $f$ is called {\em negligible} if  $f = o\left(\frac{1}{n^c}\right)$ for every constant $c$.} or is at least $\eps$-frustrated is in \NP{}, for any 
constant $\epsilon$.}
\begin{theorem} [{\bf uniform stoquastic Local Hamiltonian with Imperfect completeness is in NP}] 
\label{thm:exponentially-small}
\bodythmfrustrated
\end{theorem}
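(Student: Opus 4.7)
The plan is to reduce the imperfect-completeness case of Theorem~\ref{thm:exponentially-small} to the perfect-completeness case already handled by Theorem~\ref{thm:main}. The idea is that for uniform stoquastic Hamiltonians, negligible ground energy already carries strong enough structural consequences to produce a witness of the same combinatorial type used in the frustration-free case. Throughout, I would work with the classical Markov chain picture associated to a uniform stoquastic $H = \frac{1}{m} \sum_i H_i$: each local term is (up to rescaling) a projector whose groundspace is spanned by uniform subset states, and the terms jointly induce a substochastic walk $P$ on the computational basis whose stationary distributions are in one-to-one correspondence with the zero-energy states of $H$.

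The heart of the proof is translating the analytic promise into a combinatorial statement and then closing it off. If $\bra{\psi} H \ket{\psi} \leq \mu(n)$ for a negligible $\mu$, then the distribution $\pi$ obtained from the squared amplitudes of $\ket{\psi}$ has negligible one-step leakage under $P$, averaged over the local terms, so there is a subset $S$ of the computational basis that carries almost all of the mass of $\pi$ and from which $P$ leaks only a negligible amount. The next goal is to upgrade this near-closure into an honest closure: defining $S_t := S \cup P(S) \cup \cdots \cup P^t(S)$, the leakage grows at most linearly in $t$, so for a polynomial $t$ the set $S_t$ still has negligible leakage while its size is at most a polynomial factor larger than $|S|$. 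The final step exploits the uniform stoquastic structure: the entries of $P$ are rationals of bounded denominator (because each $H_i$ is a $k$-local operator with explicitly specified rational matrix elements), so the leakage out of any finite subset is either zero or bounded below by an explicit algebraic quantity. Combined with the negligible leakage bound, this forces the closure to stabilize at polynomial size, producing an honest closed subset whose uniform superposition is an exact zero-energy state of $H$. The witness construction of Theorem~\ref{thm:main} applied to this closed subset then certifies the YES case, while soundness against $\eps$-frustrated instances is inherited directly from Theorem~\ref{thm:main}, since a constant lower bound on the ground energy rules out any closed subset and hence any accepting witness.

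The main obstacle I anticipate is making the ``near-closed implies closed'' step fully rigorous using only the assumptions of uniform stoquasticity and bounded degree. One must exclude the possibility that polynomially many iterations of $P$ fail to stabilize $S$ because the walk mixes slowly, and one must verify that the rational-denominator argument yields a lower bound on nonzero leakage that dominates the negligible regime in which the input lives. If this direct route fails, a natural alternative is not to close $S$ at all but instead to soften the verifier of Theorem~\ref{thm:main} so that it accepts approximately-closed witnesses, redoing its soundness analysis to tolerate the negligible slack; in either case, the essential technical content is in bridging between analytic smallness of the energy and exact combinatorial closure of a subset under the associated walk.
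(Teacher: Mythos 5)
Your plan diverges from the paper's and, more importantly, the key new step you introduce does not work. The paper's proof of Theorem~\ref{thm:exponentially-small} never attempts to restore exact frustration-freeness; it keeps the same brute-force-over-short-paths verifier as in Theorem~\ref{thm:main}, but changes what the honest prover sends. Concretely, \Cref{lem:connection-frustration-bad-strings} bounds the total amplitude on bad strings, \Cref{lem:connection-frustration-missing-strings} bounds the total amplitude on the "boundary" (strings connected to a string outside the support), \Cref{lemma:structure-gstate-negligible} extracts from the groundstate a low-energy positive state with no bad strings and no small amplitudes, and \Cref{lem:new-completeness} is a pigeonhole argument: because the boundary has tiny relative measure within $S$, there must be at least one string $x \in S$ from which no $t$-step walk (for the \emph{constant} $t = \sizepath$) ever leaves $S$ or hits a bad string. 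That single "protected" $x$ is the witness. No closed set is ever constructed.

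Your ``close off $S$ to an honestly closed subset whose uniform superposition is an exact zero-energy state'' step is unsalvageable in this setting. The hypothesis is that the ground energy is negligible \emph{but possibly strictly positive}; in that case $H$ has no zero-energy state at all, so the closed set you are trying to produce need not exist — in fact cannot exist when the ground energy is positive. (If one could always produce it under the theorem's hypothesis, one would have proved that negligible ground energy implies zero ground energy, which is false.) The rational-denominator lower bound you invoke does not rescue this: the combinatorial quantities here live on subset states $\ket{S}$, and the nonzero energy of a subset state is only bounded below by roughly $\frac{1}{m\,|\Sigma|^{k}\,|S|}$, where $|S|$ can be exponential in $n$; that lower bound is $2^{-\Theta(n)}$, which is far below every negligible function, so no contradiction is obtained. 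Also, the claim that $S_t = S \cup P(S) \cup \cdots \cup P^t(S)$ stays within a polynomial factor of $|S|$ for polynomial $t$ confuses leakage of probability mass with the combinatorial size of the set: a single step of the walk can add many new strings even when the amplitude leaking out of $S$ is tiny, precisely because boundary strings can have minuscule amplitude. The paper sidesteps all of this by never needing closure; it only needs a polynomial-factor bound on the fraction of strings in $S$ that are within $t$ steps of a bad or boundary string, and $t$ is constant.

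Your fallback — "soften the verifier of Theorem~\ref{thm:main} so that it accepts approximately-closed witnesses" — is closer in spirit to what actually happens, but the paper does not soften the verifier; it keeps the same verifier and argues via the counting lemma that some string passes it. If you want to repair your proof along your own lines, replace the closure step with: (i) show the low-energy state can be assumed to have no bad strings and no tiny amplitudes (this is the paper's \Cref{lemma:structure-gstate-negligible}); (ii) bound the probability mass on the boundary (the paper's \Cref{lem:connection-frustration-missing-strings}); (iii) run the pigeonhole argument to get a string from which a constant-length $B$-avoiding walk cannot reach either a bad string or exit $S$, and use that directly as the NP witness.
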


Hence, first of all, this provides a new tighter upper-bound 
on the hardness of ground energy and groundstate of stoquastic Hamiltonians, in case the promise gap is constant. 
This is of interest first of all in the context of quantum Hamiltonian complexity \cite{Osborne12,GharibianHLS15}.
We next discuss the implications of these results from the complexity-theoretical
point of view. To this end we propose the following conjecture: 
\begin{conjecture}[Stoquastic PCP conjecture] (Informal) 
\label{conj:pcp}
There exist constants $\eps > 0$, $k',d' > 0$ and an efficient gap amplification procedure that reduces the problem deciding if a uniform stoquastic $d$-degree $k$ Local-Hamiltonian is frustration-free or at least inverse polynomially frustrated, to the problem of deciding if a uniform stoquastic $d'$-degree $k'$ Local-Hamiltonian is frustration-free or at least $\eps$ frustrated.

\end{conjecture}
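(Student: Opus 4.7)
The plan is to adapt Dinur's combinatorial proof of the classical PCP theorem to the uniform stoquastic setting, routing through the SetCSP formalism defined in the appendix. Since that problem is an MA-complete classical analog of uniform stoquastic local Hamiltonians, a gap amplification for SetCSP directly yields the one the conjecture demands for Hamiltonians, and keeps the construction combinatorial rather than forcing one to manipulate quantum states directly.

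First, I would establish a polynomial-time, gap-preserving equivalence between the $(\eps,k,d)$-gapped uniform stoquastic local Hamiltonian problem and an analogously gapped SetCSP problem, so that all amplification work can be carried out in the classical SetCSP world. The easy direction exploits that the groundspaces of uniform stoquastic terms are spanned by subset states; the subtle direction, that constant Hamiltonian frustration implies a constant fraction of violated SetCSP constraints, should follow from a spectral localization argument combined with the subset-state structure identified in \Cref{thm:main}.

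Second, I would mimic Dinur's three steps inside SetCSP. (a)~Preprocessing: turn the constraint hypergraph into a constant-degree expander via cloud replacement plus equality constraints, taking care that variable equalities lift correctly to equalities of set-restrictions. (b)~Powering: replace each variable by a super-variable that records a full set-assignment on a constant-radius ball, with new SetCSP constraints verifying consistency along random walks, and show that each step multiplies the gap by a constant factor independent of instance size. (c)~Composition with a PCP-of-proximity-style inner verifier, to bring the blown-up alphabet back to a constant. Finally, translate the resulting constant-gap SetCSP instance back into a uniform stoquastic $k'$-local Hamiltonian by replacing each SetCSP constraint with the projector onto the span of its allowed subset states, which automatically preserves uniform stoquasticity.

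The main obstacle is step~(b). In Dinur's classical argument, soundness of powering rests on extracting from any ``powered'' assignment a single global assignment, via a plurality vote along random walks, whose fraction of violated \emph{original} constraints is a constant fraction of the powered violation rate. In the SetCSP world a local assignment is itself a set of strings, and two locally consistent sets may fail to share any common scalar assignment; this is precisely the phenomenon that makes MA-style verification potentially stronger than NP verification. Designing a set-valued analog of the plurality assignment with the right robustness guarantees is therefore where the MA-versus-NP distinction really sits, and by \Cref{thm:main} any successful argument here is equivalent to proving MA = NP. I therefore expect that progress must either exploit a combinatorial feature of SetCSP that has no scalar-CSP counterpart, or derandomize a restricted SetCSP sub-problem and use that derandomization as the inner verifier in the composition step of~(c).
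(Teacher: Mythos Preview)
The statement you are attempting to prove is \Cref{conj:pcp}, which the paper explicitly presents as an \emph{open conjecture}, not as a theorem with a proof. There is no ``paper's own proof'' to compare against: the authors do not prove the Stoquastic PCP conjecture, and indeed \Cref{cor:pcp-derandomization} shows that any proof of it would immediately yield $\MA = \NP$, a long-standing open problem in complexity theory. The paper's contribution is the \emph{converse} direction: \Cref{thm:main} shows that the gapped problem lies in $\NP$, so that \emph{if} such a gap amplification existed, derandomization of $\MA$ would follow.

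Your proposal is therefore not a proof but a research program, and you yourself correctly identify where it runs aground. Steps (a) and (c) of your Dinur-style plan are plausible to adapt, but step (b), the powering step, is exactly the obstruction. As you note in your final paragraph, extracting a global set-assignment from locally consistent set-assignments via any plurality-type argument is precisely where the $\MA$-versus-$\NP$ barrier lives; a successful powering lemma for \problem{SetCSP} would \emph{be} a proof that $\MA = \NP$. So your plan does not contain a genuine new idea for crossing that barrier; it simply relocates the open problem from the Hamiltonian language to the \problem{SetCSP} language, which the paper already observes are equivalent in \Cref{app:classical}. This is a reasonable framing for thinking about the conjecture, but it is not a proof, and the paper makes no claim that one exists.
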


An immediate consequence of Theorem \ref{thm:main} is that Conjecture \ref{conj:pcp}  implies MA = NP.

\begin{corollary}[uniform stoquastic PCP conjecture implies derandomization of \MA{}]\label{cor:pcp-derandomization}
If the stoquastic \PCP{} conjecture is true, then \MA{} = \NP{}. 
\end{corollary}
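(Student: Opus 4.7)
The plan is to prove both inclusions separately, with the easy direction being $\NP \subseteq \MA$ (which holds unconditionally: any \NP{} verifier is a deterministic, hence trivial, \MA{} verifier) and the nontrivial direction being $\MA \subseteq \NP$ assuming \Cref{conj:pcp}. For the nontrivial direction, I would set up a chain of three reductions: from an arbitrary \MA{} language to the $\MA$-complete stoquastic frustration-free problem at inverse-polynomial gap, then from inverse-polynomial gap to constant gap using the hypothesized gap amplification, and finally from the constant gap problem into $\NP{}$ via \Cref{thm:main}.

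More concretely, fix any language $L \in \MA{}$. By the Bravyi--Terhal result recalled in the introduction, there is a polynomial-time Karp reduction from $L$ to the uniform stoquastic $k$-local Hamiltonian problem with $\alpha = 0$ and $\beta \geq 1/\mathrm{poly}(n)$: yes-instances of $L$ map to frustration-free Hamiltonians and no-instances map to Hamiltonians with ground energy at least $1/\mathrm{poly}(n)$. Next, invoke \Cref{conj:pcp}: by hypothesis, there exists constants $\eps>0, k',d'$ and an efficient procedure that transforms this intermediate instance into a uniform stoquastic $d'$-bounded degree $k'$-local Hamiltonian, such that yes-instances remain frustration-free while no-instances become $\eps$-frustrated. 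Composing polynomial-time reductions is polynomial-time, so in total we have a polynomial-time reduction from $L$ to the $(\eps, k', d')$-Gapped Uniform Stoquastic Frustration-Free Local Hamiltonian problem.

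At this point \Cref{thm:main} applies directly: for any constants $\eps>0, k', d'$, the resulting gapped problem lies in \NP{}. Since \NP{} is closed under polynomial-time many-one reductions, we conclude $L \in \NP$, and hence $\MA \subseteq \NP{}$. Combining with the trivial inclusion, this gives $\MA = \NP{}$.

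I do not expect any serious obstacle in writing this out: the statement is essentially a one-line composition of (i) Bravyi--Terhal's $\MA$-completeness, (ii) the hypothesized gap amplification in \Cref{conj:pcp}, and (iii) our \Cref{thm:main}. The only thing to verify carefully is parameter bookkeeping, namely that the intermediate instance produced by Bravyi--Terhal is of the right form (uniform stoquastic, bounded locality, inverse-polynomial gap) to serve as valid input to the amplifier posited in \Cref{conj:pcp}; both conditions are ensured by the way the conjecture is stated. No quantitative estimates or auxiliary lemmas are needed beyond what is already in the excerpt.
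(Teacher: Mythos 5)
Your proof is correct and takes exactly the route the paper has in mind: the paper treats \Cref{cor:pcp-derandomization} as an ``immediate consequence'' of \Cref{thm:main} and does not write out a proof, but the intended argument is precisely your chain of (i) Bravyi--Terhal's MA-completeness of the inverse-polynomial-gap problem, (ii) the gap amplification granted by \Cref{conj:pcp}, and (iii) \Cref{thm:main} placing the constant-gap problem in NP, together with the trivial inclusion $\NP \subseteq \MA$.
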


From an optimistic perspective, our result thus opens a way towards proving that MA = NP (as well as circuit lower bounds 
by \cite{ImpagliazzoKW02}) via quantum arguments, in particular by proving specific types of quantum PCPs. This path could of course be very hard, but under the belief that MA = NP, such a gap amplification procedure is in fact {\it known} to exist. 
Taking the opposite point of view, and assuming the less commonly believed assumption that 
MA is strictly larger than NP, our work proves that 
no PCP exists for stoquastic local Hamiltonians, or,  loosely speaking, there is no PCP for MA.

It is natural to ask whether these results have implications 
to the derandomization of RP and BPP, and in particular, 
would stoquastic quantum PCPs imply anything in this scaled down context. 
We provide some (weaker, namely with more assumptions) results in this direction, stated in Appendix \ref{sec:coRP}. 
We describe these briefly here. 
First, one can slightly modify the uniform stoquastic Local Hamiltonian problem
by requiring that in the yes-case, the groundstate 
has the all $0$ string in its support (see \Cref{def:pinned-local-hamiltonian}); then no witness is needed.
We call this the {\it pinned} version of the problem. 
The same proof of Theorem \ref{thm:main} would imply that the gapped version of this pinned uniform stoquastic problem is in P. 
A somewhat strengthened version of Conjecture \ref{conj:pcp}, with the additional natural requirement that the gap amplification reduction is also {\it witness preserving}~\footnote{A reduction is witness preserving if
there is a polynomial time algorithm that maps a witness of the original problem into a witness of the target problem.}
would then imply that P  = co-RP, and since P is closed under complement, P = RP. 
One might expect that Theorem 
\ref{thm:exponentially-small} would imply 
similar implications for BPP; 
however we do not know how to overcome 
a technical obstacle in this argument 
and thus clarifying a reasonable set of assumptions  that would imply BPP=P remains open. 
For details see \Cref{sec:coRP}.

Finally, as a small side result, we present in \Cref{sec:commuting} an alternative proof for the result of~\cite{GharibianLSW15} that shows that  the {\it commuting} version of stoquastic Hamiltonian problem is in NP (for any promise gap.) 
\newcommand{\bodycommutingthm}{The problem of deciding if a commuting stoquastic
Hamiltonian $H$ is frustration-free is in \NP{}.}

\begin{theorem}[commmuting stoquastic Local Hamiltonian problem is in NP]
\label{thm:commuting} \bodycommutingthm \end{theorem}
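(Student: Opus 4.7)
The plan is to adapt the witness-plus-local-check strategy behind \Cref{thm:main} to the commuting setting, where commutativity replaces the uniformity hypothesis and in fact simplifies the verification.

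I would first reduce to the case in which every local term is a $k$-local stoquastic projector. Since each $H_i$ is PSD and stoquastic, applying Perron-Frobenius on each irreducible component of the off-diagonal negativity graph of $H_i$ yields an orthonormal basis of nonnegative, disjointly supported vectors for $\ker H_i$; the projector $K_i$ onto $\ker H_i$ therefore has nonnegative entries, and $\Pi_i := I - K_i$ is itself a $k$-local stoquastic projector. Commutativity, $k$-locality, and frustration-freeness are all preserved under this replacement. A convenient bonus is that sums of commuting projectors have integer spectrum, so $\tilde H := \sum_i \Pi_i$ automatically exhibits a spectral gap of at least $1$ whenever $H$ is frustrated, removing any need for an inverse-polynomial gap assumption.

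The NP-witness is a single classical string $x \in \{0,1\}^n$ that is claimed to lie in the support of some nonnegative common kernel vector. Such an $x$ exists in the frustration-free case by Perron-Frobenius (the common kernel inherits a nonnegative generator from the stoquasticity of the $K_i$'s) and does not exist when the common kernel is trivial. The verifier must decide, in polynomial time, whether $\bra{x}\prod_i K_i \ket{x} > 0$. Commutativity is decisive here: the product $\prod_i K_i$ is an orthogonal projector (commuting projectors yield a self-adjoint idempotent) with nonnegative entries (products of nonnegative matrices are nonnegative), so it is exactly the orthogonal projector onto the common kernel. Its diagonal entry at $x$ is positive iff the restriction of $x$ to the $k$ qubits of each $\Pi_i$ can be extended to a nonnegative kernel vector of $\Pi_i$ consistently with $x$ off those qubits, a condition that is computable in time $2^{O(k)}$ per term.

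The main obstacle is showing that these local per-term checks are jointly \emph{sufficient} for $\bra{x}\prod_i K_i\ket{x} > 0$, and this is where commutativity is indispensable. A Bravyi-Vyalyi-style structural decomposition of the commuting stoquastic projectors, combined with the Perron-Frobenius characterization of $\ker \Pi_i$, should ensure that the nonnegative support of $\prod_i K_i$ factorizes across the interaction hypergraph: local consistency at every hyperedge forces a globally consistent nonnegative ground vector supported at $x$. This step is precisely what cannot be ported to the non-commuting stoquastic case, which is why \Cref{thm:main} must use a more delicate argument that only yields an inverse-polynomial promise gap.
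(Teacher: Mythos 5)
Your setup is sound and lines up with the paper: the witness is a classical string $x$, the verifier performs a local $2^{O(k)}$-time per-term check that $\bra{x}K_i\ket{x}>0$, and you correctly pinpoint the crux as showing these local checks are jointly sufficient for $\bra{x}\prod_i K_i\ket{x}>0$ in the commuting case. You also correctly observe that the reduction $H_i\mapsto \Pi_i = I-K_i$ preserves commutativity and stoquasticity, matching \Cref{rem:term-projector}.

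The gap is that you punt the central sufficiency step to a ``Bravyi--Vyalyi-style structural decomposition'' that is both unnecessary and ill-suited here. Bravyi--Vyalyi is a theorem about commuting \emph{2-local qubit} Hamiltonians; \Cref{thm:commuting} is stated for arbitrary locality $k$ and qudits, where no analogous structure theorem is available, and in any case it gives no information about nonnegativity of supports, which is the property being chased. The correct argument is elementary and does not look at the interaction hypergraph at all. By \Cref{lem:gspace-structure} each $\globalproj{P}_i$ has the non-negative decomposition $\sum_j \kb{\globalproj{\phi}_{j}}$ with orthogonal nonnegative $\ket{\globalproj{\phi}_j}$, and by \Cref{rem:unique-phi} a given $x$ lies in the support of at most one of them. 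If $x$ is $P_i$-good for every $i$, then for any nonnegative vector $\ket{\alpha}$ with $\alpha_x > 0$ one has that $\globalproj{P}_i\ket{\alpha}$ is again nonnegative (entries of $\globalproj{P}_i$ are nonnegative) and its $x$-component is still strictly positive, because the contribution from the unique $\ket{\globalproj{\phi}_{j^*}}$ containing $x$ is already positive and every other term adds something $\ge 0$. Iterating from $\ket{x}$ gives $\bra{x}\globalproj{P}_1\cdots\globalproj{P}_m\ket{x}>0$, so $\ket{\phi}=\globalproj{P}_1\cdots\globalproj{P}_m\ket{x}\ne 0$. Commutativity enters only at the final line, to conclude $\globalproj{P}_i\ket{\phi}=\ket{\phi}$ for every $i$, so that $\ket{\phi}$ certifies frustration-freeness. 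One further point you omit: since there is no gap assumption, the verifier must decide \emph{exactly} whether $\bra{x_Q}\localproj{P}_i\ket{x_Q}=0$, which requires a precision argument (the paper computes each $P_i$ to additive error $1/(4|\Sigma|^k)$ and exploits that a good $x$ chosen with largest amplitude in the groundstate has $\bra{x_Q}\localproj{P}_i\ket{x_Q}\ge 1/|\Sigma|^{k}$). Without something like this, the ``computable in $2^{O(k)}$'' claim is not justified.
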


\subsection{Proof overview and main ideas}

We prove \Cref{thm:main} by derandomizing the verification algorithm used in~\cite{BravyiT09} in their proof of the containment in MA of the inverse-polynomial
version of the stoquastic Hamiltonian problem. 
The derandomization becomes possible when the 
gap is constant, namely, when we know 
that the Hamiltonian is either frustration free or there is a large amount of frustration. 

We briefly explain now the main ideas behind
the randomized verification procedure of \cite{BravyiT09}, using a random walk; then we overview our approach to derandomize it.
They start by defining an (exponential-size) graph whose vertices are all possible $n$-bit
strings. The edges are defined based on the
stoquastic Hamiltonian: two strings $x,y$ are adjacent in the graph, iff they
are connected by some $H_i$, one of the local terms of the Hamiltonian, namely,
if $x$ and $y$ appear together in some groundstate of $H_i$.
The paper considers the following random-walk on the graph: starting from a given $n$-bit string, pick one of the terms uniformly at random, and go to any of the (constantly many) strings 
connected to the current string by that term, uniformly at random. 
This is called a {\it step}.\footnote{In the non-uniform case, there are weights involved and the random-walk becomes more complicated, but here we focus only on the uniform case. }
Bravyi and Terhal also define the notion of a {\it bad string},
which is a string that does not appear in the support of any of the groundstates
of some local term. All other strings are {\it good}. 

Bravyi and Terhal then showed that if the stoquastic Hamiltonian is
frustration-free, then the connected component of any string in the support of
some groundstate of the Hamiltonian, does not contain bad strings. In particular, 
any walk on the above defined graph, starting from some string in a groundstate, does not reach bad strings.
On the other hand, if the Hamiltonian is at least $\frac{1}{p(n)}$ frustrated,
for some polynomial $p$, then there exists some polynomial $q$ such that a
$q(n)$-step random walk starting from any initial string reaches a bad string
with high probability.
The MA verification algorithm then proceeds by the Prover sending some $x$, which is supposed to lie in the support of some groundstate
of the Hamiltonian and the verifier performs a $q(n)$-step random
walk starting from $x$, as above. The algorithm rejects if a bad string is encountered in the random-walk. 

Our main technical result is showing that if the Hamiltonian is $\eps$
frustrated, for some constant $\eps$ independent of $n$, then from any initial string it is possible to reach a bad string in 
$r$ steps, where $r$ is a constant that only depends on $\eps$, $k$ and $d$. Therefore, we can define an NP verification algorithm
which, given some initial string $x$, tries all possible $r$-size paths, and this can be performed in polynomial time since $r$ is constant.
We describe now the main ideas on how to prove that for highly frustrated Hamiltonians, such a short path 
always exists.

Our proof is based on the following two key ideas. First, we notice that if we start with any initial quantum state which is a uniform superposition of good strings, then in case the Hamiltonian is highly frustrated, there must be a term $H_i$ which has large energy with respect to that string (in fact, there must be many, but for now we focus on one). When we apply on the state the projection $P_i$ onto the groundspace of that frustrated term $H_i$, then it is not very difficult to see that the number of strings in the support of the new state, after this projection, will be larger by a constant factor. Moreover, the value of this {\it expansion factor} is directly related to the amount of frustration of $H_i$ with respect to the state we started with. In other words, the more frustrated the term is, the larger the expansion of the set of strings would be, due to projection with respect to that term. 
We call this phenomenon ``one term expansion''; 
it is proven in \Cref{lem:one-term}. 

Now, the idea is to start with one good string provided by the prover, and expand it to an increasingly larger set of good strings by such projections. Our goal is to perform such expansions by a  ``circuit of parallel non-overlapping\footnote{Two
terms are non-overlapping if the sets of qudits on which they act are disjoint}
projections'', as in \Cref{fig:layers-a}. We would like to argue that if the
frustration of the Hamiltonian is high for any state, as we assume now, then there is a constant fraction of the $m$ terms, given by one layer in the circuit, which are all at least constantly frustrated. By the single term expansion argument, each such term would contribute a constant multiplicative factor to the number of good strings in our set, and thus the size of the set of good strings accumulates an exponential factor due to each {\it layer} in the circuit. If this is true, then it must be the case that after at most constantly many layers, the argument breaks down (namely, a bad string is found) since otherwise the number of strings would exceed the number of all possible strings. The implication is that after constantly many layers, a bad string is reached.

Unfortunately, there is a problem in applying the above line of thought directly:  the amount of expansion of two different terms might be strongly correlated. Let us see an example of that.

\begin{example}
\label{ex:parallel}
Let $S = \{0000, 0011, 1100, 1111\}$ and let $\localproj{P}_{1,4} = \localproj{P}_{2,3}  = \kb{\Phi^+} + \kb{\Psi^+}$, where $\localproj{P}_{i,j}$ acts on qubits $i$ and $j$ (and are implicitly tensored with identity on the other qubits), 
and  $\ket{\Phi^+} = \frac{1}{\sqrt{2}}\left(\ket{00} + \ket{11}\right)$ $\ket{\Psi^+} = \frac{1}{\sqrt{2}}\left(\ket{01} + \ket{10}\right)$ are two of the Bell states.
We notice that 
$\bra{S}\localproj{P}_{1,4}\ket{S} = \frac{1}{2}$
and the same holds for $\localproj{P}_{2,3}$. 
 
However, if we take the support of $\localproj{P}_{1,4}\ket{S}$, $S' = \{0000, 0110, 0011, 0101, 1100, 1010, 1111, 1001\}$, it follows that  $\bra{S'}\localproj{P}_{2,3}\ket{S'} = 0$, so $\localproj{P}_{2,3}$ has no frustration after we correct the frustration of $\localproj{P}_{1,4}$. 
\end{example}

This example means that we cannot use the above argument as stated, for many non-overlapping terms applied in parallel: even though there are indeed a linear number of non-overlapping terms which are all frustrated, we cannot simply multiply the expansions due to each of them.

We overcome this difficulty by resorting to some 
``online''  version of the claim: it turns out that by using an adaptive argument, a constant fraction of terms can be found which will all contribute independent multiplicative factors to the increase in size of the set of good strings. This means that each layer in the circuit does contribute an exponential increase in the number of strings. 

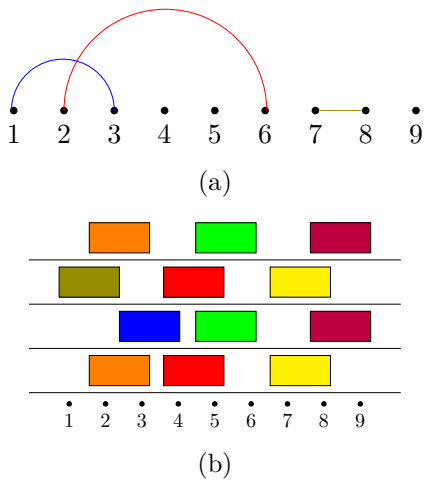
\begin{figure}
    \centering
\begin{subfigure}[b]{0.35\textwidth}
\begin{adjustbox}{max totalsize={1\textwidth}{\textheight},center}
\begin{tikzpicture}
    \node[circle,fill=black,inner sep=0pt,minimum
    size=3pt,label=below:{$1$}] (x1)  {};
  \foreach \i/\j in {2/1,3/2,4/3,5/4,6/5,7/6,8/7, 9/8}
  {
  \node[right=1.5em of x\j,circle,fill=black,inner sep=0pt,minimum
    size=3pt,label=below:{$\i$}] (x\i) {};   
  }

  \draw[color=blue] (x3) arc (0:180:1.85em);
  \draw[color=red] (x2) arc (0:-180:-3.65em);
  \draw[color=olive] (x7) -- (x8);
\end{tikzpicture}
\end{adjustbox}
\caption{}
\label{fig:layers-a}
\end{subfigure}

    \begin{subfigure}[b]{0.3\textwidth}
\begin{adjustbox}{max totalsize={1\textwidth}{\textheight},center}
\begin{tikzpicture}
    \node[circle,fill=black,inner sep=0pt,minimum
    size=3pt,label=below:{$1$}] (x1)  {};
  \foreach \i/\j in {2/1,3/2,4/3,5/4,6/5,7/6,8/7, 9/8}
  {
  \node[right=1.5em of x\j,circle,fill=black,inner sep=0pt,minimum
    size=3pt,label=below:{$\i$}] (x\i) {};   
  }
    \node[above=0.4em of x1] (x1l0)  {};    
  \foreach \i/\j/\k/\ck/\l/\cl/\m/\cm in {1/0/1em/orange/4.7em/red/10em/yellow,2/1/2.5em/blue/6.3em/green/12em/purple,
  3/2/-0.5em/olive/4.7em/red/10em/yellow,
  4/3/1em/orange/6.3em/green/12em/purple}
  {
    \draw (x1l\j) ++(-2em,-0.35em) -- ++(18.5em,0);  
    \node[above=1.5em of x1l\j] (x1l\i)  {};
    \draw[draw=black, fill=\ck] (x1l\j) ++ (\k,0) rectangle ++(3em,1.5em);
    \draw[draw=black, fill=\cl] (x1l\j) ++ (\l,0) rectangle ++(3em,1.5em);
    \draw[draw=black, fill=\cm] (x1l\j) ++ (\m,0) rectangle ++(3em,1.5em);    
}
\end{tikzpicture}
\end{adjustbox}
\caption{}
\label{fig:layers-b}
\end{subfigure}
\caption{In \Cref{fig:layers-a}, we show an example of non-overlapping $2$-local terms, where each term corresponds to an edge of the graph.
In \Cref{fig:layers-b}, we depict a constant-depth circuit where each layer contains non-overlapping $2$-local terms.}
\label{fig:layers}
\end{figure} 

We can now provide a sketch of the first part of the proof: in the case of $\epsilon$-frustration, assume we start with a subset-state of good strings $\ket{S}$, and let $L\ket{S}$ be the state which we arrive at, after applying all non-overlapping projections in the sequence $L$ which we have found above, and taking the subset-state of all strings we have reached. We can show that $L\ket{S}$ contains $\left(1+\frac{\eps}{4}\right)^{\frac{\eps n}{2kd}}$ more strings than  $S$. Then, we repeat this constantly many times. More concretely, set $S_0 = \{x\}$, for some initial string $x$. The above argument shows that either $\ket{S_0}$ contains a bad string (i.e. $x$ is a bad string), or there is a set of terms $L_1$ such that the set $S_1$ with the strings in the support of $L_1\ket{S_0}$  has exponentially more strings than $S_0$. We now repeat this process starting with the state $\ket{S_1}$ instead of $\ket{S_0}$, and so on, until we reach a bad string. Since the number of strings in the set increases exponentially at every step,
there exists some constant $\ell$, that depends only on  $\eps$, $k$ and $d$, such
that  $S_{\ell}$ (which we prove to be the strings in the support of $L_{\ell}...L_1\ket{x}$) contains a bad string. This shows that a constant depth 
circuit of non-overlapping projections, applied to an input string, leads to a bad string.  We depict this constant depth circuit in  \Cref{fig:layers-b}. The proof of the claim that within constantly 
many layers a bad string is reached, is given in \Cref{lem:constant-layers}.

We notice that such a constant depth circuit implies that a bad string can be
found within a constant number of rounds, where each round consists of a set of
local steps, each changing a different local part of the string. However the number of steps in each round might be linear, leading to exponentially many 
possibilities; thus, the
brute-force search for such a path is intractable. 

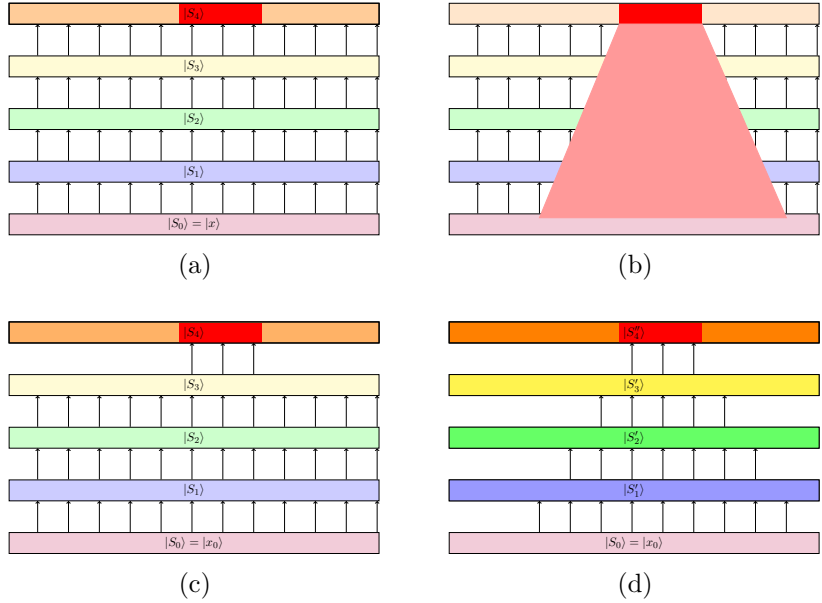
\begin{figure}[t]
    \centering
    \begin{subfigure}[b]{0.3\textwidth}
\begin{adjustbox}{max totalsize={1\textwidth}{\textheight},center}
\begin{tikzpicture}
  \foreach \i/\ip/\c/\sb/\se/\sli in {0/1/blue/2/10/5,1/2/green/3/9/4,2/3/yellow/4/8/3,3/4/orange/5/7/2}
    {
      \ifthenelse{\i=0}{
        \node[draw, thick, rectangle, minimum
          width=30em, minimum height=1.7em, fill=purple!20] (x0) {$\ket{S_0} = \ket{x}$};
      }{
     }
      \foreach \j/\jp in {0/0,1/0,2/1,3/2,4/3,5/4,6/5, 7/6, 8/7, 9/8, 10/9,
      11/10}
      {
         \ifthenelse{\j=0}{
           \node at (x\i.176) (n\i0) {};
         }{
           \node[right=1.8em of  n\i\jp.0] (n\i\j) {};
         }
         \ifthenelse{\j < \sb \OR \j > \se}{
           \draw[->] (n\i\j) ++ (0em, -0em) -- ++(0em,2.5em) {};
         }{
           \draw[->] (n\i\j) ++ (0em, -0em) -- ++(0em,2.5em) {};
         }
      }
      \node[above=2.5em of x\i] (x\ip) [draw, thick, rectangle, minimum
        width=30em, minimum height=1.7em, fill=\c!20] {$\ket{S_{\ip}}$};
    }
      \node (dummy) {};
  \node[above=2.5em of x3] (x42) [draw, thick, rectangle, minimum
        width=30em, minimum height=1.7em, fill=orange!40] {};
  \fill [red] (x4.south) ++ (-1.2em, 0.05em) rectangle ++(6.7em,1.6em);
      \node[above=2.5em of x3] (x4b) [draw, thick, rectangle, minimum
        width=30em, minimum height=1.7em] {$\ket{S_4}$};  
\end{tikzpicture}
\end{adjustbox}
\caption{}
\label{fig:frustration-a}
\end{subfigure}
\;\;\;\;\;\;
\begin{subfigure}[b]{0.3\textwidth}
\begin{adjustbox}{max totalsize={1\textwidth}{\textheight},center}
\begin{tikzpicture}
  \foreach \i/\ip/\c/\sb/\se/\sli in {0/1/blue/2/10/5,1/2/green/3/9/4,2/3/yellow/4/8/3,3/4/orange/5/7/2}
    {
      \ifthenelse{\i=0}{
        \node[draw, thick, rectangle, minimum
          width=30em, minimum height=1.7em, fill=purple!20] (x0) {};
      }{
     }
      \foreach \j/\jp in {0/0,1/0,2/1,3/2,4/3,5/4,6/5, 7/6, 8/7, 9/8, 10/9,
      11/10}
      {
         \ifthenelse{\j=0}{
           \node at (x\i.176) (n\i0) {};
         }{
           \node[right=1.8em of  n\i\jp.0] (n\i\j) {};
         }
         \ifthenelse{\j < \sb \OR \j > \se}{
           \draw[->] (n\i\j) ++ (0em, -0em) -- ++(0em,2.5em) {};
         }{
           \draw[->] (n\i\j) ++ (0em, -0em) -- ++(0em,2.5em) {};
         }
      }
      \node[above=2.5em of x\i] (x\ip) [draw, thick, rectangle, minimum
        width=30em, minimum height=1.7em, fill=\c!20] {};
    }
      \node (dummy) {};
  \fill [red] (x4.south) ++ (-1.2em, 0.05em) rectangle ++(6.7em,1.6em);
    \filldraw[red!40] (x4.south) ++ (-1.2em, 0.05em) -- ++(6.7em,0em)
                             -- (n010.south) -- (n02.south) -- cycle;
                             
\end{tikzpicture}
\end{adjustbox}
\caption{}
\label{fig:frustration-b}
\end{subfigure}
\\~\\
    \begin{subfigure}[b]{0.3\textwidth}
\begin{adjustbox}{max totalsize={1\textwidth}{\textheight},center}    
\begin{tikzpicture}
  \foreach \i/\ip/\c/\sb/\se/\sli in {0/1/blue/2/10/0,1/2/green/3/9/0,2/3/yellow/4/8/0,3/4/orange/5/7/1}
    {
      \ifthenelse{\i=0}{
        \node[draw, thick, rectangle, minimum
          width=30em, minimum height=1.7em, fill=purple!20] (x0) {$\ket{S_0} = \ket{x_0}$};
      }{
     }
      \foreach \j/\jp in {0/0,1/0,2/1,3/2,4/3,5/4,6/5, 7/6, 8/7, 9/8, 10/9,
      11/10}
      {
         \ifthenelse{\j=0}{
           \node at (x\i.176) (n\i0) {};
         }{
           \node[right=1.8em of  n\i\jp.0] (n\i\j) {};
         }
         \ifthenelse{\sli=1}{
           \ifthenelse{\j < \sb \OR \j > \se}{}{
             \draw[->] (n\i\j) ++ (0em, -0em) -- ++(0em,2.5em) {};
           }
         }{
           \draw[->] (n\i\j) ++ (0em, -0em) -- ++(0em,2.5em) {};
         }
      }
      \node[above=2.5em of x\i] (x\ip) [draw, thick, rectangle, minimum
        width=30em, minimum height=1.7em, fill=\c!20] {$\ket{S_{\ip}}$};
    }
      \node (dummy) {};
  \node[above=2.5em of x3] (x42) [draw, thick, rectangle, minimum
        width=30em, minimum height=1.7em, fill=orange!60] {};
  \fill [red] (x4.south) ++ (-1.2em, 0.05em) rectangle ++(6.7em,1.6em);
      \node[above=2.5em of x3] (x4b) [draw, thick, rectangle, minimum
        width=30em, minimum height=1.7em] {$\ket{S_4}$};  
\end{tikzpicture}
\end{adjustbox}
\caption{}
\label{fig:frustration-c}
\end{subfigure}
\;\;\;\;\;\;
    \begin{subfigure}[b]{0.3\textwidth}
\begin{adjustbox}{max totalsize={1\textwidth}{\textheight},center}
\begin{tikzpicture}
  \foreach \i/\ip/\c/\sb/\se/\sli in {0/1/blue/2/10/1,1/2/green/3/9/1,2/3/yellow/4/8/1,3/4/orange/5/7/1}
    {
      \ifthenelse{\i=0}{
        \node[draw, thick, rectangle, minimum
          width=30em, minimum height=1.7em, fill=purple!20] (x0) {$\ket{S_0} = \ket{x_0}$};
      }{
     }
      \foreach \j/\jp in {0/0,1/0,2/1,3/2,4/3,5/4,6/5, 7/6, 8/7, 9/8, 10/9,
      11/10}
      {
         \ifthenelse{\j=0}{
           \node at (x\i.176) (n\i0) {};
         }{
           \node[right=1.8em of  n\i\jp.0] (n\i\j) {};
         }
           \ifthenelse{\j < \sb \OR \j > \se}{}{
             \draw[->] (n\i\j) ++ (0em, -0em) -- ++(0em,2.5em) {};
           }
      }
      \node[above=2.5em of x\i] (x\ip) [draw, thick, rectangle, minimum
        width=30em, minimum height=1.7em, fill=\c!20] {};
    }
      \node (dummy) {};
  \node[above=2.5em of x3] (x42) [draw, thick, rectangle, minimum
        width=30em, minimum height=1.7em, fill=orange!100] {};
  \fill [red] (x4.south) ++ (-1.2em, 0.05em) rectangle ++(6.7em,1.6em);

  \node[above=2.5em of x2] (x42) [draw, thick, rectangle, minimum
        width=30em, minimum height=1.7em, fill=yellow!80] {$\ket{S'_3}$};

  \node[above=2.5em of x1] (x42) [draw, thick, rectangle, minimum
        width=30em, minimum height=1.7em, fill=green!60] {$\ket{S'_2}$};

  \node[above=2.5em of x0] (x42) [draw, thick, rectangle, minimum
        width=30em, minimum height=1.7em, fill=blue!40] {$\ket{S'_1}$};
  \node[above=2.5em of x3] (x42) [draw, thick, rectangle, minimum
        width=30em, minimum height=1.7em] {$\ket{S''_4}$};         
\end{tikzpicture}
\end{adjustbox}
\caption{}
\label{fig:frustration-d}
\end{subfigure}
\caption{Example of the light-cone argument used to find a constant-step path from the initial string to a bad string. The red rectangle marks the qudits of the term frustrated by a bad string. In \Cref{fig:frustration-a}, we have a constant number of layers reaching a state with a bad string.
In \Cref{fig:frustration-b}, we show the light-cone from the frustrated term.
In \Cref{fig:frustration-c}, we remove the projections outside of the light-cone on the last layer. Notice that the new state on the last layer still contains a bad string.
In \Cref{fig:frustration-d}, we show the projections that are left after removing all the terms outside of the light-cone. Again, the last layer still contains a bad string.
}
\label{fig:lightcone}
\end{figure} 

The next part of the proof is where we show how to find a bad string efficiently, given that one is reached in the above constant depth projection circuit. We call this ``the light cone argument". 
To retrieve a constant size path from 
$L_{\ell}...L_1\ket{x}$, the key point is noticing that badness of a string is a
{\em local} property, namely, if a string is bad, we can point at at least one
local term which it is bad for; let us refer to this term as the frustrated term
(this is a meaningful name since if a state contains that bad string, then that term will indeed be frustrated). The crux of the matter is that the fact that badness of a string is local, 
implies also that projections on one set of qubits 
does not affect the badness of terms restricted to the complementary set of
qudits (see \Cref{rem:locality-phi}). This implies, by a simple 
argument, that even if 
we remove all of the terms in $L_1, ..., L_{\ell}$ that are not in the {\it lightcone} of the frustrated term, we will still achieve 
a bad string. This is because if $L_{\ell}...L_1\ket{x}$ contains a bad string, which is bad for some term, then any projection in $L_{\ell}...L_1$ which is not in the light cone of that term, cannot influence its badness. We depict such argument in~\Cref{fig:lightcone}; 
It is proven in \Cref{rem:locality-phi}.

Using the light-cone argument, we deduce that instead of applying the layers $L_{\ell}...L_1$, we can apply just the terms in these layers which are contained in the lightcone, denoted by   
$\lightcone{L}_1$, ..., $\lightcone{L}_{\ell}$, to conclude 
that also the state $\lightcone{L}_{\ell}...\lightcone{L}_1\ket{x}$ contains a bad string. Since the lightcone operators are of constant size, every string in $\lightcone{L}_i...\lightcone{L}_1\ket{x}$ can be reached from a string in $\lightcone{L}_{i-1}...\lightcone{L}_1\ket{x}$ in a constant number of steps, and by induction we deduce that there is a short path from $x$ to a bad string.

\subsection{Related work, implications and open problems}\label{sec:discussion}

\paragraph{PCP for AM vs. PCP for MA.} The PCP theorem for AM proved by Drucker~\cite{Drucker11} and mentioned above, relies strongly on the classical PCP theorem: since the randomness in AM is public, both Prover and Verifier agree on the same Boolean formula, and then they can apply the original PCP theorem with this formula.
In the case of MA, such an argument does not hold since the Prover
does not know the formula that will be tested by the Verifier.

\vspace{-0.8em}
\paragraph{PCP for QMA vs. PCP for MA.} We notice the difference between the quantum and the uniform stoquastic PCP conjecture. It is widely believed that  NP = MA, and
therefore the gapped uniform stoquastic Local Hamiltonian problem is believed to be
both in NP, which we prove in this work, and MA-hard, which is the uniform
stoquastic PCP conjecture (\Cref{conj:pcp}). In this sense, the result that we prove is somehow
``expected'', and it does not change our belief that the stoquastic PCP
conjecture does hold (as it is implied by BPP = P). However, in the fully quantum setting, we believe that NP $\ne$ QMA, and if the
constant promise gap  Local Hamiltonian problem is in NP, then this is a strong
indication  that the quantum PCP conjecture is false\footnote{Indeed, proving
that constant promise gap  Local Hamiltonian problem is in NP is often
considered as disproving the quantum PCP conjecture.}. Our result implies also that
if one expects to prove the quantum PCP conjecture without causing the extra
``side-effect'' of proving NP = MA, the gap-amplification process should not preserve uniform-stoquasticity.

\vspace{-0.8em}
\paragraph{Detectability lemma.} Our setting resembles that of the Detectability lemma (DL)~\cite{AharonovALV09}, a useful tool in quantum Hamiltonian complexity \cite{AharonovALV11,AradLV12,GossetY16,AradLVV2017}. Like in our setting, in the DL setting (see the formulation of \cite{AnshuAV16}) one considers a given local Hamiltonian, where each term is associated with
a local projector on its groundspace. Starting with some state, one considers applying the local projectors one by one (in an arbitrary order). 
Under the assumption that every qudit participates in at most constantly many local terms, this can be viewed as applying all local projections, organized in a constant depth circuit made of local projections - similar to our setting. If the state we start with is the groundstate, and there is no-frustration in the Hamiltonian, then the norm of the state after all these projections of course remains one; this is the easy part. The DL says that if the Hamiltonian is frustrated, then the norm of the state after all these projections will have shrunk by at least some factor; the key in the lemma is to upper bound that factor. When the Hamiltonian is highly frustrated, the DL says that the factor will be a constant strictly less than $1$. This is a strong statement; could it possibly be used to deduce our result? 

A closer look reveals important differences between the two questions. While our goal is to argue containment in NP, and thus we need an efficient classical witness (which we take as an $n$-dit string), the DL requires full knowledge of the quantum state on which the projections are applied, in order to deduce the behavior of the norm. 
We thus do not know how to make any usage of the DL in our setting, though interestingly, 
there seem to be some conceptual connections; In particular, like in the current paper, the proof of the DL relies on arguing that the correlations between the different projections do not matter. 
It would be interesting if more can be said in this direction.

\vspace{-0.8em}
\paragraph{Uniform vs. non-uniform case.} Deriving our results without the uniformity restriction on our stoquastic Hamiltonians seems to be an interesting challenge.  We note that since the uniform case is already MA hard, this will not affect our complexity implications, but seems conceptually 
important, as well as interesting from the complexity of groundstates point of view.
Unfortunately we do not know how to do this, and it is left for future work. The main difference between the uniform and the general case is that in the uniform case, the only source of frustration is the existence of bad strings. When we move to the non-uniform case, the frustration might also appear due to amplitude inconsistency. Let us see a simple example of this.

\medskip \medskip
\noindent {\em Example.}
Let $\eps$ be some small value. Let us consider a one-qubit system, and the Hamiltonian consists in the sum of two terms whose groundstate projectors are 
\[\localproj{P}_1 = \frac{1}{2}\left(\ket{0} + \ket{1}\right)\left(\bra{0} + \bra{1}\right) \textrm{ and } \localproj{P}_2 = \left(\sqrt{1 - \eps}\ket{0} + \sqrt{\eps}\ket{1}\right)\left(\sqrt{1 - \eps}\bra{0} + \sqrt{\eps}\bra{1}\right).\]

We notice that the Hamiltonian is frustrated but there are no bad strings! The source of the frustration is the fact that the first terms requires the amplitude of the groundstate to be the same, but on the other hand the second term pushes them far apart. 

\medskip \medskip

BT~\cite{BravyiT09} deal with this problem in their random walk by assigning weights to the edges, where the weights depend on the Hamiltonian term connecting the two strings. Then, frustration implies that the weight of different paths between two pairs of strings in the support of the groundstate, have different weights (a weight of a path is the product of the weights of the edges).  BT~\cite{BravyiT09} then add extra tests to find these inconsistent paths. At every step of the random walk, the verifier rejects if the weight of the path is larger than one (for this the verifier should start the random walk from the string with maximal amplitude in the groundstate). They prove that if the verifier is provided a string whose amplitude was not maximal or that inconsistent paths exist, then  with high probability a random walk finds a path whose weight is larger than one, leading to rejection.

Interestingly, our proof goes through almost all the way, also for the non-uniform case.
The only problem preventing us from extending the proof to the non-uniform case is the light-cone lemma, 
which does not seem to hold when attempting to detect inconsistencies rather than reachability of bad strings.
A different way to say it is that as far as we can tell, the random-walk proposed by Bravyi and Terhal deals with such cases in a non-local way.  The situation seems to reminiscent of the hardness of sampling from constant depth quantum circuits \cite{TerhalV2004,BravyiGK,CoudronSV18,LeGall18}.

\vspace{-0.8em}
\paragraph{PCP for StoqMA.}
As mentioned previously,  the stoquastic Local Hamiltonian problem where we have to decide if the groundstate energy is below some threshold $\alpha$ or above another threshold $\beta$ is StoqMA-complete for inverse polynomial $\beta - \alpha$. In fact, the StoqMA-completeness holds even for a restricted class of stoquastic Hamiltonians: the  Transverse-field Ising mode~\cite{BravyiH17}. We show in our result that if $\alpha$ is negligible, then the problem is still in NP, and we leave as an open question if the problem is still in NP when $\beta - \alpha$ is constant for an arbitrary $\alpha$.

\vspace{-0.8em}
\paragraph{Adiabatic evolution of Hamiltonians.} Bravyi and Terhal used their random walk for stoquastic Hamiltonians to prove that the adiabatic evolution of frustration-free stoquastic Hamiltonians with inverse polynomial spectral-gap can be performed in randomized polynomial time. 
A major open problem remains to 
extend their result to the general case, in which the frustration-free assumption is relaxed. This would lead to a classical simulation of adiabatic optimization, e.g., of D-Wave type algorithms~\cite{FarhiGGS00,BravyiT09,Hastings13a,Calude15}. We leave as an open question whether our techniques can have any implications in that context. 

\vspace{-0.8em}
\paragraph{A classical version of the problem.} Finally, we note that our work suggests 
a completely classical way to phrase 
the MA-complete problem of Bravyi and Terhal \cite{BravyiT09}. It turns out, as we present more formally in \Cref{app:classical}, that  
{\it uniform} stoquastic local Hamiltonians, 
can be rephrased as a problem which is 
a generalization of constraint satisfaction problems (\problem{CSP}), and which we call here 
Set-Constraints Satisfaction problem, or \problem{SetCSP} in short. 
In \problem{SetCSP}
instead of having constraints that should be satisfied by
an assignment $x \in \{0,1\}^n$, we have ``set-constraints'' that should be
satisfied by some {\it set} of strings $S \subseteq \{0,1\}^n$.
A $k$-local single set-constraint acting on the set $B$ of $k$ bits, 
is a collection of 
non-intersecting sets $T_1,...T_\ell$ of $k$ bit strings. 
Roughly, a set $S$ of $n$-bit strings satisfies such a $k$-local set-constraint if first, the $k$-bit restriction of any string in $S$ to $B$ must be contained in one of the sets. Secondly, 
there is a notion of {\it uniformity}: if an $n$-bit string
$x\in S$ 
restricts to some string in $T_j$ in that set-constraint, 
then by replacing in $x$ the $k$-bits to any other string in 
$T_j$, the string is still in $S$. Namely, all elements of the 
subset $T_j$ appear together (with the same extension 
to the remaining bits) or none of them appears.  
An instance of the \problem{SetCSP} consists of $m$ such $k$-local Set Constraints, and we ask if there is a set of $n$-bit strings that satisfies each of the set constraints, or if all sets of $n$-bit strings are far from satisfying them, namely, frustrated (of course, the notion of being "far" needs to be defined). We show that deciding whether a \problem{SetCSP} instance is satisfiable or inverse polynomially frustrated is MA-complete.  

Our result can be presented in \problem{SetCSP} language, e.g. Theorem \ref{thm:main} implies that for constant promise gap, this problem is actually in NP. However we prefer to present our results here 
using quantum language since the equivalent stoquastic Hamiltonian problem seems to be a more natural problem. 
Never the less, this first complete problem 
for MA defined in CSP language might be useful in future works.  

\paragraph{Organization of the remainder of the paper:}
We start with some preliminaries in \Cref{sec:preliminaries}. We discuss stoquastic Hamiltonians and the proof of MA-completeness in \Cref{sec:stoquastic}. Our main result is proven in \Cref{sec:main}. 
In \Cref{sec:negligible}, we show how to replace the frustration-free condition by allowing the frustration to be negligible.
We finish by proving that commuting stoquastic Hamiltonians are in NP in \Cref{sec:commuting}. We leave to~\Cref{sec:coRP} the definition of the variants of the stoquastic Hamiltonian problems which are hard for co-RP and BPP. In \Cref{app:classical} we provide the description of the uniform stoquastic Hamiltonian problem in CSP language, and the proof it is MA complete. 
 
\subsection*{Acknowledgments}
The authors thank Ayal Green for the helpful discussion at early stages of this work. DA is also grateful for  a very short discussion with Noam Nisan which was as insightful as much as it was short, as well as for very useful remarks from Avi Wigderson, and a question from Umesh Vazirani. We also thank Henry Yuen for the help on improving the clarity of the paper.  
AG is supported by 
ERC Consolidator Grant 615307-QPROGRESS. DA's research on this project was supported by ERC grant 280157 and ISF grant 1721/17.
Part of this work was done when AG was a member of IRIF, Universit\'{e} Paris Diderot, Paris, France, where he was supported by ERC QCC. The authors thank also the French-Israeli Laboratory on Foundations of Computer Science   (FILOFOCS) and Simons collaboration grant number 385590 that allowed AG to visit Hebrew University of Jerusalem.

\section{Preliminaries and Notations}
\label{sec:preliminaries}
\subsection{Complexity classes, NP and MA}
A (promise) problem  $A = (A_{yes}, A_{no})$ consists of two non-intersecting sets 
 $A_{yes}, A_{no} \subseteq \{0,1\}^*$. 
We define now the main complexity classes that are considered in this work. We start by formally defining the well-known class \NP{}. 
\defclass{\NP}{def:NP}{
A problem $A=(\ayes,\ano)$ is in \NP{} if and only if there exist a polynomial
  $p$ and a deterministic algorithm $D$, where $D$
  takes as input a string $x\in\Sigma^*$ and a $p(|x|)$-bit witness $y$ and decides
   on acceptance or rejection of $x$
  such that:
}
{If $x\in\ayes$, then there exists a witness $y$
    such that $D$ accepts $(x,y)$.}
{If $x\in\ano$, then for any witness $y$, $D$
    rejects $(x,y)$.}

We can then generalize this notion, by giving the verification algorithm the power of flip random coins, leading to the complexity class \MA{}.

\defclass{\MA}{def:MA}{
A problem $A=(\ayes,\ano)$ is in \MA{} if and only if there exist a polynomial
  $p$ and 
 a probabilistic algorithm $R$, where $R$
takes as input a string $x\in\Sigma^*$ and a $p(|x|)$-bit witness $y$ and decides
on acceptance or rejection of $x$
such that:  
}
{ If $x\in\ayes$, then there exists a witness $y$
    such that $R$ accepts $(x,y)$ with probability $1$.}
{ If $x\in\ano$, then for any witness $y$, $R$
    accepts $(x,y)$ with probability at most $\frac{1}{3}$.}

The usual definition of  \MA{} requires yes-instances to be accepted with probability at least $\frac{2}{3}$, but it has been shown that there is no change in the computational power  if we
require the verification algorithm to always accept yes-instances~\cite{Zachos1987,Goldreich2011}.

\subsection{Quantum states}
We review now the concepts and notation of Quantum Computation that are used in
this work. We refer to Ref. \cite{NielsenC2011} for a detailed introduction of
these topics.

Let $\Sigma = \{0,...,q-1\}$ be some alphabet.
A qudit of dimension $q$ is associated with the Hilbert space $\complex^{\Sigma}$, whose canonical (also called computational) basis is
$\{\ket{i}\}_{i \in \Sigma}$. 
A pure quantum state of $n$ qudits of dimension $q$ is a unit vector in the Hilbert space
$\left\{\complex^{\Sigma}\right)^{\otimes n}$, where $\otimes$ is the Kroeneker (or tensor) product.
The basis for such Hilbert space is $\{\ket{i}\}_{i \in \Sigma^n}$.  
For some quantum state $\ket{\psi}$, we denote $\bra{\psi}$ as its conjugate transpose. The inner product between two vectors $\ket{\psi}$ and $\ket{\phi}$ is denoted by
$\braket{\psi}{\phi}$ and their outer product as
$\ketbra{\psi}{\phi}$. For a vector $\ket{\psi} \in \complex^{|\Sigma|^n}$, its $2$-norm is defined as ${\norm{\ket{\psi}} :=
  \left(\sum_{i \in \Sigma^n} |\braket{\psi}{i}|^2\right)^\frac{1}{2}}$.

We now introduce some notation which is somewhat less commonly used and more specific for this paper: the support of $\ket{\psi}$,  $supp(\ket{\psi}) = \{i \in \Sigma^n : \braket{\psi}{i} \ne 0\}$, is the set strings with non-zero amplitude.  
We call quantum state $\ket{\psi}$ {\it non-negative} if $\braket{i}{\psi} \geq 0$ for
all $i \in \Sigma^n$.
For any $S \subseteq \Sigma^n$, we define the state $\ket{S} :=
\frac{1}{\sqrt{S}} \sum_{i \in S} \ket{i}$ as the subset-state corresponding to
the set $S$~\cite{Watrous00}. For a non-negative state $\ket{\psi}$, we define
$\ket{\reallywidehat{\psi}} := \ket{supp(\ket{\psi})}$ as the subset-state induced by the strings in the support of $\ket{\psi}$. We say that this is the subset state corresponding to the state 
$\ket{\psi}$. Analogously, for some linear operator $P$ the state  $\reallywidehat{P\ket{\psi}}$ means the subset-state corresponding to the state  $P\ket{\psi}$.

\subsection{Hamiltonians, Groundstates, Energies, Frustration}
\begin{definition}[Hamiltonian]
A {\em Hamiltonian} on $n$ qudits is a Hermitian operator on $\complex^{|\Sigma|^n}$, namely, a complex Hermitian matrix of  dimension $|\Sigma|^n \times  |\Sigma|^n$. 
A Hamiltonian on $n$ qudits is called {\em $k$-Local} if it can be written as $H = \sum_{i = 1}^m \globalproj{H}_i$, where each $\globalproj{H}_i$ can be written in the form $\globalproj{H}_i=H_i\otimes I$, where $H_i$ acts on at most $k$ out of the $n$ qudits.  
\end{definition}

Hamiltonians describe the evolution of physical systems, using Schrodinger's equation. Their eigenvalues correspond to the energy of the system; more generally, the energy of a state $\ket{\psi}$ with respect to a Hamiltonian $H = \frac{1}{m} \sum_{i = 1}^m H_i$ is 
given by $\bra{\psi}H\ket{\psi}$. Notice that we use the term energy 
even though we average by the number of terms, so this is the average energy {\it per term}; this is different from the usual usage of the term energy, or energy density, in the physics literature, where 
one usually considers the average energy {\it per particle}.
This normalization is more convenient in the context of 
PCPs \cite{Dinur07}. 
We also consider the energy of the state with 
respect to a specific term $H_i$, which is $\bra{\psi}H_i\ket{\psi}$.
The minimal energy is the smallest eigenvalue of the Hamiltonian, and an eigenstate which has this energy is called a {\it groundstate}.

\begin{definition}[Groundstate, groundspace, frustration and frustration-free]
A {\em groundstate} of a Hamiltonian $H = \frac{1}{m}\sum_{i = 1}^m H_i$ is an eigenvector associated with its minimum eigenvalue, which is called the {\em groundstate energy}. The {\em groundspace} of a Hamiltonian is the subspace spanned by its groundstates. 
$H$ is called $\eps$-frustrated if for every state $\ket{\psi}$, $\frac{1}{m} \sum_i \bra{\psi}H_i\ket{\psi} \geq \eps$.
Finally, $H$ is called {\em frustration-free} if 
there exists some $\ket{\psi}$ such that 
for every $i$, the local term $H_i$ is positive definite
 and $\bra{\psi}H_i\ket{\psi} = 0$. 
\end{definition}

Throughout this paper we use the following notation for a local Hamiltonian $H = \frac{1}{m} \sum_{i = 1}^m \globalproj{H}_i$.
We set $\localproj{P}_i$ to be the {\it local} projection on
the groundspace of $H_i$; while $\globalproj{P}_i=\localproj{P}_i\otimes I$ corresponds to the projection on the groundspace of $\globalproj{H}_i$. 

We prove now a useful lower-bound on the number of frustrated terms of a highly frustrated Hamiltonian.

\begin{claim}[Lower bound on frustrated terms]\label{lem:lb-frustrated-terms}
  Let $H = \frac{1}{m}\sum_{i=1}^m \tilde{H}_i$ be a Local Hamiltonian that is $\eps$-frustrated. Then for every state $\ket{\psi}$, there exist at least $\frac{\eps m}{2}$ terms that are at least $\frac{\eps}{2}$ frustrated.
\end{claim}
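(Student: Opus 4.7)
The plan is to prove this by a direct averaging (Markov-type) argument applied to the given lower bound on the total energy. First I would fix an arbitrary state $\ket{\psi}$ and write down the hypothesis, namely
\[
\frac{1}{m}\sum_{i=1}^{m}\bra{\psi}\tilde{H}_i\ket{\psi}\geq \eps,
\]
which means $\sum_i \bra{\psi}\tilde{H}_i\ket{\psi}\geq \eps m$. The key observation I would use is that each local term satisfies $0\leq \bra{\psi}\tilde{H}_i\ket{\psi}\leq 1$, which follows from $\tilde{H}_i$ being positive semi-definite with norm at most $1$ as stated in the setup of the problem.

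Next I would argue by contradiction: suppose fewer than $\tfrac{\eps m}{2}$ terms have energy at least $\tfrac{\eps}{2}$ on $\ket{\psi}$. Partition the indices into the set $L$ of ``large'' terms (energy $\geq \eps/2$) and $S$ of ``small'' ones (energy $< \eps/2$). Under the contradiction assumption, $|L|<\tfrac{\eps m}{2}$ and $|S|>(1-\tfrac{\eps}{2})m$. Then I would upper bound the total energy as
\[
\sum_{i\in L}\bra{\psi}\tilde{H}_i\ket{\psi} + \sum_{i\in S}\bra{\psi}\tilde{H}_i\ket{\psi} < |L|\cdot 1 + |S|\cdot\frac{\eps}{2} < \frac{\eps m}{2}+\frac{\eps m}{2}=\eps m,
\]
contradicting $\sum_i\bra{\psi}\tilde{H}_i\ket{\psi}\geq \eps m$. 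This yields the claim.

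There is essentially no obstacle here: the only thing to be slightly careful about is the normalization of the $\tilde{H}_i$, since the averaging bound relies crucially on $\bra{\psi}\tilde{H}_i\ket{\psi}\leq 1$. I would make this explicit by recalling the convention $\norm{\tilde H_i}\leq 1$ used throughout the paper (and implicit in the definition of the ($\eps,k,d$)-Gapped problem). Once that is in place the proof is a two-line computation, which is why I would write it as a short standalone claim.
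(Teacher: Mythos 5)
Your proof is correct and is essentially identical to the paper's: both argue by contradiction, partition the terms by whether their energy on $\ket{\psi}$ is at least $\eps/2$, bound the large terms by $\norm{\tilde{H}_i}\le 1$ and the small ones by $\eps/2$, and derive the contradiction $\bra{\psi}H\ket{\psi}<\eps$. Your explicit flagging of the normalization $\norm{\tilde{H}_i}\le 1$ as the load-bearing assumption matches the paper's remark at exactly that step.
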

\begin{proof}
We prove this by contradiction. Let $F = \{i : \bra{\psi}\tilde{H}_i\ket{\psi} \geq \frac{\eps}{2}\}$. We assume then that $|F| < \frac{\eps m}{2}$. Then the energy of the state is
\begin{align*}
&\bra{\psi}H\ket{\psi} \\
&= \frac{1}{m}\left(\sum_{i \in F} \bra{\psi}\tilde{H}_j\ket{\psi}  +
\sum_{i :\not\in F} \bra{\psi}\tilde{H}_j\ket{\psi}  \right) \\
&\leq \frac{|F|}{ m} + (m - |F|)\frac{\eps}{2m}   \\
&< \frac{\eps}{2} + \frac{\eps}{2}   \\
&= \eps,
\end{align*}
where in the first inequality we used the fact that the norm of all terms is at most $1$, and that the terms outside of $F$ contribute at most $\frac{\eps}{2m}$ to the above sum by definition of the set $F$. In the second inequality we used our assumption that $|F| < \frac{\eps m}{2}$. We then have that $H$ is not $\eps$-frustrated, which is a contradiction.
\end{proof}

\section{Background: the Stoquastic Hamiltonian problem}
\label{sec:stoquastic}
\label{sec:shamiltonian}
In this section, we define stoquastic Hamiltonians, prove certain basic properties,  as well as state their relation to the complexity class MA. 

\subsection{Stoquastic Hamiltonians}
In this work, we deal with a special type of Hamiltonians, which are called stoquastic.

\begin{definition}[Stoquastic Hamiltonian~\cite{BravyiDOT08}]
 A $k$-Local Hamiltonian $H = \sum_{i = 1}^m \globalproj{H}_i$ is called {\em stoquastic} in the computational basis if for all $i$, the off-diagonal elements of $H_i$ (the local terms) in this basis are non-positive\footnote{Klassen and Terhal~\cite{KlassenT18} have a different nomenclature. They call a matrix Z-symmetric if 
the off-diagonal elements of the local terms are non-positive and they call a Hamiltonian stoquastic if all local terms can be made Z-symmetric by local rotations.
}.
\end{definition}

As remarked in \cite{MarvianLH18}, every Hamiltonian is stoquastic in the basis that diagonalizes it. However, the length of such description might be exponential in the number of qubits, since it may be impossible to write it as a sum of local terms. Some recent works~\cite{MarvianLH18,KlassenT18}, provide evidence that deciding if a given local  Hamiltonian can be made stoquastic by local basis change is computationally hard. Therefore, in our definition we assume that the stoquastic Hamiltonian is {\it given} in the basis where each of the {\it local} terms is stoquastic, i.e., has non-positive off-diagonal elements.

We remark that WLOG we assume that each term $H_i$ is positive semi-definite since we could just add a constant $cI$ to the term which only causes a constant shift in the eigenstates of the total Hamiltonian, hence does not change the nature of the 
problems we discuss here but can be easily seen to make the terms PSD. 

A property of a stoquastic local Hamiltonian is that the groundspace of the local terms can be decomposed in a sum of orthogonal non-negative rank-1 projectors.
\begin{lemma}[Groundspace of stoquastic Hamiltonians, Proposition 4.1 of \cite{BravyiT09}]
\label{lem:gspace-structure}
Let $H$ be a stoquastic Hamiltonian and let $P$ be the projector onto its groundspace. It follows that 
\begin{align}
\label{eq:projector-groundspace}
P = \sum_j \kb{\phi_j}, 
\end{align}
where for all $j$, $\ket{\phi_j}$ is non-negative and for $j \ne j'$,
$\braket{\phi_{j'}}{\phi_j} = 0$.
\end{lemma}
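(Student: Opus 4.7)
The plan is to derive the decomposition from the Perron--Frobenius theorem applied block-by-block to a non-negative matrix obtained from $H$ by a scalar shift. First I would pick a constant $c$ large enough that $M := cI - H$ has non-negative entries in the computational basis: stoquasticity makes the off-diagonal entries of $M$ non-negative, and the choice of $c$ handles the diagonal. Since shifting by a scalar does not change eigenvectors, the groundspace of $H$ equals the top eigenspace of $M$, so it suffices to decompose the latter.

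Next I would form the undirected graph $G$ with vertex set $\Sigma^n$ and edges $\{x,y\}$ whenever $x \neq y$ and $\bra{x}M\ket{y} > 0$. Because $M$ has no matrix elements between distinct connected components of $G$, it block-decomposes as $M = \bigoplus_\alpha M_\alpha$, where $M_\alpha$ is the restriction of $M$ to $\mathrm{span}\{\ket{x} : x \in C_\alpha\}$ for each connected component $C_\alpha$ of $G$. By construction each $M_\alpha$ is a non-negative matrix whose associated graph on $C_\alpha$ is connected, i.e., $M_\alpha$ is irreducible.

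Now I would invoke Perron--Frobenius on each block: the largest eigenvalue $\lambda_\alpha$ of $M_\alpha$ is simple, and its eigenspace is spanned by a unit vector $\ket{\phi_\alpha}$ whose amplitudes are strictly positive on $C_\alpha$ and zero off $C_\alpha$. Let $\lambda^{*} := \max_\alpha \lambda_\alpha$; this equals $c$ minus the groundenergy of $H$, and the groundspace of $H$ is the direct sum of the lines $\mathbb{C}\ket{\phi_\alpha}$ for which $\lambda_\alpha = \lambda^{*}$. Re-indexing these vectors as $\ket{\phi_j}$, each is non-negative, and distinct $\ket{\phi_j}$ are supported on disjoint connected components, which immediately gives $\braket{\phi_{j'}}{\phi_j} = 0$ for $j \neq j'$. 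The projector onto the groundspace is therefore $P = \sum_j \kb{\phi_j}$, as claimed.

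The only delicate step is invoking Perron--Frobenius in its strong form (simplicity of the top eigenvalue plus strict positivity of the Perron eigenvector), which requires irreducibility of each block; decomposing $G$ into connected components first is precisely what supplies that irreducibility, after which the rest is routine spectral theory.
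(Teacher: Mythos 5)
Your proof is correct, but it takes a genuinely different route from the paper's. The paper's strategy is two-step: first it reduces the problem to showing that $P$ itself has non-negative entries, which it proves via the Gibbs-state limit $P = \lim_{\beta \to \infty} q\, e^{-\beta H} / \mathrm{Tr}(e^{-\beta H})$ together with a Taylor-expansion argument applied to $e^{-\beta H + sI}$ for a suitable shift $s$; only then does it run the connected-component/Perron--Frobenius argument on the blocks of $P$ (using that $P$ is a projector, so each irreducible block with top eigenvalue $1$ is rank one). You skip the detour through $e^{-\beta H}$ entirely: you apply the scalar shift directly to $H$ rather than to $-\beta H$ inside an exponential, block-diagonalize $M = cI - H$ over the connected components of its support graph, invoke Perron--Frobenius on each irreducible block, and identify the groundspace of $H$ with the top-eigenvalue eigenspace of $M$. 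Both proofs hinge on the same two ingredients (a scalar shift to reach a non-negative matrix, and strong Perron--Frobenius after splitting into irreducible blocks), but your version is more economical because it removes the $\beta \to \infty$ limit and the exponential series altogether. The paper's route has the minor virtue of isolating the intermediate fact that $P$ is entrywise non-negative, but your decomposition recovers that as an immediate corollary, so nothing is lost. One small point you leave implicit but which is needed for the undirected-graph / irreducibility step: stoquasticity forces the off-diagonal entries of $H$ to be real (non-positive), and the diagonal entries of any Hermitian matrix are real, so $H$ and hence $M$ are real symmetric, making ``connected'' equivalent to ``irreducible.''
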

\begin{proof}
We start by showing that if all of the entries of $P$ are non-negative, then the statement holds. 
Let $x, y, z$ be some strings such that $\bra{x}P\ket{y} > 0$ and
$\bra{y}P\ket{z} > 0$. Then
\begin{align}\label{eq:connected}
 \bra{x}P\ket{z} = \bra{x}P^2\ket{z} = \sum_{w} \bra{x}P\kb{w}P\ket{z}
 > \bra{x}P\kb{y}P\ket{z} > 0,
\end{align}
where in the first inequality we use the fact that $P$ has only non-negative entries. Therefore, we can partition the string in equivalent classes $T_1,...T_t$ regarding the property $\bra{x}P\ket{y} > 0$.

It follows that the subspace spanned by the strings in $T_i$ is $P$-invariant and therefore $P$ is block-diagonal with respect to the direct sum of such subspaces. Using the Perron-Frobenius theorem for each of the blocks, we have that its largest eigenvalue is non-degenerate, and in this case the block is rank-one, since $P$ is a projector with eigenvalues $1$ and $0$. Since all the entries of $P$ are non-negative, then each one of these rank-one blocks correspond to a non-negative state.

This finishes the proof of the case where $P$ has non-negative entries. We show now that this property holds for stoquastic Hamiltonians.

We have that the groundspace projector $P$ of the Hamiltonian consists of the Gibbs state for temperature tending to $0$, i.e.,  $P = \lim_{\beta \rightarrow \infty} q \frac{e^{-\beta H}}{Tr(e^{-\beta H})}$, where $q > 0$ is the dimension of the groundspace (This is a well known easy fact, see for example Proposition $4.1$ in  \cite{BravyiT09}). 
Thus, it suffices to prove that $e^{-\beta H}$ is a matrix of non-negative entries (we already know that the trace is non-negative by the fact that the eigenvalues 
of $e^{-\beta H}$ are positive). 

Let $s$ be some value such that $-\beta H + s I$ has only non-negative entries. Write 
\begin{align*}
e^{-\beta H} = e^{-\beta H + s I - s I } = e^{-\beta H + s I}e^{- s I },
\end{align*}
where the last equality holds because $- s I$ and $(-\beta H + s I)$ commute.

Note that the Taylor expansion of $e^A$ is 
\[e^A = \sum_{i = 0}^{\infty} \frac{A^k}{k!}. \]
Thus we have that the entries of $e^{-\beta H + s I}$ are non-negative. Write $e^{-s}=p>0$, and we have 
that all entries of $e^{-\beta H}=pe^{-\beta H+sI}$ are non-negative as well.
\end{proof}

\begin{definition}
(Stoquastic Projector) 
Given a projection matrix $P$ acting on $k$ qudits, if there exists a set of
  orthogonal $k$-qudit non-negative states $\{\ket{\Phi_j}\}_j$,
such that 
\[P=\sum_j \kb{\Phi_j},\]
we say $P$ is a stoquastic projector, and we refer to this unique decomposition 
as a sum of projections to non-negative states 
as the {\it non-negative decomposition} of $P$. 
\end{definition}

\begin{remark}
Notice that \Cref{lem:gspace-structure} implies that 
the projection on the groundspace of a stoquastic Hamiltonian is 
a stoquastic projector. 
\end{remark}

A crucial point in the paper is the fact that when applying 
a local stoquastic projector $P$ on some set of qudits $Q$, 
we do not introduce new strings in the reduced density matrix 
of the set of qudits outside of $Q$. Notice that since we 
are considering non-unitary operators, namely projections, then even though they are local, such projections {\it can} in fact have the effect of {\it removing} strings away from the density matrices of qubits which they do not touch;   
the point of this claim is that they cannot {\it add} new strings 
away from where they act.

\begin{claim}[Local action of projectors]\label{rem:locality-phi}
Let $P$ be a stoquastic projector on a subset $Q$ of $k\le n$ qudits.  
Consider the projection $\globalproj{P}$ on $n$ qudits derived from $P$ by 
$\globalproj{P}=\localproj{P}_Q\otimes I_{\overline{Q}}$. 
Then $\globalproj{P}$ is also a stoquastic projector, 
it can be written as the following non-negative decomposition:  
 \begin{equation}\label{eq:Pi}
 \globalproj{P} = \sum_{z \in \Sigma^{n - k},j} \kb{\phi_{j}}_Q \otimes \kb{z}_{\overline{Q}},\end{equation} 
 where 
  $\{\ket{\phi_{j}}\}_j$ is the non-negative decomposition of 
 $P$, and moreover, for any non-negative state $\ket{\psi}$, we have: 
 \[
 \{x_{\overline{Q}} : x \in supp(\globalproj{P}\ket{\psi})\} \subseteq \{x_{\overline{Q}} : x \in supp(\ket{\psi})\}.\]
\end{claim}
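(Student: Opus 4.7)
The plan is to reduce everything to the non-negative decomposition provided by \Cref{lem:gspace-structure}, and then to exploit the fact that non-negative amplitudes cannot cancel one another.

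First I would establish the decomposition of $\globalproj{P}$. Since $I_{\overline{Q}} = \sum_{z \in \Sigma^{n-k}} \kb{z}_{\overline{Q}}$, writing $P = \sum_j \kb{\phi_j}$ with the $\ket{\phi_j}$ non-negative and mutually orthogonal (as given) yields immediately
\[\globalproj{P} = P\otimes I_{\overline{Q}} = \sum_{j, z} \bigl(\ket{\phi_j}\otimes \ket{z}\bigr)\bigl(\bra{\phi_j}\otimes\bra{z}\bigr).\]
Each vector $\ket{\phi_j}\otimes\ket{z}$ is non-negative (both factors have non-negative entries in the computational basis, and tensor products of such vectors remain non-negative) and the collection is orthonormal, because distinct choices of either index $j$ or $z$ give orthogonal vectors. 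This establishes both \Cref{eq:Pi} and the fact that $\globalproj{P}$ is itself a stoquastic projector in the sense of the preceding definition.

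For the support statement, I would start by expanding the non-negative state $\ket{\psi}$ in the computational basis as $\ket{\psi} = \sum_{y, z} \alpha_{y, z} \ket{y}_Q \ket{z}_{\overline{Q}}$, where every amplitude $\alpha_{y, z} \geq 0$. Applying the decomposition above gives
\[\globalproj{P}\ket{\psi} = \sum_{j, z} \Bigl(\sum_{y} \alpha_{y, z}\,\braket{\phi_j}{y}\Bigr)\,\ket{\phi_j}_Q\ket{z}_{\overline{Q}}.\]
Now fix any $x = (x_Q, x_{\overline{Q}})$ in $supp(\globalproj{P}\ket{\psi})$, and compute
\[\braL{x}\globalproj{P}\ketL{\psi} = \sum_{j} \braket{x_Q}{\phi_j}\sum_y \alpha_{y, x_{\overline{Q}}}\braket{\phi_j}{y}.\]

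The key observation is that every factor on the right-hand side is non-negative, so this is a sum of non-negative summands; therefore if it is non-zero it must have a strictly positive term, meaning there exist $j^\star$ and $y^\star$ with $\braket{x_Q}{\phi_{j^\star}} > 0$, $\braket{\phi_{j^\star}}{y^\star} > 0$, and crucially $\alpha_{y^\star, x_{\overline{Q}}} > 0$. The last inequality says precisely that the string $(y^\star, x_{\overline{Q}})$ is in $supp(\ket{\psi})$, so $x_{\overline{Q}}$ appears as the $\overline{Q}$-restriction of some string in the original support, which is the claimed inclusion. The only subtle point is the non-cancellation argument, so I would emphasize that it rests entirely on the non-negativity of both $\ket{\psi}$ and the vectors $\ket{\phi_j}$; without stoquasticity, cross-terms could cancel and a new $x_{\overline{Q}}$ could appear.
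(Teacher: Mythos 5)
Your proof is correct and follows essentially the same route as the paper: expand $\ket{\psi}$ in the computational basis, apply the non-negative decomposition of $\globalproj{P}$, and observe that every $\overline{Q}$-restriction appearing in $\globalproj{P}\ket{\psi}$ already appears in $\ket{\psi}$. The first part (that $\globalproj{P}$ is itself a stoquastic projector with the displayed decomposition) is handled identically.

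One remark deserves a correction, though, because it is exactly the kind of thing a careful reader will flag. Your closing sentence says that ``without stoquasticity, cross-terms could cancel and a new $x_{\overline{Q}}$ could appear.'' This is false. The containment $\{x_{\overline{Q}} : x \in \mathrm{supp}(\globalproj{P}\ket{\psi})\} \subseteq \{x_{\overline{Q}} : x \in \mathrm{supp}(\ket{\psi})\}$ is a trivial consequence of $\globalproj{P}$ having the form $A \otimes I_{\overline{Q}}$ and has nothing to do with stoquasticity or non-negativity: write $\ket{\psi} = \sum_z \ket{\xi_z}_Q\ket{z}_{\overline{Q}}$ and observe $\globalproj{P}\ket{\psi} = \sum_z (P\ket{\xi_z})_Q\ket{z}_{\overline{Q}}$, so the $\overline{Q}$-restrictions in the image are a subset of the $\{z : \ket{\xi_z}\ne 0\}$. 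Cancellation can only \emph{remove} a $z$-sector, never create one, and this holds for any operator $A$, positive or not. Non-negativity is what guarantees that \emph{good} strings are retained (that is the content of the subsequent lemma on strings added by local stoquastic projectors); it is not what prevents new $\overline{Q}$-restrictions from appearing. So your argument via non-negative summands is valid but uses a stronger hypothesis than the containment actually requires, and the motivating remark misattributes the source of the containment.
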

\begin{proof}
  For the first part of the claim, the fact that $\tilde{P}_i=P_i\otimes I$ implies that 
  it can be written in the desired form, 
  and this implies that it is a stoquastic projector. 

For the moreover part, write $\ket{\psi}=\sum_{x\in \Sigma^n} \alpha_x \ket{x}$. We have: 
\begin{align*}
   \globalproj{P}\ket{\psi} =\sum_{x \in supp(\psi)}\sum_{j}\sum_{z\in \Sigma^{n-k}} \alpha_x \ket{\phi_{j}}\braket{\phi_{j}}{x_Q}
   \kb{z}_{\overline{Q}}\ket{x_{\overline{Q}}}_{\overline{Q}}
   =   \sum_{x \in supp(\psi)}\sum_{j} \alpha_x 
   \braket{\phi_{j}}{x_Q} \ket{\phi_{j}}_Q\ket{x_{\overline{Q}}}_{\overline{Q}}
   .
\end{align*}
Let $y \in  supp(\globalproj{P}\ket{\psi})$; so $\bra{y}\globalproj{P}\ket{\psi}\ne 0$. 
By the above expression there must be $x \in supp(\ket{\psi})$ such that $x_{\overline{Q}} = y_{\overline{Q}}$.  
\end{proof}

\begin{remark}[Notation of local and global projectors, and their non-negative decompositions]\label{rem:notation}
As can be seen in \Cref{rem:locality-phi}, we use the tilde to denote the global projector, i.e., $\globalproj{P} = \localproj{P} \otimes I_{\overline{Q}}$. We also extend (in a slightly different way) this notation to the rank-1 projectors and we denote $\ket{\globalproj{\phi}_{j,z}} := \ket{\phi_{j}}_Q\ket{z}_{\overline{Q}}$, for $z \in \Sigma^{n-k}$.
\end{remark}

\begin{remark}[Uniqueness of groundstate containing a string]\label{rem:unique-phi}
Consider a stoquastic projector $\localproj{P}$ and its global version, $\globalproj{P}
 = \sum_{j,z} \kb{\globalproj{\phi}_{j,z}} $(this can be done by
  \Cref{rem:locality-phi} and we use the notation of \Cref{rem:notation}).
Then for every $n$-dit string $x\in \Sigma^n$, there exists at most one
pair of values $j^*, z^*$ such that $\braket{\globalproj{\phi}_{j^*,z^*}}{x} > 0$. 
Clearly, a similar uniqueness statement holds for the local stoquastic projector $\localproj{P}$ and its non-negative decomposition, with respect to $x$ being a $k$-dit string. 
\end{remark}

We now define a particular type of strings, called {\it bad} strings, which play a crucial role in our result. Consider a string $x$ such that $\bra{x}\globalproj{P}_i\ket{x} = 0$; we notice that in this case $x$ cannot belong to any groundstate of $\globalproj{P}_i$. This leads to the following definition: 

\begin{definition}[Bad strings \cite{BravyiT09}]\label{def:bad-string}
  Given a stoquastic projector $\localproj{P}$ on a set $Q$ of $k$ out of $n$ qudits, we say that a string $x \in \Sigma^n$ is {\it bad} for $P$ (or $P$-bad) if $\bra{x}\globalproj{P}\ket{x} = 0$ for $\globalproj{P}=\localproj{P}\otimes I_{\overline{Q}}$. 
  Equivalently, $x$ is bad for $P$, if $\bra{x_Q}\localproj{P}\ket{x_Q}=0$. 
This means that $x_Q$ is not in the support of any of the states in the
  non-negative decomposition, as in \Cref{lem:gspace-structure}, of
  $\localproj{P}$. If a string is not bad for $\localproj{P}$, we say it is
  $\localproj{P}$-good. Given a local Hamiltonian $H=\frac{1}{m}\sum_{i} \tilde{H}_i$, and the
  corresponding stoquastic projectors $\localproj{P}_i$, We say that $x$ is {\it
  bad} for $H$, or $H$-bad, if there exists some $i \in [m]$ such that $x$ is
  bad for $\localproj{P}_i$. 
\end{definition}

One other property that we use is that starting with a non-negative state
$\ket{\psi}$, and applying $\globalproj{P}_i$, the projector onto the groundspace of some local term $H_i$, maintains all the $H_i$-good strings in $\ket{\psi}$. 

\begin{lemma}[Strings added by Local Stoquastic Projectors]
\label{lem:adding strings}
Let $\ket{\psi}$ be a non-negative $n$-qudit state.
Consider $P = \sum_{j} \kb{\phi_{j}}$ a stoquastic projector and its non-negative decomposition, acting on the subset $Q$ of $k$ qudits out 
of these $n$ qudits. Then all $P$-good strings of $\ket{\psi}$ are also in the
  support of  $\globalproj{P}\ket{\psi}$, where
  $\globalproj{P}=\localproj{P}\otimes I_{\overline{Q}}$. Moreover,  it follows that \[supp(\globalproj{P}\ket{\psi}) = \bigcup_{j,z : \braket{\globalproj{\phi}_{j,z}}{\psi} > 0} supp(\ket{\globalproj{\phi}_{j,z}}).\]
\end{lemma}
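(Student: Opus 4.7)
The plan is to use the non-negative decomposition of $\globalproj{P}$ from \Cref{rem:locality-phi} together with the fact that $\ket{\psi}$ is non-negative, so that no cancellation can occur when we expand $\globalproj{P}\ket{\psi}$ in the basis $\{\ket{\globalproj{\phi}_{j,z}}\}$. This immediately makes the support of $\globalproj{P}\ket{\psi}$ transparent and lets us pick up any good string from $\ket{\psi}$.

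First I would expand $\globalproj{P}\ket{\psi} = \sum_{j,z}\braket{\globalproj{\phi}_{j,z}}{\psi}\,\ket{\globalproj{\phi}_{j,z}}$. Since each $\ket{\globalproj{\phi}_{j,z}}$ is non-negative and $\ket{\psi}$ is non-negative, every coefficient $\braket{\globalproj{\phi}_{j,z}}{\psi}$ is non-negative, and every basis string $\ket{x}$ in the expansion $\ket{\globalproj{\phi}_{j,z}} = \sum_x \beta^{(j,z)}_x \ket{x}$ carries a non-negative amplitude. Moreover, by \Cref{rem:unique-phi}, each string $x$ lies in the support of at most one $\ket{\globalproj{\phi}_{j,z}}$, so distinct pairs $(j,z)$ contribute to disjoint sets of basis strings. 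This rules out any cancellation between terms and immediately yields the ``moreover'' equality: $x\in supp(\globalproj{P}\ket{\psi})$ iff there exists $(j,z)$ with $\braket{\globalproj{\phi}_{j,z}}{\psi}>0$ and $x\in supp(\ket{\globalproj{\phi}_{j,z}})$.

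For the first assertion, let $x$ be a $P$-good string in $supp(\ket{\psi})$. By \Cref{def:bad-string}, being good means $\bra{x_Q}\localproj{P}\ket{x_Q}>0$, so the non-negative decomposition of $\localproj{P}$ contains some $\ket{\phi_{j^*}}$ with $\braket{\phi_{j^*}}{x_Q}>0$. Set $z^*=x_{\overline{Q}}$; then $\braket{\globalproj{\phi}_{j^*,z^*}}{x}=\braket{\phi_{j^*}}{x_Q}>0$. Writing $\ket{\psi}=\sum_y \alpha_y\ket{y}$ with all $\alpha_y\ge 0$ and $\alpha_x>0$ (since $x\in supp(\ket{\psi})$), the inner product $\braket{\globalproj{\phi}_{j^*,z^*}}{\psi}=\sum_y \alpha_y \braket{\globalproj{\phi}_{j^*,z^*}}{y}$ is a sum of non-negative terms that includes the strictly positive contribution $\alpha_x\braket{\globalproj{\phi}_{j^*,z^*}}{x}$, hence is positive. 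Combined with the characterization of $supp(\globalproj{P}\ket{\psi})$ from the previous paragraph, this places $x$ in $supp(\ket{\globalproj{\phi}_{j^*,z^*}})\subseteq supp(\globalproj{P}\ket{\psi})$, completing the proof.

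I do not expect a major obstacle here; the whole argument is ``no cancellation + uniqueness.'' The only subtle point to be careful with is invoking the uniqueness remark correctly so that the equality in the ``moreover'' part really is an equality of sets rather than just an inclusion --- this is why I explicitly use \Cref{rem:unique-phi} to argue that the non-zero contributions from distinct $(j,z)$ do not overlap and therefore cannot cancel each other when summed.
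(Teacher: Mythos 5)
Your proof is correct and takes essentially the same approach as the paper: both rely on the non-negative decomposition of $\globalproj{P}$ from \Cref{rem:locality-phi}, the uniqueness in \Cref{rem:unique-phi}, and the observation that non-negativity of amplitudes prevents cancellation. The only difference is order of presentation --- you establish the ``moreover'' characterization of the support first and read the first assertion off from it, while the paper first discards the bad strings and argues the good strings survive, then deduces the support equality --- but the ingredients and reasoning are the same.
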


\begin{proof}
Let $S$ be the support of $\ket{\psi}$ and let also $\ket{\psi} = \sum_{x \in S} \alpha_x\ket{x}$.
Let also $G$ and $B$ be the sets of $P$-good and $P$-bad strings of $n$ dits, respectively. We have that
\begin{align}\label{eq:adding-strings}
   \globalproj{P}\ket{\psi} 
   =   \sum_{x \in S\cap G}\alpha_x \globalproj{P}\ket{x}   + \sum_{x \in S\cap B} \alpha_x \globalproj{P}\ket{x} 
   =   \sum_{x \in S\cap G}\alpha_x \globalproj{P}\ket{x}.  
   \end{align}
We now use \Cref{eq:Pi} from 
  \Cref{rem:locality-phi}, to apply the projector $\globalproj{P}$. We have 
\begin{equation}\label{eq:action}
\globalproj{P}\ket{\psi}   =   \sum_{x \in S\cap G, j, z\in \Sigma^{n-k}} \alpha_x 
\left(\kb{\phi_{j}}_Q\otimes \kb{z}_{\overline{Q}}\right) \ket{x_Q}_{Q}\ket{x_{\overline{Q}}}_{\overline{Q}} =
 \sum_{x \in S\cap G,j} \alpha_x \braket{x_Q}{\phi_{j}} 
\ket{\phi_{j}}_Q  \ket{x_{\overline{Q}}}_{\overline{Q}}
\end{equation}

If $x$ is a $P$-good string,  then there exists a (unique) $\ket{\phi_{j}}$ such that $\braket{\phi_{j}}{x_Q} > 0$. Using 
also that $x\in S$, we see that the amplitude of $x$ in the above expression $\globalproj{P}\ket{\psi}$ is non-zero.

For the moreover part,  notice that \Cref{eq:action} can be written as
\begin{align}\label{eq:action-global}
\globalproj{P}\ket{\psi}   = 
 \sum_{x \in S\cap G,j,z} \alpha_x \braket{x}{\globalproj{\phi}_{j,z}} \ket{\globalproj{\phi}_{j,z}} 
\end{align}
and the statement holds directly.

\end{proof}

\begin{corollary}[Composition of stoquastic projectors]\label{cor:composition}
Let $\globalproj{P}_1$
and $\globalproj{P}_2$
be two $n$-qudit stoquastic projectors (which may or may not be global versions of local projections) and let $\ket{\psi}$ be a non-negative state. Then $\reallywidehat{\globalproj{P}_1 \globalproj{P}_2 \ket{\psi}} = \widehat{\globalproj{P}_1 \widehat{\globalproj{P}_2 \ket{\psi}}} = 
\ket{supp(\globalproj{P}_1\globalproj{P}_2\ket{\psi})}$. Moreover, if $\globalproj{P}_1$ and $\globalproj{P}_2$ commute, then the above states are also equal to 
$\reallywidehat{\globalproj{P}_2 \globalproj{P}_1 \ket{\psi}} = \widehat{\globalproj{P}_2 \widehat{\globalproj{P}_1 \ket{\psi}}}$. 
\end{corollary}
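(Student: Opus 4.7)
The plan is to reduce everything to a single observation: stoquastic projectors are non-negative matrices (each $\ketbra{\phi_j}{\phi_j}$ in the non-negative decomposition has entries $\phi_j(x)\phi_j(y)\ge 0$), and a non-negative matrix applied to a non-negative vector cannot produce cancellations. Hence the support of the output is determined by the support of the input alone, not by the specific amplitudes.

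More concretely, I would first record two easy facts. \emph{Non-negativity is preserved:} if $\ket{\phi}$ is non-negative and $\tilde{P}$ is stoquastic, then $\bra{y}\tilde{P}\ket{\phi} = \sum_x \tilde{P}_{yx}\phi_x$ is a sum of products of non-negative reals, hence $\tilde{P}\ket{\phi}$ is itself non-negative. \emph{Support depends only on support:} for two non-negative states $\ket{\phi},\ket{\phi'}$ with $supp(\ket{\phi})=supp(\ket{\phi'})$, the expression $\bra{y}\tilde{P}\ket{\phi}=\sum_x \tilde{P}_{yx}\phi_x$ vanishes iff $\tilde{P}_{yx}=0$ for every $x\in supp(\ket{\phi})$, a condition that depends only on the support. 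Therefore $supp(\tilde{P}\ket{\phi})=supp(\tilde{P}\ket{\phi'})$.

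Now apply this with $\ket{\phi}=\tilde{P}_2\ket{\psi}$ and $\ket{\phi'}=\reallywidehat{\tilde{P}_2\ket{\psi}}$. Both are non-negative (the first by the preservation fact, the second by definition of the hat), and they share the same support by construction. The key lemma therefore gives
\[
supp\!\left(\tilde{P}_1\tilde{P}_2\ket{\psi}\right) \;=\; supp\!\left(\tilde{P}_1\,\reallywidehat{\tilde{P}_2\ket{\psi}}\right).
\]
Passing to subset-states on both sides (which is a well-defined operation because both vectors are non-negative, hence their hats exist and equal $\ket{supp(\cdot)}$) yields
\[
\reallywidehat{\tilde{P}_1\tilde{P}_2\ket{\psi}} \;=\; \widehat{\tilde{P}_1\,\reallywidehat{\tilde{P}_2\ket{\psi}}} \;=\; \ket{supp(\tilde{P}_1\tilde{P}_2\ket{\psi})},
\]
which is exactly the first part of the statement. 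For the \emph{moreover} part, commutativity of $\tilde{P}_1$ and $\tilde{P}_2$ gives $\tilde{P}_1\tilde{P}_2\ket{\psi} = \tilde{P}_2\tilde{P}_1\ket{\psi}$ as vectors, so in particular their supports agree; rerunning the argument with the roles of $\tilde{P}_1$ and $\tilde{P}_2$ exchanged handles the remaining two identities.

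There is no real obstacle here; the only thing to be careful about is to invoke non-negativity twice (once to ensure $\tilde{P}_2\ket{\psi}$ is non-negative so that $\reallywidehat{\cdot}$ is defined, and once to rule out cancellations when $\tilde{P}_1$ acts). Both follow immediately from the non-negative decomposition guaranteed by \Cref{lem:gspace-structure} and the definition of a stoquastic projector.
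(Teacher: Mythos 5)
Your proof is correct, and it follows the same underlying principle as the paper's (no cancellation when a non-negative operator acts on a non-negative vector), but it gets there by a slightly more elementary route. The paper's proof fixes a string $y$ in the support of $\globalproj{P}\ket{\psi}$, invokes the uniqueness of the rank-one piece $\ket{\globalproj{\phi}_{j,z}}$ of the non-negative decomposition that contains $y$ (the paper's ``uniqueness'' remark), and observes that $\bra{y}\globalproj{P}\ket{\psi}=\braket{y}{\globalproj{\phi}_{j,z}}\braket{\globalproj{\phi}_{j,z}}{\psi}>0$ is a statement about $\braket{\globalproj{\phi}_{j,z}}{\psi}$, which depends only on $supp(\ket{\psi})$. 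You instead never decompose $\globalproj{P}$ into its rank-one pieces: you note directly that a stoquastic projector is an entrywise non-negative matrix, so $\bra{y}\globalproj{P}\ket{\phi}=\sum_{x\in supp(\ket{\phi})}\globalproj{P}_{yx}\phi_x$ is a sum of non-negative terms, which vanishes iff $\globalproj{P}_{yx}=0$ for every $x\in supp(\ket{\phi})$ --- a condition that manifestly depends only on $supp(\ket{\phi})$. Your version buys you two small things: it does not need the uniqueness remark, and it applies verbatim to any entrywise non-negative operator, not just stoquastic projectors. The paper's version keeps the proof phrased in terms of the decomposition $\globalproj{P}=\sum\kb{\globalproj{\phi}_{j,z}}$, which is the form used throughout the rest of the paper, so the two are best seen as two renderings of the same idea. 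Everything else (non-negativity of $\globalproj{P}_2\ket{\psi}$ so that the hat is defined, passing to subset-states, and using commutativity for the ``moreover'' part) matches the paper's reasoning.
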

\begin{proof} 
  
  First, we claim that for two non-negative states $\ket{\psi}$ and
  $\ket{\psi'}$, if $supp(\ket{\psi})=supp(\ket{\psi'})$, then 
$supp(\globalproj{P}\ket{\psi})=supp(\globalproj{P}\ket{\psi'})$, for any stoquastic projector 
$\globalproj{P}$.  This can be argued as follows. 
Let $y$ be a string in the support of 
  $\globalproj{P}\ket{\psi}$, i.e. $\bra{y}\globalproj{P}\ket{\psi} \ne 0$, and 
 $\ket{\globalproj{\phi}_{j,z}}$ be the unique state in the 
non-negative decomposition of $\globalproj{P}$ 
that contains $y$, as stated in \Cref{rem:unique-phi}.
Since $\ket{\psi}$ and $\ket{\globalproj{\phi}_{j,z}}$ are both non-negative states, 
we have  that $\bra{y}\globalproj{P}\ket{\psi} = \braket{y}{\globalproj{\phi}_{j,z}}
\braket{\globalproj{\phi}_{j,z}}{\psi} > 0$ and in particular
 $\braket{\globalproj{\phi}_{j,z}}{\psi} > 0$. 
  Notice that since $\ket{\psi'}$ is also a non-negative state and its support
  is equal to the support of $\ket{\psi}$, we also have that 
 $\braket{\globalproj{\phi}_{j,z}}{\psi'} > 0$ and thus
$\bra{y}\globalproj{P}\ket{\psi'} = 
  \braket{y}{\globalproj{\phi}_{j,z}}\braket{\globalproj{\phi}_{j,z}}{\psi'} >
  0$, i.e. $y$ is in the support of $\globalproj{P}\ket{\psi'}$.
  The converse follows from a similar argument.
  
In particular, by definition, $supp(\globalproj{P}_2\ket{\psi}) = supp(\reallywidehat{\globalproj{P}_2\ket{\psi}})$; and thus applying $\globalproj{P}_1$ on both states, we get $supp(\globalproj{P}_1\globalproj{P}_2\ket{\psi}) = supp(\globalproj{P}_1\widehat{\globalproj{P}_2\ket{\psi}})$. 
By definition, we also have 
$supp(\globalproj{P}_1\globalproj{P}_2\ket{\psi})=
supp(\reallywidehat{\globalproj{P}_1\globalproj{P}_2\ket{\psi}})$, which proves the first part 
of the Corollary. 

For the moreover part, we can apply the previous argument to show 
$\reallywidehat{\globalproj{P}_2 \globalproj{P}_1 \ket{\psi}} = \widehat{\globalproj{P}_2 \widehat{\globalproj{P}_1 \ket{\psi}}}
=supp(\reallywidehat{\globalproj{P}_2\globalproj{P}_1\ket{\psi}})$. 
If $\globalproj{P}_1$ and $\globalproj{P}_2$ commute, we have 
$supp(\reallywidehat{\globalproj{P}_1\globalproj{P}_2\ket{\psi}})=supp(\widehat{\globalproj{P}_2\globalproj{P}_1\ket{\psi}})$, 
and the result follows.  
\end{proof}

\subsection{Uniform Stoquastic Hamiltonians}

In this work, we focus on a restricted class of stoquastic Hamiltonian which we
call uniform stoquastic Hamiltonian.

\begin{definition}[uniform stoquastic Local Hamiltonian]
A stoquastic Local Hamiltonian $H = \frac{1}{m} \sum_{i = 1}^m \tilde{H}_i$ is called uniform if the states of the unique non-negative decompositions of each  local stoquastic projector ($\localproj{P}_i$) are subset-states. 
\end{definition}

Following \Cref{rem:locality-phi,rem:notation}, for uniform stoquastic Local Hamiltonians, the groundspace projector of $\tilde{H}_i$ is $\globalproj{P}_i = (\localproj{P}_i)_{Q} \otimes I_{\overline{Q}} = \sum_{j,x} \kb{\localproj{T}_{i,j}} \otimes \kb{x} $, with $\localproj{T}_{i,j} \subseteq \Sigma^k$ and $\localproj{T}_{i,j} \cap \localproj{T}_{i,j'} = \emptyset$ for $j \ne j'$. We also denote $\ket{\globalproj{T}_{i,j,x}} := \ket{\localproj{T}_{i,j}}\ket{x}$ (and $\globalproj{T}_{i,j,x}$ as the corresponding set of $n$-bit strings).

We provide now a lemma which we do not strictly use in the proof, regarding stoquastic frustration-free Local Hamiltonians; that we can 
always assume that the groundstate of the entire 
Hamiltonian is a subset-state.
Though the claim itself is not used, 
it is helpful to conceptually 
hold it in mind, when reading the proof.  

\begin{lemma}[The structure of groundstates of uniform stoquastic Hamiltonian]
Let $H$ be a uniform stoquastic  frustration-free Local Hamiltonian. Let $H_i$ be a local term of $H$. 
Then if $\ket{\psi}$ is a groundstate of $H$, it can be written in the form  
 $\ket{\psi} = \sum_{j,z} \alpha_{i,j,z} \ket{\globalproj{T}_{i,j,z}}$, for some
  choice of coefficients $\alpha_{i,j,z} \in \complex$.  
Moreover, the subset-state $\ket{S}$, for $S= \bigcup_{j,z : \alpha_{j,z} \ne 0} \globalproj{T}_{i,j,z}$, is also a groundstate of $H$. 
\end{lemma}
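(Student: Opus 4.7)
My plan is to split the claim into two independent observations: the first-part decomposition follows directly from frustration-freeness together with the uniform structure of the local groundspaces, while the moreover part reduces to showing that the support of any groundstate is ``saturated'' under the local equivalence classes of every term.

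For the first part I would argue as follows. Since $H = \frac{1}{m}\sum_{i} \tilde{H}_i$ is frustration-free and each local term is PSD, any groundstate $\ket{\psi}$ lies in $\ker \tilde{H}_i$ for every $i$, equivalently $\globalproj{P}_i \ket{\psi} = \ket{\psi}$. By uniformity together with \Cref{rem:locality-phi}, $\globalproj{P}_i = \sum_{j,z} \kb{\globalproj{T}_{i,j,z}}$ with the subset-states $\ket{\globalproj{T}_{i,j,z}}$ mutually orthogonal (their supports $\globalproj{T}_{i,j,z}$ are disjoint). Hence $\ket{\psi}$ automatically lies in the span of these subset-states with coefficients $\alpha_{i,j,z} := \braket{\globalproj{T}_{i,j,z}}{\psi}$. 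Disjointness of the supports rules out any cancellation, so $S := \mathrm{supp}(\ket{\psi}) = \bigcup_{(j,z):\alpha_{i,j,z}\neq 0} \globalproj{T}_{i,j,z}$.

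The key step for the moreover part is the following saturation property: for every term index $i'$ and every $x \in S$, the unique subset $\globalproj{T}_{i',j^*,z^*}$ containing $x$ (unique by \Cref{rem:unique-phi}) is entirely contained in $S$. To prove it, I would apply the first part with $i'$ in place of $i$ to write $\ket{\psi} = \sum_{j',z'} \beta_{j',z'} \ket{\globalproj{T}_{i',j',z'}}$. Then $\braket{x}{\psi} = \beta_{j^*,z^*}/\sqrt{|\globalproj{T}_{i',j^*,z^*}|}$, and since $x \in S$ forces this to be nonzero, $\beta_{j^*,z^*} \neq 0$; consequently every $y \in \globalproj{T}_{i',j^*,z^*}$ satisfies $\braket{y}{\psi} \neq 0$ as well, so $\globalproj{T}_{i',j^*,z^*} \subseteq S$.

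Finally, I would combine the saturation property with the partition structure of the $\globalproj{T}_{i',j',z'}$ to compute $\globalproj{P}_{i'}\ket{S}$ directly: only those $(j',z')$ with $\globalproj{T}_{i',j',z'} \subseteq S$ contribute, and plugging in $\ket{\globalproj{T}_{i',j',z'}} = |\globalproj{T}_{i',j',z'}|^{-1/2}\sum_{x \in \globalproj{T}_{i',j',z'}}\ket{x}$, the sum telescopes to $\frac{1}{\sqrt{|S|}}\sum_{x \in S}\ket{x} = \ket{S}$. Since this holds for every $i'$, $\ket{S} \in \ker \tilde{H}_{i'}$ for every $i'$, and hence $\ket{S}$ is a groundstate of $H$. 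The only delicate point is the saturation step; it is precisely where uniformity enters essentially, since without uniformity one could have $x \in \mathrm{supp}(\ket{\psi})$ while some other $y$ in the same local ``orbit'' has a different (possibly zero) amplitude, and the membership propagation would fail.
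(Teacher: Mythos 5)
Your proposal is correct and follows essentially the same route as the paper: decompose $\ket{\psi}$ via a fixed term's non-negative subset-state projector, observe that for every index $i'$ the support $S$ is a disjoint union of full blocks $\globalproj{T}_{i',j,z}$, and conclude that $\ket{S}$ is a positive combination of those subset-states. The only difference is presentational — the paper asserts $\bigcup_{j,z : \alpha_{i',j,z}\ne 0}\globalproj{T}_{i',j,z} = S$ without elaboration, while you spell out the ``saturation'' argument justifying it, which is a welcome clarification rather than a new idea.
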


\begin{proof}
The first claim just follows from the fact that 
$H$ is frustration free so any groundstate must be spanned by groundstates of a fixed term $H_i$, namely \begin{align}\label{eq:structure-gstate-span}
\ket{\psi} = \sum_{j,z} \alpha_{i,j,z} \ket{\globalproj{T}_{i,j,z}}.
\end{align} 

We show now that $\ket{S}$ is also a ground-state of $H$.
Let us consider some term $H_{i'}$ and the decomposition of $\ket{\psi}$ from \Cref{eq:structure-gstate-span} in respect to its non-negative decomposition. It follows  that
  $\bigcup_{j,z :
  \alpha_{i',j,z} \ne 0} \globalproj{T}_{i',j,z} = S$.  
 This implies that
\begin{align*}
   \ket{S} = \sum_{j,z : \alpha_{i',j,z} \ne 0} 
   \frac{\sqrt{\globalproj{T}_{i',j,z}}}{\sqrt{S}} \ket{\globalproj{T}_{i',j,z}},
 \end{align*}
 and therefore $\ket{S}$ is in the groundspace of $H_{i'}$, for any $i' \in [m]$. 
\end{proof}

We formally define now the frustration-free Uniform Stoquastic $k$-Local Hamiltonian problem.

\defproblem{uniform stoquastic frustration-free  $k$-Local Hamiltonian problem}{def:local-hamiltonian}{
  The {\em uniform stoquastic  frustration-free $k$-Local Hamiltonian} problem,
  where $k \in \mathbb{N}^*$ is called the locality and $\eps :  \mathbb{N} \rightarrow [0,1]$ is a non-decreasing function,
  is the following promise problem. Let $n$ be the number of qudits of a quantum system.
  The input is a set of $m(n)$ uniform stoquastic Hamiltonians $H_1, \ldots, H_{m(n)}$
  where $m$ is a polynomial, $\forall i \in m(n) : 0 \leq H_i \leq I$
  and each $H_i$ acts on $k$ qudits out of the $n$ qudit system. We also assume that there are at most $d$ terms that act non-trivially on each qudit, for some constant $d$, and that $m\ge n$.
  For $H = \frac{1}{m(n)} \sum_{i = 1}^{m(n)} H_i$ , one of the following two conditions hold.
}
{There exists a $n$-qudit quantum state
       $\ket{\psi}$ such that
      $\bra{\psi} H \ket{\psi}
        =0$}
{For all $n$-qudit quantum states $\ket{\psi}$
      it holds that
      $\bra{\psi} H \ket{\psi}
        \geq \eps(n) .$
        }

\begin{remark}\label{rem:term-projector}
At some points in this work, we assume that $H_i$ is a projector. This holds without loss of generality, since we could replace an arbitrary Hamiltonian $H_i$ by $H'_i = I - P_i$, where $P_i$ is the projector onto the space of $H_i$, and this could just change frustration by a constant factor.
\end{remark}

\subsection{MA-completeness}
Bravyi and Terhal~\cite{BravyiT09} 
showed that there exists some polynomial $p(n) = O(n^2)$ such that the frustration-free uniform stoquastic $6$-Local Hamiltonian problem with $\eps(n) = \frac{1}{p(n)}$ is MA-hard.
They also  proved that for every polynomial $p'$ and every constant $k$,  the frustration-free uniform stoquastic $k$-Local Hamiltonian problem with $\eps(n) = \frac{1}{p'(n)}$ is in MA. 

Let us start with the direction which is of less technical interest to us, and thus we will not need to go into details. 
The MA-hardness is proved by analyzing the quantum Cook-Levin theorem~\cite{KitaevSV02,AharonovN02}  when
considering an MA verifier. A verification circuit for MA can be described as a quantum
circuit consisting only of the (classical) gates from the universal (classical) gateset 
Toffoli and NOT, operating on
input qubits in the state $\ket{0}$ (the NOT 
gates can then fix them to the right input)
and ancillas which are either in the state $\ket{0}$ used as workspace, or in the state $\ket{+}$, used as random bits. At the end of the circuit, the first qubit is measured in the computational basis and the input is accepted iff the output is $1$.  It is not difficult to check that for such gateset and ancillas, the stoquastic Local Hamiltonian resulting from the circuit-to-Hamiltonian construction of the quantum Cook-Levin theorem (which forces both the correct propagation as well as the correct input state, as well as the output qubit accepting), is a uniform stoquastic Hamiltonian; in particular all the entries are in $\{0, \pm 1, -\frac{1}{2}\}$. We can also assume that each qubit is used in at most $d$ gates, for some $d \geq 3$. This is true because all the computation done by the verifier is classical and therefore the information can be copied to fresh ancilla bits (initialized on $\ket{0}$) with a CNOT operation. Notice then that each qubit takes place on at most $3$ steps: as the target of the CNOT, in some actual computation, and as the source of the next CNOT.

We now explain the other direction, namely Bravyi and Terhal's approach for showing that the stoquastic Hamiltonian problem is in MA. We actually show a simplified version of their result, since we are only
interested in {\it uniform} stoquastic Hamiltonian. Notice that by our above description, this problem is sufficient to achieve MA-hardness. 

Following  \cite{BravyiT09} We now define the graph on which the random walk will take place; this graph is based on a given uniform stoquastic Hamiltonian. 

\begin{definition}[Graph from uniform stoquastic Hamiltonian]
  Let $H = \frac{1}{m}\sum_{i} H_i$  be a uniform stoquastic Hamiltonian on $n$ qudits of dimension $|\Sigma|$. We define the undirected graph $G(H) = (|\Sigma|^n, E)$ where $(x, y) \in E$ iff
there exists a local term $H_i$ with corresponding groundstate projector $\localproj{P}_i$
such that 
\begin{align}
\label{eq:neighbor}
\bra{x}\localproj{P}_i\ket{y} > 0.
\end{align}
\end{definition}

From \Cref{rem:unique-phi,rem:locality-phi}, we have that for a fixed $i$, the neighbor strings form an equivalence class and in each class the strings differ only in the positions where $H_i$ acts non-trivially.
We also remark that given some string $x$, one can compute in polynomial time if $x$ is bad for $H$, by just inspecting the groundspace  of each local term.

The random walk starts from some initial string $x_0$ sent by the prover. If $x_0$ is bad for $H$, then the algorithm rejects. 
Otherwise, a term $H_i$ is picked uniformly at random and a string $x_1$ is
picked uniformly at random from $\ket{\globalproj{T}_{i,j,z}}$, which is the
unique rank-one subset-state from $\globalproj{P}_i$ such that
$x_0 \in \globalproj{T}_{i,j,z}$ (see \Cref{rem:unique-phi}).  The random walk proceeds by repeating this process with $x_1$. We describe
the random walk proposed by BT (simplified for the uniform case)
in~\Cref{algo:bt}.

\begin{figure}[H]
\rule[1ex]{\textwidth}{0.5pt}
\vspace{-20pt}
\begin{enumerate}
 \item Let $x_0$ be the initial string.
 \item Repeat for $T$ steps
 \begin{enumerate} 
   \item If $x_t$ is bad for $H$, reject
   \item Pick $i \in [m]$ uniformly at random
   \item Pick $x_{t+1}$ uniformly at random from the strings in the unique $\globalproj{T}_{i,j,z}$ that contains $x_t$

 \end{enumerate}
 \item Accept
\end{enumerate}
\rule[2ex]{\textwidth}{0.5pt}\vspace{-.5cm}
\caption{BT Random Walk}
\label{algo:bt}
\end{figure}

We state now the lemmas proved in \cite{BravyiT09}.

\begin{lemma}[Completeness, adapted from Section $6.1$ of \cite{BravyiT09}]\label{lem:bt-completeness}
If $H$ is frustration-free, then there exists some string $x$ such that there are no bad-strings in the connected component of $x$.
\end{lemma}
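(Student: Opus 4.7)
The plan is to produce the desired string $x$ by taking any string in the support of a non-negative groundstate of $H$, and then to argue that the entire connected component of $x$ in $G(H)$ lies inside that support (hence consists of good strings).

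First I would invoke \Cref{lem:gspace-structure} to pick a non-negative groundstate $\ket{\psi}$ of $H$; such a state exists because $H$ is frustration-free, so its groundspace is non-empty, and the groundspace projector of the stoquastic Hamiltonian $H$ itself admits a non-negative rank-one decomposition. Let $x$ be any string in $\mathrm{supp}(\ket{\psi})$. The claim to prove is that every $y$ reachable from $x$ in $G(H)$ satisfies $y \in \mathrm{supp}(\ket{\psi})$, and that every string in $\mathrm{supp}(\ket{\psi})$ is $H$-good.

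For the second part: fix any local term $\tilde{H}_i$ with groundspace projector $\tilde{P}_i = \sum_{j,z} \kb{\tilde\phi_{i,j,z}}$ given by its non-negative decomposition (\Cref{rem:locality-phi,rem:notation}). Frustration-freeness gives $\tilde{P}_i\ket{\psi}=\ket{\psi}$, hence
\[
\ket{\psi} \;=\; \sum_{j,z} \braket{\tilde\phi_{i,j,z}}{\psi}\, \ket{\tilde\phi_{i,j,z}},
\]
and since $\ket{\psi}$ and each $\ket{\tilde\phi_{i,j,z}}$ are non-negative, every coefficient above is non-negative. Therefore any $y\in\mathrm{supp}(\ket{\psi})$ must lie in the support of at least one $\ket{\tilde\phi_{i,j,z}}$, which by definition means $\bra{y}\tilde{P}_i\ket{y}>0$, i.e.\ $y$ is $\tilde P_i$-good. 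Since $i$ was arbitrary, $y$ is $H$-good.

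For the first part (the inductive step along edges), suppose $y\in\mathrm{supp}(\ket{\psi})$ and $(y,y')\in E$, witnessed by some $i$ with $\bra{y}\tilde{P}_i\ket{y'}>0$. By \Cref{rem:unique-phi} there is a unique $(j^\ast,z^\ast)$ such that both $y$ and $y'$ belong to $\mathrm{supp}(\ket{\tilde\phi_{i,j^\ast,z^\ast}})$. From the non-negative expansion of $\ket{\psi}$ in the $\tilde P_i$-basis, since $y\in\mathrm{supp}(\ket{\psi})$ we must have $\braket{\tilde\phi_{i,j^\ast,z^\ast}}{\psi}>0$; but then $y'$, which appears with positive amplitude in $\ket{\tilde\phi_{i,j^\ast,z^\ast}}$, also appears with positive amplitude in $\ket{\psi}$, so $y'\in\mathrm{supp}(\ket{\psi})$. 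Iterating along any path in $G(H)$ shows the connected component of $x$ is contained in $\mathrm{supp}(\ket{\psi})$, and by the previous paragraph every such string is $H$-good, completing the proof. No step here is a real obstacle; the only subtle ingredient is the uniqueness property from \Cref{rem:unique-phi}, which is what rules out the possibility that an edge could cross between different $\ket{\tilde\phi_{i,j,z}}$ blocks and thereby escape the support of $\ket{\psi}$.
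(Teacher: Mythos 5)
Your proposal is correct and takes essentially the same approach as the paper's brief sketch: start from a string in the support of a non-negative groundstate of $H$, use $\tilde P_i\ket{\psi}=\ket{\psi}$ together with the non-negative decomposition of each $\tilde P_i$ to show every supported string is $H$-good, and use the uniqueness property of \Cref{rem:unique-phi} to show that edges of $G(H)$ never leave $\mathrm{supp}(\ket{\psi})$. The paper packages the same facts as ``the uniform superposition over the connected component is itself a groundstate,'' but the underlying ingredients and logical content are identical to yours.
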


The proof goes by showing that if $H$ is frustration-free, then for any string $x$ in some groundstate of $H$, the uniform superposition of the connected component of $x$ is a groundstate of $H$.
In this case, since all strings in the connected component of $x$ are good (this is by definition of the 
connected component), the verifier will accept. 

\begin{lemma}[Soundness, adapted from Section $6.2$ of \cite{BravyiT09}]
For every polynomial $p$, there exists some polynomial $q$ such that if $H$ is at least $\frac{1}{p(n)}$-frustrated, then for every string $x$, for $T=q(n)$, the random walk from \Cref{algo:bt} rejects with constant probability.
\end{lemma}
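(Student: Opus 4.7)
My plan is to identify the random walk of \Cref{algo:bt} with powers of the operator $M := \frac{1}{m}\sum_i \globalproj{P}_i$ and to derive the rejection probability from a spectral bound on $M$ that follows directly from the frustration assumption. By \Cref{rem:term-projector} we may assume each local term is a projector, i.e.\ $H_i = I - \globalproj{P}_i$, so $H = I - M$. The frustration hypothesis then gives that all eigenvalues of $H$ are at least $\frac{1}{p(n)}$, hence $M$ is PSD with largest eigenvalue at most $1-\frac{1}{p(n)}$.

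First I would check that for good strings $x,y$, the classical transition probability $W(y\mid x)$ of one step of the walk coincides with the matrix entry $\bra{y}M\ket{x}$. Indeed, by \Cref{rem:unique-phi} there is a unique $\globalproj{T}_{i,j,z}$ containing $x$, and $\globalproj{P}_i\ket{x} = \frac{1}{|\globalproj{T}_{i,j,z}|}\sum_{y \in \globalproj{T}_{i,j,z}} \ket{y}$, exactly matching the distribution of the walk's next step after term $i$ is chosen. Letting $G$ be the set of $H$-good strings and $W_G$ be the principal submatrix of $M$ indexed by $G$, the probability that the walk started at a good $x_0$ visits only good strings through step $T$ equals
\[\mathbf{1}_G^{\top}\, W_G^{T}\, e_{x_0}.\]

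Since $W_G$ is a principal submatrix of the PSD operator $M$, its largest eigenvalue is at most that of $M$, i.e.\ at most $1-\frac{1}{p(n)}$ by the variational characterization. Hence $\norm{W_G^T e_{x_0}} \le (1 - \tfrac{1}{p(n)})^T$, and by Cauchy--Schwarz the probability of never hitting a bad string in $T$ steps is at most
\[\norm{\mathbf{1}_G}\cdot \norm{W_G^T e_{x_0}} \;\le\; |\Sigma|^{n/2}\Bigl(1-\tfrac{1}{p(n)}\Bigr)^{T}.\]
Using $\ln\frac{1}{1-x}\ge x$, choosing $q(n) := \lceil p(n)\bigl((n/2)\ln|\Sigma| + 2\bigr)\rceil$ makes this bound at most $e^{-2} < 1/4$, so the verifier rejects with constant probability. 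If $x_0$ is itself $H$-bad the algorithm rejects immediately at step $0$, which only helps.

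The main point to get right is this identification of the classical walk with the Hermitian operator $M$, together with the observation that restricting to good strings is exactly taking a principal submatrix, so that the spectral bound from frustration transfers without loss. The uniformity assumption is essential here: it forces the non-negative decomposition of \Cref{lem:gspace-structure} to consist of subset-states, which is what yields the uniform-over-$\globalproj{T}_{i,j,z}$ form of the one-step transition that matches $M$ on the nose; in the non-uniform case one would instead have to work with a weighted walk.
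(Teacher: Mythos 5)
Your proof is correct and, as far as one can tell from the paper (which only gives a one-sentence sketch and delegates the details to Bravyi--Terhal), it matches the intended argument: the paper's stated intuition that "the expansion on any set of good strings is upper bounded by $1-\frac{1}{p(n)}$, otherwise $H$ would not be $\frac{1}{p(n)}$-frustrated" is exactly your observation that $\bra{\psi}M\ket{\psi}\le 1-\frac{1}{p(n)}$ for all $\ket{\psi}$, applied to the principal submatrix $W_G$ of $M$ indexed by good strings. Your identification of the one-step transition kernel with $\bra{y}M\ket{x}$ for $H$-good $x$ is the right way to make this precise in the uniform case, the interlacing/variational bound $\lambda_{\max}(W_G)\le \lambda_{\max}(M)\le 1-\frac{1}{p(n)}$ is the crux, and the Cauchy--Schwarz step with the trivial $|\Sigma|^{n/2}$ bound on $\norm{\mathbf{1}_G}$ suffices because only a polynomial $q$ is demanded. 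Two cosmetic remarks: (i) \Cref{rem:term-projector} only says the replacement $H_i\mapsto I-P_i$ changes frustration by a constant factor, so the largest eigenvalue of $M$ is only bounded by $1-\frac{c}{p(n)}$ for some constant $c$; this is harmless since $q$ can absorb $c$, but worth saying. (ii) The algorithm checks $x_0,\dots,x_{T-1}$ (not $x_T$), so the acceptance probability is $\mathbf{1}_G^{\top}W_G^{T-1}e_{x_0}$ rather than $\mathbf{1}_G^{\top}W_G^{T}e_{x_0}$; again an off-by-one that does not affect the conclusion.
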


The intuition of the proof is that since the Hamiltonian is frustrated, one can upper bound the expansion on any set of good-strings by $1 - \frac{1}{p(n)}$, otherwise the Hamiltonian would not be $\frac{1}{p(n)}$-frustrated. In this case, there exists some polynomial $q$ such that a random walk with $q(n)$ steps escape of any set of good strings with high probability.

\section{Uniform Gapped stoquastic Hamiltonians are in NP}
\label{sec:main}
 
Our main technical result in this work is showing that if a stoquastic uniform
Hamiltonian is $\eps$-frustrated for some constant $\eps$, then every string $x$
is constantly-close to a bad string.

\newcommand{\bodyconstantsteppath}{If the stoquastic uniform $k$-Local
Hamiltonian $H$ is $\eps$-frustrated, then for every string $x$, there is a bad
string $y$ such that the distance between $x$ and $y$ in $G(H)$ is at most  $\sizepath{}$.}
\begin{lemma}[Short path to a bad string]
\label{lem:constant-step-path}
\bodyconstantsteppath
\end{lemma}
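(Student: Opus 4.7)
The plan is to build, iteratively, a sequence of non-negative subset-states whose supports grow by a fixed multiplicative factor per step, and then use a lightcone argument to extract the claimed short path in $G(H)$. If $x$ is itself $H$-bad there is nothing to prove, so assume $x$ is good and set $S_0 := \{x\}$. Inductively, given a subset-state $\ket{S_j}$ of $H$-good strings, I would first establish (as a separate lemma, cited later as \Cref{lem:one-term}) the following one-term expansion statement: if $\globalproj{P}_i$ is the groundspace projector of a term whose energy against a non-negative state $\ket{\psi}$ is at least $\delta$, then $\reallywidehat{\globalproj{P}_i\ket{\psi}}$ has support at least $(1+\delta)$ times that of $\ket{\psi}$. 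Granting this, the goal is to build a layer $L_{j+1}$ of pairwise non-overlapping projectors whose application expands the support by a factor exponential in $n$.

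This layer will be built by an adaptive greedy procedure, which is the main technical obstacle and what sidesteps the correlation pathology of \Cref{ex:parallel}. Start with the candidate set of all $\tilde{H}_i$ that are $\eps/2$-frustrated against $\ket{S_j}$, which by \Cref{lem:lb-frustrated-terms} has size at least $\eps m/2$. Repeatedly pick any candidate $\globalproj{P}$, add it to $L_{j+1}$, replace the current state $\ket{\psi}$ by $\reallywidehat{\globalproj{P}\ket{\psi}}$, and delete from the candidate pool every term whose support meets that of $\globalproj{P}$. Each selection removes at most $kd$ candidates (since $\globalproj{P}$ acts on $k$ qudits and each qudit is in at most $d$ terms), and at every stage the current state is still a non-negative subset-state on which, by $\eps$-frustration and \Cref{lem:lb-frustrated-terms} re-applied, an $\eps/2$ fraction of all terms remain $\eps/2$-frustrated, so there is always a still-disjoint candidate to pick. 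The procedure therefore yields $|L_{j+1}| \geq \eps m/(2kd) \geq \eps n/(2kd)$ pairwise non-overlapping projectors, each contributing (via the one-term expansion lemma, plus \Cref{cor:composition} to identify supports after successive projections) an independent multiplicative factor of at least $1+\eps/4$ to the support size.

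Define $\ket{S_{j+1}} := \reallywidehat{L_{j+1}\ket{S_j}}$. As long as $S_{j+1}$ contains no $H$-bad string, the above yields $|S_{j+1}| \geq (1+\eps/4)^{\eps n/(2kd)}\,|S_j|$, and since $|S_0|=1$ and $|S_j| \leq |\Sigma|^n$ the process must break by layer $\ell := \ceil{\tfrac{2kd}{\eps}\log_{(1+\eps/4)}|\Sigma|}$; let $j^* \leq \ell$ be the first index at which $S_{j^*}$ contains an $H$-bad string $y$, witnessed by $y$ being bad for some specific term $\tilde{H}_{i^*}$ acting on qudit set $Q_{i^*}$. To convert the resulting constant-depth circuit $L_{j^*}\cdots L_1$ into a short path in $G(H)$, I would invoke the lightcone argument of \Cref{fig:lightcone}: by \Cref{rem:locality-phi}, deleting from each $L_j$ every projector whose support is disjoint from the $(j^*-j)$-step backward lightcone of $Q_{i^*}$ does not alter whether the resulting $Q_{i^*}$-restrictions witness badness for $\tilde{H}_{i^*}$, so the reduced layers $\lightcone{L}_1,\ldots,\lightcone{L}_{j^*}$ still produce, from $\ket{x}$, a string bad for $\tilde{H}_{i^*}$. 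Because the projectors within any single layer are non-overlapping and each is $k$-local, the relevant qudit set grows by a factor at most $k$ per layer, so the lightcone contains at most $k^{j^*}\leq k^{\ell}$ qudits and hence at most $\sizepath{}$ local terms. Finally, applying these lightcone projectors one at a time from $\ket{x}$ and using \Cref{lem:adding strings,rem:unique-phi} to track the support, one reads off a sequence of strings in which consecutive pairs differ only on the support of a single local term, and thus are adjacent in $G(H)$ by \Cref{eq:neighbor}, giving a path from $x$ to $y$ of length at most $\sizepath{}$.
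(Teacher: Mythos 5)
Your proof follows essentially the same route as the paper: an adaptive greedy layer construction (the paper's \Cref{algo:find-frustrated-terms} and \Cref{lem:non-overlapping-expanding-terms}), one-term expansion (\Cref{lem:one-term}), exponential growth forcing a bad string within $\lceil\frac{2kd}{\eps}\log_{1+\eps/4}|\Sigma|\rceil$ layers (\Cref{lem:constant-layers}), and the light-cone pruning plus projections-to-paths conversion (\Cref{lem:lightcone}, \Cref{lem:projections-to-path}). The only discrepancies are cosmetic: your one-term expansion lemma as stated gives factor $(1+\delta)$ and omits the hypothesis that $\ket{\psi}$ contains no $P$-bad strings, whereas the paper's \Cref{lem:one-term} requires that hypothesis and gives $(1+\delta/2)$; your downstream use of $(1+\eps/4)$ with $\delta=\eps/2$ and your restriction to $H$-good subset-states is actually consistent with the paper's correct version, so the argument goes through.
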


Using this lemma, we can prove our main result:

\newtheorem*{thm:repeat-main}{\Cref{thm:main} (restated)}
\begin{thm:repeat-main}
\bodymainthm
\end{thm:repeat-main}
\begin{proof}
The \NP{} witness for the problem consists in some initial string $x$ that is promised to be in the support of the groundstate of $H$. The verification proceeds by running over all
possible $\sizepath{}$-step paths from $x$. Since for each one of the $m$ terms there are constantly 
many possible steps, the number of 
possibilities for one step is polynomial, and so the number constantly-long paths is also polynomial. Therefore, such enumeration can be performed efficiently. For each path, we check if one of the strings it reaches is bad - again this can be done in polynomial time since badness is with respect to the local terms (see \Cref{rem:precision} for the precision issues). The verifier rejects if any of the paths reached a bad string, otherwise it accepts.

Let $x$ be the string sent by the Prover.
If $H$ is frustration-free, then by \Cref{lem:bt-completeness} all strings in the connected component of $x$ are good. On the other hand, if $H$ is $\eps$-frustrated, then  by \Cref{lem:constant-step-path}, there exists a \sizepath{}-step path from $x$ to some string $y$ that is bad for $H$, and such path will be found by the brute-force search.
\end{proof}

\begin{remark}[Deciding on badness of a string
with respect to a local uniform stoquastic term]\label{rem:precision}
We note that while in the non-uniform case the 
question of whether a string is bad for a local term or not, may depend on precision issues, 
this is not a problem when considering uniform stoquastic Hamiltonians. 
In the uniform case, 
the set of strings comprising the subset states 
in the non-negative decomposition of every projector, as in \Cref{eq:projector-groundspace}, can be calculated {\it exactly} given the matrix description of the local Hamiltonian term 
(even if we need to apply approximations when computing the groundstates). 
This is because the locality of the 
Hamiltonian, together with uniformity, imply that if a string is in the support of one of the groundstates, its weight must be $\frac{1}{\sqrt{q}}$ for some positive integer $q$ smaller than some constant. 
\end{remark} 

The remainder of this section is dedicated to proving \Cref{lem:constant-step-path}.
\Cref{sec:expansion} gives the one term expansion argument, 
\Cref{sec:constant-layers} provides the proof that a constant number of layers consisting of parallel non-overlapping projections suffices to reach a bad string; and \Cref{sec:lightcone} provides the light-cone argument to show that if a bad string is reached within constantly many layers, then in fact we there is a bad string within {\it constantly} many steps from the initial string. This then  allows searching for such a string by brute-force. Finally, \Cref{sec:constant-path} just puts all the pieces together to finish the proof of \Cref{lem:constant-step-path}.

\subsection{Expansion}\label{sec:expansion}
We start by showing that if subset-state $\ket{S}$ does not contain any bad string but a 
term $P$ is highly frustrated by $\ket{S}$, then the support of $\globalproj{P}\ket{S}$ is larger than that of $\ket{S}$ by a constant factor.  

\begin{lemma}[One term expansion]
  \label{lem:one-term}
  Let $\globalproj{P} = \sum_{j,z} \kb{\globalproj{T}_{j,z}}$ be a uniform stoquastic projector on the set $Q$ of $k$ out of 
  $n$ qudits. Let $S \subseteq \Sigma^n$ be such that 
  $\ket{S}$ does not contain $P$-bad-strings, and 
  $\norm{\globalproj{P}\ket{S}}^2 \leq 1-\delta$. 
  It follows that the size of the support of 
  $\globalproj{P}\ket{S}$ is at least $(1 + \frac{\delta}{2})|S|$.
\end{lemma}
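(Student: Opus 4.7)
My plan is to reduce the statement to a clean combinatorial inequality about how the set $S$ intersects the disjoint sets $\globalproj{T}_{j,z}$ given by the non-negative decomposition of $\globalproj{P}$.

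First, I would use the uniqueness statement in \Cref{rem:unique-phi} together with the assumption that $\ket{S}$ contains no $P$-bad strings. This implies that each $x\in S$ belongs to exactly one of the sets $\globalproj{T}_{j,z}$, so I can define a disjoint partition $S = \bigsqcup_{j,z} S_{j,z}$ with $S_{j,z} := S \cap \globalproj{T}_{j,z}$, and write $s_{j,z} := |S_{j,z}|$, $t_{j,z} := |\globalproj{T}_{j,z}|$. By \Cref{lem:adding strings}, the support of $\globalproj{P}\ket{S}$ is exactly $N := \bigcup_{(j,z):\, s_{j,z}>0} \globalproj{T}_{j,z}$ (a disjoint union), so $|N| = \sum_{(j,z):\, s_{j,z}>0} t_{j,z}$, which is the quantity I need to lower-bound.

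Next I would expand the state explicitly. Writing $\ket{S} = \frac{1}{\sqrt{|S|}}\sum_{x\in S}\ket{x}$ and $\ket{\globalproj{T}_{j,z}} = \frac{1}{\sqrt{t_{j,z}}}\sum_{y\in \globalproj{T}_{j,z}}\ket{y}$, a direct inner product gives $\braket{\globalproj{T}_{j,z}}{S} = s_{j,z}/\sqrt{t_{j,z}|S|}$, and therefore
\begin{equation*}
\globalproj{P}\ket{S} = \sum_{j,z} \frac{s_{j,z}}{\sqrt{t_{j,z}|S|}} \ket{\globalproj{T}_{j,z}},
\qquad
\norm{\globalproj{P}\ket{S}}^2 = \frac{1}{|S|}\sum_{(j,z):\, s_{j,z}>0} \frac{s_{j,z}^2}{t_{j,z}}.
\end{equation*}
The hypothesis $\norm{\globalproj{P}\ket{S}}^2 \leq 1-\delta$ then becomes $\sum_{(j,z):\, s_{j,z}>0} s_{j,z}^2/t_{j,z} \leq (1-\delta)|S|$.

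The finishing step is a Cauchy-Schwarz argument on the index set where $s_{j,z}>0$: applying $\left(\sum a_{j,z} b_{j,z}\right)^2 \le \left(\sum a_{j,z}^2\right)\left(\sum b_{j,z}^2\right)$ with $a_{j,z} = s_{j,z}/\sqrt{t_{j,z}}$ and $b_{j,z} = \sqrt{t_{j,z}}$ gives $|S|^2 = \left(\sum s_{j,z}\right)^2 \leq \left(\sum s_{j,z}^2/t_{j,z}\right)\cdot |N| \leq (1-\delta)|S|\cdot |N|$, so $|N|\geq |S|/(1-\delta) \geq (1+\delta)|S|$, which is strictly stronger than the claimed $(1+\delta/2)|S|$.

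The conceptually interesting step is the first one, where uniformity together with the absence of bad strings forces the partition to be disjoint; without that we would have overlaps between the sets $\globalproj{T}_{j,z}$ and the bookkeeping of the support size would break. The remaining calculation is essentially a one-line Cauchy-Schwarz and should not present any real difficulty, so I do not anticipate a serious obstacle.
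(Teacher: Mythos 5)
Your proof is correct and, up to the last algebraic step, mirrors the paper's argument: both proofs use the no-bad-strings hypothesis together with \Cref{rem:unique-phi} to partition $S$ across the disjoint sets $\globalproj{T}_{j,z}$, both identify $\mathrm{supp}(\globalproj{P}\ket{S})$ with $\mathbf{T}:=\bigcup_{(j,z):\,S\cap\globalproj{T}_{j,z}\neq\emptyset}\globalproj{T}_{j,z}$ via \Cref{lem:adding strings}, and both compute $\norm{\globalproj{P}\ket{S}}^2 = \frac{1}{|S|}\sum_{j,z} s_{j,z}^2/t_{j,z}$. Where you diverge is the finish. The paper writes $s_{j,z}=t_{j,z}-|\globalproj{T}_{j,z}\setminus S|$, expands the square, drops the non-negative $|\globalproj{T}_{j,z}\setminus S|^2$ term, and linearizes to obtain $1-\delta\geq 1-2|\mathbf{T}\setminus S|/|S|$, giving $|\mathbf{T}|\geq(1+\delta/2)|S|$. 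You instead apply Cauchy--Schwarz to $|S|=\sum s_{j,z}$ against the weights $t_{j,z}$, obtaining $|S|^2\leq(1-\delta)|S|\cdot|\mathbf{T}|$, hence $|\mathbf{T}|\geq|S|/(1-\delta)\geq(1+\delta)|S|$. This is cleaner and strictly tighter (factor $1+\delta$ versus the paper's $1+\delta/2$); of course the improved constant only propagates a harmless change into the downstream exponent $\sizepath$, which the paper does not optimize anyway.
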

\begin{proof}
Since $S$ does not contain bad strings, we start by noticing that from \Cref{lem:adding strings}, $S$ is contained in the support of $\globalproj{P}\ket{S}$, and $\globalproj{P}$ only adds the neighbors of strings in $\ket{S}$. 

We have that
 \begin{align}
 1-\delta \geq \|\globalproj{P}\ket{S}\|^2=\bra{S}\globalproj{P}\ket{S}= \sum_{j,z} \braket{S}{\globalproj{T}_{j,z}}\braket{\globalproj{T}_{j,z}}{S} =  
  \sum_{j,z} |\braket{S}{\globalproj{T}_{j,z}}|^2. \label{eq:lb-frustration}
  \end{align}

Let  $\textbf{T} = \bigcup_{j,z :  S \cap \globalproj{T}_{j,z} \ne \emptyset} \globalproj{T}_{j,z}$. 
  It follows that
 \begin{align}
\sum_{j,z} |\braket{S}{\globalproj{T}_{j,z}}|^2 &= 
\sum_{j,z}  \frac{|\globalproj{T}_{j,z} \cap S|^2}{|\globalproj{T}_{j,z}||S|} \nonumber
\\
   &= \sum_{j,z}  \frac{(|\globalproj{T}_{j,z}| - |\globalproj{T}_{j,z} \setminus S|)^2}{|\globalproj{T}_{j,z}||S|} \nonumber\\
   &= \sum_{j,z: \globalproj{T}_{j,z} \cap S \ne \emptyset}  \frac{|\globalproj{T}_{j,z}|^2 -
   2|\globalproj{T}_{j,z}||\globalproj{T}_{j,z} \setminus S|+|\globalproj{T}_{j,z} \setminus S|^2}{|\globalproj{T}_{j,z}||S|} \nonumber \\
   &\geq \frac{|\textbf{T}| -
   2|\textbf{T} \setminus S|}{|S|} \nonumber \\
   &\geq 1  - \frac{2|\textbf{T} \setminus S|}{|S|}. \label{eq:lb-new-strings}   
\end{align}
where in first inequality we remove some non-negative terms and use the fact that $\globalproj{T}_{j,z}$ and $\globalproj{T}_{j',z}$ are disjoint for $j \ne j'$ (\Cref{rem:unique-phi}) and in the second inequality we use the fact that $S \subseteq \textbf{T}$ since there are no bad strings in $S$.

By putting together 
\Cref{eq:lb-frustration,eq:lb-new-strings}, and noticing that 
$\textbf{T} = supp(\globalproj{P}\ket{S})$ from \Cref{lem:adding strings},
we have that
\[
|supp(\globalproj{P}\ket{S})| = |\textbf{T}| = |S| + |\textbf{T} \setminus S| \geq |S| + \frac{\delta}{2}|S| = \left(1 + \frac{\delta}{2}\right)|S|. \qedhere
\]
\end{proof}

\subsection{Bad string in a constant number of layers}\label{sec:constant-layers}

We prove in this section that with a constant number of ``layers'', it is possible to reach a state with a bad string.

We first want find a linear number of non-overlapping terms that are (roughly) simultaneously frustrated by some subset-state $\ket{S}$. Let us first define what we mean by non-overlapping terms.

\begin{definition}[Non-overlapping projectors]
A sequence of local projectors $L = (\localproj{Q}_1,...,\localproj{Q}_{\ell})$
  is non-overlapping if for any $i \ne j$, $\localproj{Q}_i$ and
  $\localproj{Q}_j$ act on disjoint sets of qudits.
\end{definition}

Now, we need to be more careful in order to define the notion of ``simultaneous''. Recall that if $H$ is $\eps$-frustrated, by \Cref{lem:lb-frustrated-terms}, there must exist at least $\frac{\eps m}{2}$ terms that are at least $\frac{\eps}{2}$-frustrated. However, as we explained in \Cref{ex:parallel}, their frustration may be correlated due to entanglement, and when we ``correct'' the frustration of one term, we could also be correcting the frustration of other terms, 

Because of this, we need to choose the sequence of projectors more carefully. We are looking for projectors which are frustrated, even after applying the previous projectors in the sequence.

\begin{definition}[Sequentially frustrated terms]
A sequence of projectors $L = (\localproj{Q}_1,...,\localproj{Q}_{\ell})$ is {\em sequentially}
  $\delta$-frustrated by some subset-state $\ket{S_0}$, if for all
  $i=1,...,\ell$, $\norm{\globalproj{Q}_i \reallywidehat{\globalproj{Q}_{i-1}...\globalproj{Q}_1\ket{S_0}}}^2 \leq 1 - \delta$.
\end{definition}

We show now that we can find a linear-size sequence that is non-overlapping and sequentially highly frustrated, in a greedy way. At iteration $i$, we fix a projector $\localproj{Q}_i$ such that $\globalproj{Q}_i$ is highly frustrated by  $\reallywidehat{\globalproj{Q}_{i-1}...\globalproj{Q}_1\ket{S_{0}}}$ and $\localproj{Q}_i$ does not overlap with any $\localproj{Q}_j$ for $j < i$. More concretely, we choose $\localproj{Q}_i$ arbitrarily from the intersection of the following sets:
\begin{itemize}
\item $F_i$, the set of terms that are at least $\frac{\eps}{2}$-\textbf{F}rustrated by $\reallywidehat{\globalproj{Q}_{i-1}...\globalproj{Q}_1\ket{S_{0}}}$%
\item $A_i$, the set of \textbf{A}vailable terms, i.e. the terms that do not overlap with $\localproj{Q}_{j}$, for $j < i$.
\end{itemize}

We describe such an algorithm in \Cref{algo:find-frustrated-terms} and analyze its correctness in \Cref{lem:non-overlapping-expanding-terms}.

\begin{figure}[H]
\rule[1ex]{\textwidth}{0.5pt}
\vspace{-20pt}

  Let $H = \frac{1}{m} \sum_{j=1}^m H_j$ be a stoquastic $k$-Local Hamiltonian with frustration at least
  $\eps$ and some subset-state $\ket{S_0}$. For each term $H_j$, we denote by $\localproj{P}_j$ the projector onto its groundspace. 

\begin{enumerate}
 \item Let $i = 0$, $A_0 = \{\localproj{P}_1,...,\localproj{P}_m\}$ and $F_0 =  \{\localproj{P}_j : \norm{\globalproj{P}_j \ket{S_{0}}}^2 \leq 1 - \frac{\eps}{2}\}$
 \item While $A_i \cap F_i \ne \emptyset$
 \begin{enumerate} 
   \item Pick any $\localproj{P}_j \in A_i \cap F_i$ and set $\localproj{Q}_i  = \localproj{P}_j$
   \item Let $A_{i+1} = A_i \setminus \{\localproj{P}_j : \localproj{P}_j \textrm{ overlaps with } \localproj{Q}_i\}$ and $F_{i+1} =  \{\localproj{P}_j : \norm{\globalproj{P}_j \reallywidehat{\globalproj{Q}_i...\globalproj{Q}_0\ket{S_{0}}}}^2 \leq 1 - \frac{\eps}{2}\}$
   \item Let $i = i +1$
 \end{enumerate}
 \item Output $L = (\localproj{Q}_0, ..., \localproj{Q}_{i-1})$
\end{enumerate}
\rule[2ex]{\textwidth}{0.5pt}\vspace{-.5cm}
\caption{Algorithm for finding non-overlapping frustrated terms}
\label{algo:find-frustrated-terms}
\end{figure}

\begin{lemma}[Linear number of sequential non-overlapping frustrated terms]
\label{lem:non-overlapping-expanding-terms}
 Let $H = \frac{1}{m}\sum_{i = 1}^m H_i$ be a $\eps$-frustrated uniform stoquastic $k$-Local Hamiltonian, $S_0 \subseteq \Sigma^n$, and $L = (\localproj{Q}_0, ..., \localproj{Q}_{i^*-1})$ be the output of \Cref{algo:find-frustrated-terms}. Then $i)$ $L$ is non-overlapping, $ii)$ $L$ is sequentially $\frac{\eps}{2}$-frustrated by $\ket{S_0}$, and $iii)$ $i^* \geq  \frac{\eps n}{2kd}$.
\end{lemma}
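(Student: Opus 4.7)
The plan is to address each of the three parts of the lemma separately, with (i) and (ii) following essentially by inspection of the construction, and (iii) reducing to a counting argument that combines the lower bound on frustrated terms (\Cref{lem:lb-frustrated-terms}) with the local degree bound $d$.

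For part (i), non-overlappingness is immediate from the algorithm: when picking $\localproj{Q}_i$, the set $A_i$ has been obtained from $A_0$ by removing every term that overlaps any previously chosen $\localproj{Q}_0,\dots,\localproj{Q}_{i-1}$, so by construction $\localproj{Q}_i$ is disjoint in support from every earlier $\localproj{Q}_j$. For part (ii), sequential frustration is built into the definition of $F_i$: the algorithm only picks $\localproj{Q}_i$ from $F_i$, which by definition consists of terms $\localproj{P}_j$ with $\norm{\globalproj{P}_j\reallywidehat{\globalproj{Q}_{i-1}\cdots\globalproj{Q}_0\ket{S_0}}}^2 \le 1-\tfrac{\eps}{2}$ (taking the initial case $i=0$ where no projections have been applied and $F_0$ is defined with respect to $\ket{S_0}$ directly). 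Matching indices with the definition of ``sequentially $\delta$-frustrated'', this gives exactly $\delta=\eps/2$.

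For part (iii), the key observation is that at termination we have $A_{i^*}\cap F_{i^*} = \emptyset$, i.e.\ $F_{i^*} \subseteq [m]\setminus A_{i^*}$, where I identify projectors with their indices. I will bound $|F_{i^*}|$ from below and $|[m]\setminus A_{i^*}|$ from above. For the lower bound, note that $H$ is $\eps$-frustrated \emph{on every state}, in particular on the subset-state $\reallywidehat{\globalproj{Q}_{i^*-1}\cdots\globalproj{Q}_0\ket{S_0}}$; applying \Cref{lem:lb-frustrated-terms} to this state immediately yields $|F_{i^*}| \ge \tfrac{\eps m}{2}$. For the upper bound, each $\localproj{Q}_j$ acts on $k$ qudits, and each such qudit belongs to at most $d$ terms of $H$, so each step of the algorithm removes at most $kd$ terms from $A_{\cdot}$; after $i^*$ steps, $|[m]\setminus A_{i^*}| \le i^*\cdot kd$.

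Combining these bounds gives $\tfrac{\eps m}{2} \le |F_{i^*}| \le |[m]\setminus A_{i^*}| \le i^* kd$, so $i^* \ge \tfrac{\eps m}{2kd}$. Using the assumption $m \ge n$ built into the problem definition then yields $i^* \ge \tfrac{\eps n}{2kd}$, as claimed. I do not foresee a real obstacle here; the only subtlety worth double-checking is that \Cref{lem:lb-frustrated-terms} is invoked on the \emph{subset-state} along the algorithm's trajectory rather than on some state in the groundspace, but this is fine precisely because the hypothesis of $\eps$-frustration quantifies over \emph{all} states.
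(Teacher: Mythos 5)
Your proof is correct and follows essentially the same approach as the paper: parts (i) and (ii) by inspection of the greedy construction, and part (iii) by combining the lower bound $|F_{i^*}|\ge\eps m/2$ from \Cref{lem:lb-frustrated-terms} with the bound that each iteration removes at most $kd$ terms from $A_\cdot$, then using $m\ge n$. The only cosmetic difference is that you phrase termination as $F_{i^*}\subseteq[m]\setminus A_{i^*}$ and compare cardinalities directly, whereas the paper states the same fact via pigeonhole on $|A_{i^*}|+|F_{i^*}|\le m$; the resulting inequality $i^*\ge\eps m/(2kd)$ is identical.
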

\begin{proof}
Properties $i)$ and $ii)$ follow by construction: $\localproj{Q}_i \in A_i$, and thus it does not overlap with $\localproj{Q}_{j}$ for $j < i$; and $\localproj{Q}_i \in F_i$, therefore $\norm{\globalproj{Q}_i\reallywidehat{\globalproj{Q}_{i-1}..\globalproj{Q}_0\ket{S_{0}}}}^2 \leq 1 - \frac{\eps}{2}$.

We prove now property $iii)$. Notice that $|A_{i+1}| - |A_{i}| \leq kd$, since the only difference between these two sets are the overlapping terms of $\localproj{Q}_i$, $\localproj{Q}_i$ acts on at most $k$ qudits, and there are at most $d$ other terms that overlap with $\localproj{Q}_i$ due to a specific qudit. Therefore, we have that  
 \begin{align}\label{eq:lb-aj}
   |A_{i^*}| \geq m- i^*kd.
  \end{align}
 
Notice that for every $i$, we have that $|F_{i}| \geq \frac{\eps m}{2}$ by \Cref{lem:lb-frustrated-terms}.
We also have that if  $|A_{i}| + |F_{i}| > m$, then $A_{i} \cap F_{i} \ne \emptyset$  by the pigeonhole principle.  Therefore, $A_{i^*}\cap F_{i^*} = \emptyset$ implies that
 \begin{align}
   \left(1  - \frac{\eps}{2}\right) m \geq  |A_{i^*}|  \label{eq:ub-aj}
  \end{align}
 
 Putting \Cref{eq:lb-aj,eq:ub-aj} together
we have
  \[ m - i^*kd \leq |A_{i^*}| \leq  \left(1  - \frac{\eps}{2}\right) m \]
  and therefore it follows that 
  \[  i^*\geq \frac{\eps m}{2 kd} \geq \frac{\eps n}{2 kd}, \]  
where we use the fact that $m \ge n$.
\end{proof}

\begin{definition}[A layer acting on $\ket{S_0}$]\label{rem:commuting} We denote 
$L\ket{S_0}$ to be $\globalproj{Q}_{i^*-1}...\globalproj{Q}_0\ket{S_0}$. We notice that this state is equal to  $\globalproj{Q}_{\sigma(i^*-1)}...\globalproj{Q}_{\sigma(0)}\ket{S_0}$ for every permutation $\sigma$ on the indices $0,...,i^*-1$, since by property i) of \Cref{lem:non-overlapping-expanding-terms} the terms in L are non overlapping, and thus commuting. %
Thus the resulting state does not depend on the order of application of these projections, hence the notion of {\it applying a layer of non-overlapping terms on a state} is well defined. 
\end{definition}

By combining \Cref{lem:one-term,lem:non-overlapping-expanding-terms} we can now prove that if we apply all the projections output by \Cref{algo:find-frustrated-terms}, namely all projections in $L$, the resulting state has exponentially more strings than the original one.

\begin{corollary}[Multiple terms expansion]
\label{lem:expansion}
  Let  $H = \frac{1}{m}\sum_{i = 1}^m H_i$ be an $\eps$-frustrated uniform stoquastic $k$-Local Hamiltonian and $\ket{S}$ be a subset-state such that $\ket{S}$ does not contain any bad string for $H$. Then there is a sequence $L$ of  non-overlapping terms of $H$ such that the number of strings in the support of $L\ket{S}$ is at least  $(1 + \frac{\eps}{4})^{\frac{\eps n}{2 k d}}$ times the number of strings in $\ket{S}$.
\end{corollary}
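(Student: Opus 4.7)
The plan is to chain the one–term expansion lemma along the sequence of sequentially frustrated non–overlapping projections supplied by \Cref{lem:non-overlapping-expanding-terms}. First, I apply \Cref{lem:non-overlapping-expanding-terms} to $H$ and the subset–state $\ket{S}$ to obtain a non–overlapping sequence $L=(\localproj{Q}_0,\dots,\localproj{Q}_{i^*-1})$ that is sequentially $\tfrac{\eps}{2}$–frustrated by $\ket{S}$, with $i^*\ge \tfrac{\eps n}{2kd}$. Define $S_0:=S$ and, for $i\ge 1$, $S_i:=\mathrm{supp}\bigl(\globalproj{Q}_{i-1}\cdots \globalproj{Q}_0\ket{S_0}\bigr)$, so that by \Cref{cor:composition} the subset–state $\ket{S_i}$ equals $\reallywidehat{\globalproj{Q}_{i-1}\cdots \globalproj{Q}_0\ket{S_0}}$. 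The goal is then to show $|S_{i+1}|\ge (1+\tfrac{\eps}{4})\,|S_i|$ for every $i\in\{0,\dots,i^*-1\}$ and conclude by iterating.

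To apply \Cref{lem:one-term} at step $i$ with $\delta=\tfrac{\eps}{2}$, I need two things: (a) $\|\globalproj{Q}_i\ket{S_i}\|^2\le 1-\tfrac{\eps}{2}$, and (b) $\ket{S_i}$ contains no $Q_i$–bad string. Condition (a) is exactly the sequential frustration property of $L$. For condition (b), I use the locality of the projectors established in \Cref{rem:locality-phi}: each $\globalproj{Q}_j$ with $j<i$ acts on qudits disjoint from those of $Q_i$ (by non–overlap), so
\[
\{x_{Q_i}: x\in S_{j+1}\}\subseteq \{x_{Q_i}: x\in S_j\}.
\]
Iterating this inclusion yields $\{x_{Q_i}: x\in S_i\}\subseteq \{x_{Q_i}: x\in S_0\}$. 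Since $\ket{S_0}=\ket{S}$ contains no $H$–bad strings, it contains in particular no $Q_i$–bad strings, and since badness is a property of the restriction to the qudits of $Q_i$ (\Cref{def:bad-string}), no string in $S_i$ is $Q_i$–bad either. Hence \Cref{lem:one-term} applies and gives $|S_{i+1}|\ge \bigl(1+\tfrac{\eps}{4}\bigr)|S_i|$.

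Iterating this multiplicative bound across all $i^*$ steps yields
\[
|\mathrm{supp}(L\ket{S})|=|S_{i^*}|\ge \Bigl(1+\tfrac{\eps}{4}\Bigr)^{i^*}|S|\ge \Bigl(1+\tfrac{\eps}{4}\Bigr)^{\frac{\eps n}{2kd}}|S|,
\]
which is the desired bound. The main (conceptual) obstacle is verifying condition (b) above, i.e.\ that no bad string is introduced at any intermediate step; this is resolved by invoking the local–action property of stoquastic projectors together with the non–overlap property of $L$, which together ensure that the marginal on $Q_i$'s qudits only shrinks as the earlier projections are applied. Everything else is a straightforward composition of results already established in the excerpt.
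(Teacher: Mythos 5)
Your proposal is correct and follows essentially the same route as the paper: choose $L$ from \Cref{lem:non-overlapping-expanding-terms}, chain \Cref{lem:one-term} with $\delta=\eps/2$ at each step, use \Cref{rem:locality-phi} together with non-overlap to rule out $Q_i$-bad strings in the intermediate supports, and invoke \Cref{cor:composition} to pass between the subset state $\ket{S_i}$ and the raw product state. Your statement of the key inductive invariant (the $Q_i$-marginal of $S_j$ only shrinks as $j$ increases, for $j<i$) is a slightly more explicit rendering of the paper's one-line appeal to \Cref{rem:locality-phi}, but the argument is the same.
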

\begin{proof}
Let $L = (\localproj{Q}_1,...,\localproj{Q}_{i^*})$ be the output of \Cref{algo:find-frustrated-terms}. Let us argue now that
we can use \Cref{lem:one-term} for each one of the $i^* \geq \frac{\eps n}{2 k d}$ terms in $L$  sequentially, which would imply the statement. 

First, let us claim that for all $j$, $\globalproj{Q}_{j}...\globalproj{Q}_1\ket{S}$ does not contain $\localproj{Q}_{j'}$-bad strings, for $j'>j$.  
This follows by induction; for $j=0$ this is true by assumption. 
Also from \Cref{lem:non-overlapping-expanding-terms}, the terms in $L$ are
  non-overlapping. Hence, by  \Cref{rem:locality-phi} when we apply $\globalproj{Q}_j$ no $\localproj{Q}_{j'}$-bad strings appear, for all $j' \ne j$.  
Secondly, directly from \Cref{lem:non-overlapping-expanding-terms} we have that $\norm{\globalproj{Q}_{j+1}\reallywidehat{\globalproj{Q}_{j}...\globalproj{Q}_1\ket{S}}} \leq 1 - \frac{\eps}{2}$.

This means that the conditions of   \Cref{lem:one-term} are satisfied, and it follows that the number of strings in the support of 
the state $\globalproj{Q}_{j+1}\reallywidehat{\globalproj{Q}_{j}...\globalproj{Q}_1\ket{S}}$
 is larger by a factor of $(1 + \frac{\eps}{4})$ than that of the state 
$\reallywidehat{\globalproj{Q}_{j}...\globalproj{Q}_1\ket{S}}$. 
By \Cref{cor:composition}
we deduce that the support of $\globalproj{Q}_{j+1}...\globalproj{Q}_1\ket{S}$
is bigger by the same factor, than that of $\globalproj{Q}_{j}...\globalproj{Q}_1\ket{S}.$
\end{proof}

We now observe that this expansion is too large 
to be applied for many layers, since the number 
of strings reached will just exceed the number of 
$n$-bit strings. 

\begin{lemma}[Bad string in constant number of layers]
\label{lem:constant-layers}
Let $\ell^* = \lceil\frac{2kd}{\eps} \log_{\left(1 +
  \frac{\eps}{4}\right)}|\Sigma|\rceil$. Consider a uniform\footnote{In fact, uniformity is not needed for this claim, but this requires more work.}
  stoquastic $k$-Local
  Hamiltonian $H = \frac{1}{m}\sum_{i = 1}^m H_i$ which is $\eps$-frustrated. Then for every good string $x$, there exists some $\ell < \ell^*$ and a sequence $L_1,...L_{\ell}$, where each $L_i$ consists of a set of non-overlapping local projectors (corresponding to the projections on the local terms of $H$), such that $L_{\ell}...L_1\ket{x}$ contains a bad string.
\end{lemma}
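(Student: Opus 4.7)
The proof proceeds by iteratively applying Corollary~\ref{lem:expansion}, maintaining the invariant that the running support is free of $H$-bad strings. Concretely, I set $S_0 = \{x\}$, which satisfies the invariant since $x$ is good. Inductively, at round $\ell$, if $S_\ell$ contains any $H$-bad string then I stop and output $L_1, \ldots, L_\ell$; otherwise $\ket{S_\ell}$ is a subset-state with no bad strings, so Corollary~\ref{lem:expansion} produces a non-overlapping layer $L_{\ell+1}$ with
\[
  |supp(L_{\ell+1}\ket{S_\ell})| \;\geq\; \left(1 + \tfrac{\eps}{4}\right)^{\eps n /(2kd)}\,|S_\ell|.
\]
Setting $S_{\ell+1} := supp(L_{\ell+1}\ket{S_\ell})$ and invoking Corollary~\ref{cor:composition} guarantees $S_{\ell+1} = supp(L_{\ell+1}\cdots L_1\ket{x})$, so the iteration honestly tracks the support of the composite operator.

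I then argue that the process cannot survive $\ell^*$ rounds. If it did, iterating the bound from $|S_0| = 1$ gives $|S_{\ell^*}| \geq (1 + \eps/4)^{\ell^*\,\eps n/(2kd)}$; the defining relation $\ell^* \geq \tfrac{2kd}{\eps}\log_{1+\eps/4}|\Sigma|$ forces this quantity to be at least $|\Sigma|^n$, and since $S_{\ell^*}\subseteq \Sigma^n$ this yields $S_{\ell^*} = \Sigma^n$.

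The main — and essentially only new — step is verifying that $\Sigma^n$ itself must contain an $H$-bad string whenever $H$ is $\eps$-frustrated, so that the invariant really has to break by round $\ell^*$. Suppose for contradiction that $H$ admits no bad string. Then for each local term, writing its groundspace projector in the non-negative uniform decomposition of Remark~\ref{rem:notation} as $\globalproj{P}_i = \sum_{j,z}\kb{\globalproj{T}_{i,j,z}}$, Remark~\ref{rem:unique-phi} together with the no-bad-string assumption implies that the sets $\{\globalproj{T}_{i,j,z}\}_{j,z}$ partition $\Sigma^n$. Consequently
\[
  \ket{\Sigma^n} \;=\; \sum_{j,z}\sqrt{\tfrac{|\globalproj{T}_{i,j,z}|}{|\Sigma|^n}}\;\ket{\globalproj{T}_{i,j,z}}
\]
lies in the image of $\globalproj{P}_i$, and this for every $i$. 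Thus $\ket{\Sigma^n}$ is a common zero-energy state of every local term, certifying that $H$ is frustration-free and contradicting the hypothesis.

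Combining the three ingredients, the invariant must fail at some $\ell \leq \ell^*$, producing the desired short sequence of layers whose composition, applied to $\ket{x}$, has a bad string in its support. (The strict $\ell < \ell^*$ of the statement is a matter of a harmless bump of $1$ in the constant defining $\ell^*$ and is irrelevant for the downstream use in Theorem~\ref{thm:main}.) The only delicate point I expect to check carefully is the well-definedness of the iteration — that applying Corollary~\ref{lem:expansion} to $\ket{S_\ell}$ (rather than to the non-normalized state $L_\ell\cdots L_1\ket{x}$) is legitimate — which is exactly what Corollary~\ref{cor:composition} is designed to furnish; everything else is bookkeeping.
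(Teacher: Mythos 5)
Your proof is correct and follows essentially the same route as the paper's: iterate the expansion of \Cref{lem:expansion}, track supports via \Cref{cor:composition}, and conclude that surviving $\ell^*$ rounds would force $|S_{\ell^*}|\ge|\Sigma|^n$. The one place you go further than the paper is in substantiating the assertion \emph{``if $H$ is frustrated, bad strings exist''} — the paper states this without proof while you supply the (correct) argument that absence of bad strings would make $\ket{\Sigma^n}$ a common zero-energy state of every $\globalproj{P}_i$, contradicting $\eps$-frustration; you also correctly flag the harmless off-by-one in the strict inequality $\ell<\ell^*$ (the paper itself later invokes the lemma with $\ell\le\ell^*$).
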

\begin{proof}
Let $L_1$ be the output of \Cref{algo:find-frustrated-terms} for the initial state $\ket{x}$, and recursively, for $\ell\le \ell^*$, let  
$L_{\ell}$ be the output of \Cref{algo:find-frustrated-terms} for the state $\reallywidehat{L_{\ell}...L_1\ket{x}}$. 
Let $S_{\ell}$ be the set of strings in the support of the 
state $\reallywidehat{L_{\ell}...L_1\ket{x}}$.

Now, if for some $\ell<\ell^*$, $S_{\ell}$ contains a bad 
string for $H$, we are done. Otherwise, 
for all $\ell<\ell^*$, the state 
$\reallywidehat{L_{\ell-1}...L_1\ket{x}}$ contains no bad 
string for $H$.

From \Cref{lem:expansion}, we have that for each $1 \leq \ell \leq \ell^*-1$, 
\begin{align*}\label{eq:bound-si}
|supp(L_{\ell}(\reallywidehat{L_{\ell-1}...L_1\ket{x}}))| \geq \left(1 + \frac{\eps}{4}\right)^{\frac{\eps n}{2kd}}|S_{\ell-1}| 
\end{align*}

Moreover, by \Cref{cor:composition}, 
we have 
\[|S_{\ell}|= |supp(L_{\ell}(\reallywidehat{L_{\ell-1}...L_1\ket{x}}))|.\]

Thus, by a trivial induction, we arrive at 

\begin{align*}
|S_{\ell^*}|\ge
\left( 
    \left(1 + \frac{\eps}{4}\right)^
    \frac{\eps n}{2kd}
                    \right)^
{\frac{2kd}{\eps}\log_{\left(1 + \frac{\eps}{4}\right)}|\Sigma|}
= 
\left( \left(1 + \frac{\eps}{4}\right)^{\log_{\left(1 + \frac{\eps}{4}\right)}|\Sigma|}\right)^n
=
|\Sigma|^n.
\end{align*}

Thus $S_{\ell^*}$ contains all possible strings. However if $H$ is frustrated, bad strings exist and thus there must be a bad string in 
$S_{\ell^*}= supp(L_{\ell^*}...L_1\ket{x})$.  
\end{proof}

\subsection{Finding the bad string}\label{sec:lightcone}
In this section, we prove that if for some constant $\ell$ we have that $L_{\ell}...L_1\ket{x}$ contains a bad string, and each $L_i$ consists of non-overlapping terms, then there exists a constant path (namely a sequence of constantly many local steps) from $x$ to a bad string. 

We start by showing how to retrieve (possibly polynomial-size) paths from strings in some non-negative state $\ket{\psi}$ to string in some state $L\ket{\psi}$, for a non-overlapping set of projections $L$.

\begin{lemma}[From non-overlapping projections to paths]
\label{lem:projections-to-path}
Let $\ket{\psi}$ be a non-negative state and $L$ be an arbitrary set of non-overlapping stoquastic
  projectors. Then for every string $y$ in $L\ket{\psi}$, there exists a string
  $x$ in $\ket{\psi}$ such that there is a $|L|$-step path between $x$ and $y$ in
  $G(H)$.
\end{lemma}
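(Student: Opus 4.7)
The plan is to proceed by induction on $|L|$, peeling off one projector at a time and using the adding-strings machinery already established in \Cref{lem:adding strings} and \Cref{rem:locality-phi}. The base case $|L|=0$ is trivial: $L\ket{\psi}=\ket{\psi}$, so one may take $x=y$, giving a path of length $0$.

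For the inductive step, I would write $L = L' \cup \{\globalproj{Q}\}$ for some chosen projector $\globalproj{Q}\in L$. Since the projectors in $L$ are non-overlapping, they act on disjoint qudit sets and hence commute; by the discussion in \Cref{rem:commuting} (and using \Cref{cor:composition} to keep track of supports), we have $L\ket{\psi} = \globalproj{Q}\, L'\ket{\psi}$, and $L'\ket{\psi}$ is still a non-negative state because each $\globalproj{Q}_i$ has non-negative entries (from its non-negative decomposition) and $\ket{\psi}$ is non-negative. Now fix some $y \in \mathrm{supp}(L\ket{\psi}) = \mathrm{supp}(\globalproj{Q} L'\ket{\psi})$. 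By \Cref{lem:adding strings} applied to $\globalproj{Q}$ and the non-negative state $L'\ket{\psi}$, there exists a rank-one non-negative component $\ket{\globalproj{\phi}_{j,z}}$ in the decomposition of $\globalproj{Q}$ such that $y\in \mathrm{supp}(\ket{\globalproj{\phi}_{j,z}})$ and $\braket{\globalproj{\phi}_{j,z}}{L'\psi}>0$.

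The positivity of the latter overlap, together with non-negativity of $L'\ket{\psi}$, forces the existence of some $y' \in \mathrm{supp}(L'\ket{\psi})$ with $\braket{\globalproj{\phi}_{j,z}}{y'}>0$. Since both $y$ and $y'$ lie in the support of the same $\ket{\globalproj{\phi}_{j,z}}$, we compute
\[
\bra{y}\globalproj{Q}\ket{y'} \;=\; \braket{y}{\globalproj{\phi}_{j,z}}\,\braket{\globalproj{\phi}_{j,z}}{y'}\;>\;0,
\]
so either $y=y'$ (contributing distance $0$) or by \Cref{eq:neighbor} the pair $(y,y')$ is an edge of $G(H)$, since $\globalproj{Q}$ is (the global version of) a local term of $H$. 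In either case the distance from $y$ to $y'$ is at most $1$.

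Finally, applying the inductive hypothesis to $L'$ (of size $|L|-1$) and the string $y' \in \mathrm{supp}(L'\ket{\psi})$, we obtain some $x\in\mathrm{supp}(\ket{\psi})$ at distance at most $|L|-1$ from $y'$ in $G(H)$. Composing the two steps gives a path of length at most $|L|$ from $x$ to $y$, completing the induction. I do not anticipate a serious obstacle here; the only points requiring a bit of care are verifying that non-negativity is preserved under the intermediate products of projectors (so that \Cref{lem:adding strings} can be invoked), and correctly handling the degenerate case $y=y'$ so that the path length bound is not off by one.
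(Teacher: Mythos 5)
Your proof is correct and rests on the same key observation as the paper's: because non-overlapping stoquastic projectors and the state are all entrywise non-negative, a non-zero amplitude at $y$ in $L\ket{\psi}$ can be traced back through a chain of strings, each consecutive pair connected by one projector. The paper does this in one shot by expanding $\bra{x}\globalproj{Q}_1\cdots\globalproj{Q}_g\ket{y}$ with resolutions of the identity $\kb{w_i}$ between adjacent projectors and picking a term with a strictly positive product, which is exactly the unrolled version of your induction; your reorganization of the argument as an induction, invoking \Cref{lem:adding strings} to locate a witness string $y'$ at each step, is valid and handles the degenerate $y=y'$ case cleanly.
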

\begin{proof}
  Let  $L = \{\localproj{Q}_1, ...,
  \localproj{Q}_g\}$.
Since $L$ is a set of non-overlapping stoquastic projectors, if $y \in supp(L\ket{\psi})$, then there must exist some $x$ in $supp(\ket{\psi})$  such that
\[0 < \bra{x}L\ket{y}.\]
This is because writing $\bra{\psi}=\sum_x \psi_x \bra{x}$ we have (using $L^\dagger=L$ due to the fact that $L$ consists of non-overlapping projections): 
\[0 < \bra{\psi}L\ket{y}=\sum_{x\in supp(\ket{\psi})} \psi_x \bra{x}L\ket{y}\]
and the coefficients $\psi_x$ are all non-negative. 
We have
\begin{align}\label{eq:positive-summation}
0 < \bra{x}L\ket{y} =  
\bra{x}\globalproj{Q}_{1} ... \globalproj{Q}_{g}\ket{y} = \sum_{w_1, ..., w_{g-1}} \bra{x}\globalproj{Q}_{1}  \kb{w_1} \globalproj{Q}_2 ... \kb{w_{g-1}} \globalproj{Q}_{g} \ket{y},
\end{align}
where we can write the $\globalproj{Q}_i$ in any order since they are non-overlapping, from \Cref{rem:commuting}; the $w$'s above run over 
all $n$-dit strings. 

Since every $\globalproj{Q}_i$ has only non-negative entries, for every pair of strings $z$ and $z'$, $\bra{z}\globalproj{Q}_i\ket{z'} \geq 0$. Then \Cref{eq:positive-summation} holds iff there exists values $w^*_1, ..., w^*_{g-1}$  such that
\begin{align*}
0 < \bra{w^*_0}\globalproj{Q}_{1} \kb{w^*_1} \globalproj{Q}_{2} ... \kb{w^*_{g-1}} \globalproj{Q}_{g}\ket{w^*_g}, 
\end{align*}
where we have set $x := w^*_0$ and $y := w^*_g$. 
It follows that for every $i \in  [g]$, $\bra{w^*_{i-1}}\globalproj{Q}_{i}
  \ket{w^*_{i+1}} > 0$, and then from \Cref{eq:neighbor},  $w^*_i$ and
  $w^*_{i+1}$ are neighbors in $G(H)$. Therefore, there is a $|L|$-step path from $x$ to $y$.
\end{proof}

Finally, we show how to find a short path between the initial string a bad string. The intuition of the proof is depicted in \Cref{fig:lightcone}. 

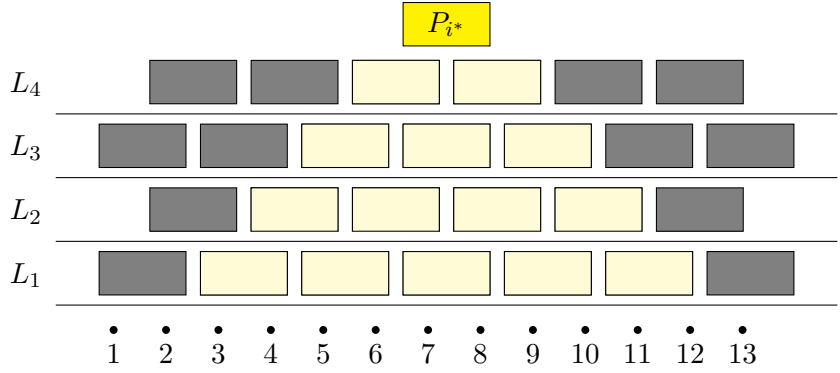
\begin{figure}
    \centering
\begin{tikzpicture}
    \node[circle,fill=black,inner sep=0pt,minimum
    size=3pt,label=below:{$1$}] (x1)  {};
  \foreach \i/\j in {2/1,3/2,4/3,5/4,6/5,7/6,8/7, 9/8, 10/9, 11/10,12/11,13/12}
  {
  \node[right=1.5em of x\j,circle,fill=black,inner sep=0pt,minimum
    size=3pt,label=below:{$\i$}] (x\i) {};   
  }
    \node[above=-1.5em of x1] (x1l0)  {};    
  \foreach \i/\j/\k in {0/1/2,2/3/4}
  {
    \node[above=1.5em of x1l\i] (x1l\j)  {};
    \node[above left=2.5em of x1l\i]  {$L_\j$};    
    \draw (x1l\j) ++(-2em,-0.35em) -- ++(27em,0);      
    \draw[draw=black,fill=gray] (x1l\j) ++ (-0.5em,0) rectangle ++(3em,1.5em);
    \draw[draw=black,fill=gray] (x1l\j) ++ (3em,0) rectangle ++(3em,1.5em);
    \draw[draw=black,fill=gray] (x1l\j) ++ (6.5em,0) rectangle ++(3em,1.5em);    
    \draw[draw=black,fill=gray] (x1l\j) ++ (10em,0) rectangle ++(3em,1.5em);    
    \draw[draw=black,fill=gray] (x1l\j) ++ (13.5em,0) rectangle ++(3em,1.5em);     
    \draw[draw=black,fill=gray] (x1l\j) ++ (17em,0) rectangle ++(3em,1.5em);         
    \draw[draw=black,fill=gray] (x1l\j) ++ (20.5em,0) rectangle ++(3em,1.5em);             
    \node[above left=2.5em of x1l\j]  {$L_\k$};        
    \node[above=1.5em of x1l\j] (x1l\k)  {};  
    \draw (x1l\k) ++(-2em,-0.35em) -- ++(27em,0);      
    \draw[draw=black,fill=gray] (x1l\k) ++ (1.25em,0) rectangle ++(3em,1.5em);
    \draw[draw=black,fill=gray] (x1l\k) ++ (4.75em,0) rectangle ++(3em,1.5em);
    \draw[draw=black,fill=gray] (x1l\k) ++ (8.25em,0) rectangle ++(3em,1.5em);    
    \draw[draw=black,fill=gray] (x1l\k) ++ (11.75em,0) rectangle ++(3em,1.5em); 
    \draw[draw=black,fill=gray] (x1l\k) ++ (15.25em,0) rectangle ++(3em,1.5em);  
    \draw[draw=black,fill=gray] (x1l\k) ++ (18.75em,0) rectangle ++(3em,1.5em);     
  }
      \draw[draw=black,fill=yellow!20] (x1l4) ++ (8.25em,0) rectangle ++(3em,1.5em);
      \draw[draw=black,fill=yellow!20] (x1l4) ++ (11.75em,0) rectangle ++(3em,1.5em);  
      \draw[draw=black,fill=yellow!20] (x1l3) ++ (6.5em,0) rectangle ++(3em,1.5em);      
      \draw[draw=black,fill=yellow!20] (x1l3) ++ (10em,0) rectangle ++(3em,1.5em);   
      \draw[draw=black,fill=yellow!20] (x1l3) ++ (13.5em,0) rectangle ++(3em,1.5em);            
      \draw[draw=black,fill=yellow!20] (x1l2) ++ (4.75em,0) rectangle ++(3em,1.5em);      
      \draw[draw=black,fill=yellow!20] (x1l2) ++ (8.25em,0) rectangle ++(3em,1.5em);      
      \draw[draw=black,fill=yellow!20] (x1l2) ++ (11.75em,0) rectangle ++(3em,1.5em);      
      \draw[draw=black,fill=yellow!20] (x1l2) ++ (15.25em,0) rectangle ++(3em,1.5em);      
      \draw[draw=black,fill=yellow!20] (x1l1) ++ (3em,0) rectangle ++(3em,1.5em);      
      \draw[draw=black,fill=yellow!20] (x1l1) ++ (6.5em,0) rectangle ++(3em,1.5em);      
      \draw[draw=black,fill=yellow!20] (x1l1) ++ (10em,0) rectangle ++(3em,1.5em);   
      \draw[draw=black,fill=yellow!20] (x1l1) ++ (13.5em,0) rectangle ++(3em,1.5em);          
      \draw[draw=black,fill=yellow!20] (x1l1) ++ (17em,0) rectangle ++(3em,1.5em);       

\draw[draw=black,fill=yellow] (x1l4) ++ (10em,2em) rectangle ++(3em,1.5em) node[pos=.5] {$\localproj{P}_{i^*}$};   
\end{tikzpicture}
\caption{The layers $L_1,..., L_4$ reach a bad string for some term $\localproj{P}_{i^*}$ that does not necessarily belong to any layer, and then we consider the light-cone of $\localproj{P}_{i*}$ in these layers. Notice that the light-cone is defined only within the layers, each consisting of non-overlapping terms, and {\it not} by considering {\it all} projectors which touch any qudit in $\localproj{P}_{i^*}$, then all projectors which touch those, etc.)}
\label{fig:lightcone-layers}
\end{figure}
 
\begin{lemma}[The light-cone argument]
\label{lem:lightcone}
For some initial string $x$, let $L_1, ..., L_{\ell}$ each be a sequence of non-overlapping projectors such that $L_{\ell}...L_1\ket{x}$ contains in its support a 
  string $w^*$ which is bad for $H$. Then, there is a $O(k^{\ell})$-steps path from $x$ to a bad string for $H$ (which could be different than 
  $w^*$). 
\end{lemma}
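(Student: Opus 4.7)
The plan is to first identify a term $\localproj{P}_{i^*}$ of $H$ for which $w^*$ is bad, then define its \emph{light-cone} inside the layered circuit $L_\ell \cdots L_1$, argue that restricting the circuit to only the projections in the light-cone still produces a state whose support contains a string bad for $\localproj{P}_{i^*}$, and finally extract a short path by applying \Cref{lem:projections-to-path} layer by layer.

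Fix such a term $\localproj{P}_{i^*}$, let $Q_{\ell+1}$ be its $k$-qudit support, and define recursively, for $j = \ell, \ell-1, \ldots, 1$,
\[
\lightcone{L}_j = \{P \in L_j : \mathrm{supp}(P) \cap Q_{j+1} \neq \emptyset\}, \qquad Q_j = Q_{j+1} \cup \bigcup_{P \in \lightcone{L}_j} \mathrm{supp}(P),
\]
and set $L_j^{\mathrm{out}} := L_j \setminus \lightcone{L}_j$. Two structural facts are used repeatedly: (i) $Q_{\ell+1} \subseteq Q_\ell \subseteq \cdots \subseteq Q_1$, so $Q_{\ell+1}$ sits inside every $Q_j$; and (ii) since $L_j$ is non-overlapping, every $P \in L_j^{\mathrm{out}}$ is supported outside every qudit touched by $\lightcone{L}_j$, and by definition outside $Q_{j+1}$, hence outside $Q_j$. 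In particular, for any $i > j$, $\lightcone{L}_i$ is supported inside $Q_i \subseteq Q_j$ and so commutes with every element of $L_j^{\mathrm{out}}$.

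The first key step is to show that $\lightcone{L}_\ell \cdots \lightcone{L}_1 \ket{x}$ contains a string bad for $\localproj{P}_{i^*}$. I remove the non-light-cone blocks one layer at a time, top-down, tracking the set $R(\ket{\psi}) := \{y|_{Q_{\ell+1}} : y \in \mathrm{supp}(\ket{\psi})\}$ of restrictions to $Q_{\ell+1}$. At stage $j$ the current expression is $\lightcone{L}_\ell \cdots \lightcone{L}_{j+1} L_j \cdots L_1 \ket{x}$; I split $L_j = \lightcone{L}_j L_j^{\mathrm{out}}$ (they commute within the layer), use (ii) to commute $L_j^{\mathrm{out}}$ past all light-cone layers above $j$ and place it at the outermost position, and note that $L_j^{\mathrm{out}}$ then acts on qudits disjoint from $Q_{\ell+1} \subseteq Q_j$. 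The ``moreover'' part of \Cref{rem:locality-phi}, applied to the non-negative state under $L_j^{\mathrm{out}}$, implies that $R$ can only shrink under the action of $L_j^{\mathrm{out}}$, i.e.\ removing $L_j^{\mathrm{out}}$ only enlarges $R$. Chaining these inclusions across $j = \ell, \ell-1, \ldots, 1$ gives
\[
R(L_\ell \cdots L_1 \ket{x}) \;\subseteq\; R(\lightcone{L}_\ell \cdots \lightcone{L}_1 \ket{x}).
\]
Since $w^*|_{Q_{\ell+1}}$ belongs to the left-hand side and is bad for $\localproj{P}_{i^*}$, some $w' \in \mathrm{supp}(\lightcone{L}_\ell \cdots \lightcone{L}_1 \ket{x})$ with $w'|_{Q_{\ell+1}} = w^*|_{Q_{\ell+1}}$ is bad for $\localproj{P}_{i^*}$, hence for $H$.

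The second step is a direct application of \Cref{lem:projections-to-path}: for each $j$, that lemma converts the non-overlapping family $\lightcone{L}_j$ acting on the non-negative state $\lightcone{L}_{j-1} \cdots \lightcone{L}_1 \ket{x}$, together with the bad witness $y_j \in \mathrm{supp}(\lightcone{L}_j \cdots \lightcone{L}_1 \ket{x})$, into a $|\lightcone{L}_j|$-step path in $G(H)$ from some $y_{j-1}$ in $\mathrm{supp}(\lightcone{L}_{j-1} \cdots \lightcone{L}_1 \ket{x})$ to $y_j$. Starting from $y_\ell = w'$ and iterating down to $y_0 = x$ produces a path in $G(H)$ from $x$ to $w'$ of total length $\sum_{j=1}^\ell |\lightcone{L}_j|$. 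Since $\lightcone{L}_j$ is non-overlapping and each of its elements touches at least one qudit of $Q_{j+1}$, we have $|\lightcone{L}_j| \le |Q_{j+1}|$, and the recurrence $|Q_j| \le k|Q_{j+1}|$ (each projection contributes at most $k-1$ new qudits to $Q_j$) with $|Q_{\ell+1}| = k$ yields $|Q_j| \le k^{\ell+2-j}$ and total path length $O(k^\ell)$, as claimed. The main delicate point is the commutation/locality bookkeeping: one needs exactly the nested structure $\mathrm{supp}(\lightcone{L}_i) \subseteq Q_i \subseteq Q_j$ for $i > j$ together with $L_j^{\mathrm{out}}$ being disjoint from $Q_j$ (which uses the non-overlapping property of the layer $L_j$ essentially), so that every $L_j^{\mathrm{out}}$ can be pulled to the outermost position and \Cref{rem:locality-phi} can transport the bad-restriction witness backwards through the whole circuit.
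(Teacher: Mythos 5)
Your proof is correct and follows essentially the same approach as the paper's: you define the light-cone of $\localproj{P}_{i^*}$ layer by layer, commute the out-of-light-cone projectors to the outermost position and discard them using the ``moreover'' part of \Cref{rem:locality-phi}, and then extract the path by applying \Cref{lem:projections-to-path} layer by layer with the same $k^{O(\ell)}$ bookkeeping. The only cosmetic differences are that you perform the commute-and-discard step incrementally from the top layer down rather than in one combined rearrangement, and you bound $|\lightcone{L}_j|$ by $|Q_{j+1}|$ rather than $|Q_j|$, which is a slightly tighter (though equivalent in order) estimate.
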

\begin{proof}
Since $w^*$ is bad for $H$, we have that 
for some $i^* \in [m]$, $\bra{w^*}\globalproj{P}_{i^*}\ket{w^*} = 0$.
Let $\lightcone{L}_{\ell} \subseteq L_{\ell}$ be the projectors of $L_{\ell}$
  that touch some qudit of $\localproj{P}_{i^*}$. 
We define recursively $\lightcone{L}_{j-1}$ as the set of projectors in
  $L_{j-1}$ that overlap with some projector in $\bigcup_{j' \geq j}
  \lightcone{L}_{j'}$. These are the {\it layers} of what we call the {\it
  light-cone} of $\localproj{P}_{i*}$ and we depict it in \Cref{fig:lightcone-layers}.
Let us also define the complement of $\lightcone{L}_j$ 
in $L_j$ to be $\outsidelightcone{L}_j$.  

For convenience, 
set $L_{\ell+1} = \lightcone{L}_{\ell+1} = \{\localproj{P}_{i^*}\}$. Let $D_j$ be the set 
of qudits touched by the terms in  $\bigcup_{j' \geq j} \lightcone{L}_{j'}$. We prove 
that  $|D_j|\le k^{\ell -j +2}$.
We prove this by a downward induction from $j = \ell+1$ to $j = 1$. The basis step
  is true since $\localproj{P}_{i^*}$ is $k$-local, and so $|D_{\ell+1}|=k$. 
  Now, assume this is true for 
  the $j$'th layer. $D_{j-1}$ is defined by adding to $D_j$ the qudits touched by the next layer of the lightcone,  $\lightcone{L}_{j-1}$.  
  By definition of the lightcone, these are 
  all qudits touched by terms in $L_{j-1}$
  that overlap $D_j$. 
  Since these terms are non-overlapping and 
  $k$-local, this 
  can at most multiply the number of qudits 
 already in $D_j$ by a factor of 
  $k$. 
We have that the set $D_{1}$ of qudits within the entire lightcone originating from 
$P_{i^*}$ contains at most 
$k^{\ell+1}$ qudits.

The terms in $L_j$ commute (as they are non-overlapping), and thus $L_{j}=\outsidelightcone{L}_{j}\lightcone{L}_{j}$. It follows that 
\begin{equation}\label{eq:layers}L_{\ell}....L_1\ket{x}= \outsidelightcone{L}_{\ell}\lightcone{L}_{\ell}\outsidelightcone{L}_{\ell-1}\lightcone{L}_{\ell-1}...\outsidelightcone{L}_1\lightcone{L}_{1}\ket{x} =
\outsidelightcone{L}_{\ell}\outsidelightcone{L}_{\ell-1}...\outsidelightcone{L}_1\lightcone{L}_{\ell}...\lightcone{L}_{1}\ket{x},\end{equation}
where in the second equality, we use in fact 
an iterative argument (that is common in light-cone reasoning, see, e.g.,  \cite{AharonovALV09,BravyiGK}): 
we notice that the projectors in $\outsidelightcone{L}_j$ commute with the projectors in $\lightcone{L}_{j'}$  
for all $j'\ge j$, and thus they can be commuted one by one across the lightcone operators, to the left. 
The fact that they commute follows by definition: 
level $j$ of the lightcone, $\lightcone{L}_{j}$, contains {\it all} terms in $L_{j}$ that overlap with any term $\lightcone{L}_{j'}$, for $j' > j'$; 
thus the remaining terms in $L_j$, namely 
$L'_j$, do not overlap the upper 
layers of the lightcone, and thus commute with them. 

From Equation \ref{eq:layers} we deduce that 
we can first apply on $x$ all terms 
in the lightcone, and delay all terms outside of the 
lightcone to later. From this, we can show that $\lightcone{L}_{\ell}...\lightcone{L}_{1}\ket{x}$ also contains a bad string for $\globalproj{P}_{i^*}$, and this will complete the 
proof. To do this, 
let $Q$ be the set of positions where the term  %
$P_{i^*}$ 
acts non-trivially. We claim that the application 
of the terms outside of the lightcone, which 
do not touch $Q$, couldn't have added a string which is bad with respect to $P_{i^*}$, unless 
such a string was there before. 
This can be deduced from \Cref{rem:locality-phi}, 
which when applied iteratively gives that 
\begin{align}\label{eq:containment-bad-string}
\{y_{Q} : y \in supp(\outsidelightcone{L}_{\ell}\outsidelightcone{L}_{\ell-1}...\outsidelightcone{L}_1\lightcone{L}_{\ell}...\lightcone{L}_{1}\ket{x})\} \subseteq \{y_{Q} : y \in supp(\lightcone{L}_{\ell}...\lightcone{L}_{1}\ket{x})\}. 
\end{align}

Since
  $\outsidelightcone{L}_{\ell}\outsidelightcone{L}_{\ell-1}...\outsidelightcone{L}_1\lightcone{L}_{\ell}...\lightcone{L}_{1}\ket{x}$
  contains a bad string $w^*$ for $\globalproj{P}_{i*}$, from
  \Cref{eq:containment-bad-string} we have that $\lightcone{L}_{\ell}...\lightcone{L}_{1}\ket{x}$ contains a string $w'$ such that $w^*_{Q} = w'_{Q}$, thus $w'$ is also bad for $\globalproj{P}_{i^*}$.

Finally, we can use \Cref{lem:projections-to-path} together with a (highly wasteful) bound on 
$|\lightcone{L}_j|\le |D_j|\le k^{\ell-j+2}$ recursively for
  $\lightcone{L}_{\ell}...\lightcone{L}_1\ket{x}$: for any string $y$ in
  $\lightcone{L}_j...\lightcone{L}_1\ket{x}$, there exists a string $y'$ in
  $\lightcone{L}_{j-1}...\lightcone{L}_1\ket{x}$, such that there is a $|\lightcone{L}_j|\le k^{\ell -j
  +2}$-step path from $y'$ to $y$ in $G(H)$. Hence,  there is a path of size
  $\sum_{j'=1}^{\ell} k^{\ell-j'+2} = \sum_{j'=2}^{\ell+1} k^{j'}=O(k^{\ell})$ from $x$ to $w'$ in $G(H)$.
\end{proof}

\subsection{Proof of Lemma \ref{lem:constant-step-path}}\label{sec:constant-path}
We can finally prove \Cref{lem:constant-step-path} by composing \Cref{lem:constant-layers,lem:lightcone}.

\newtheorem*{thm:repeat-constant}{\Cref{lem:constant-step-path} (restated)}
\begin{thm:repeat-constant}
\bodyconstantsteppath
\end{thm:repeat-constant}
\begin{proof}
Let $\ell^* = \frac{2kd}{\eps} \log_{\left(1 + \frac{\eps}{4}\right)}|\Sigma|$.
By \Cref{lem:constant-layers}, there exists a sequence of sets $L_1,...L_{\ell}$, for
  some $\ell \leq \ell^*$, such that each $L_i$ consists of non-overlapping projectors,
  and $ L_{\ell}...L_1\ket{x}$ contains a bad string. We finish the proof by using $L_1, ..., L_{\ell}$ in \Cref{lem:lightcone} for some $\ell \leq \frac{2kd}{\eps}
  \log_{\left(1 + \frac{\eps}{4}\right)}|\Sigma|$.
\end{proof}

\section{Negligible vs. high frustration}
 \label{sec:negligible}
In this section we show that we can replace the frustration-free property by allowing the stoquastic Hamiltonian to be negligibly-frustrated in yes-instances. 

Our proof follows by showing that in this case, there exists some
string $x$ in the groundstate of the Hamiltonian that cannot reach a bad string
in a constant number of steps, and therefore the same verification algorithm of \Cref{thm:main} would still work.

In order to prove that, we first show that if the groundenergy of the Hamiltonian is very small, then there is a also a subset state $\ket{S}$ with very low energy in respect to $H$. Then, in order
to achieve our goal, we need bound two quantities. First, we need to show that 
the number of bad strings in $\ket{S}$ is small. Secondly, we consider
the good strings that are connected to the strings in $\ket{S}$ but are not in
this state. We notice that these strings might be harmful, since they could be connected to bad strings and then strings in the ``border'' of $S$ could be close to bad strings. Finally, we can show that there must be some string in $S$ that is far from all of these strings if the frustration of the Hamiltonian is sufficiently small. This follows directly from the upper bound on the number of such strings and from the structure of $G(H)$.

\medskip 
Let us start then by bounding the weight of bad strings in the groundstate of slightly
frustrated Hamiltonians.

\begin{lemma}\label{lem:connection-frustration-bad-strings}
Let $H$ be a stoquastic Hamiltonian  and $\ket{\psi} = \sum_{x} \alpha_x \ket{x}$ be a positive state. Let $B$ be the set of bad strings for $H$. Then $\sum_{x \in B} \alpha_x^2 \leq m\bra{\psi}H\ket{\psi}$
\end{lemma}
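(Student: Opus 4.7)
The plan is to leverage the observation that a bad string $x$ for term $i$ is orthogonal to the groundspace of $\tilde{H}_i$, so its amplitude in $\ket{\psi}$ is contributed entirely by the ``orthogonal'' part of $\ket{\psi}$, whose squared norm is exactly the energy of the term. First I would invoke \Cref{rem:term-projector} to assume without loss of generality that each local term is a projector, writing $\tilde{H}_i = I - \tilde{P}_i$ where $\tilde{P}_i$ is the projection onto the groundspace of $\tilde{H}_i$.

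Next, let $B_i = \{x \in \Sigma^n : \bra{x}\tilde{P}_i\ket{x} = 0\}$ be the set of $\tilde{P}_i$-bad strings, so that $B = \bigcup_{i=1}^m B_i$ by \Cref{def:bad-string}. For $x \in B_i$, since $\tilde{P}_i$ is PSD and $\bra{x}\tilde{P}_i\ket{x}=0$, we have $\tilde{P}_i\ket{x}=0$, and therefore
\begin{equation*}
\alpha_x \;=\; \braket{x}{\psi} \;=\; \bra{x}(\tilde{P}_i + \tilde{H}_i)\ket{\psi} \;=\; \bra{x}\tilde{H}_i\ket{\psi}.
\end{equation*}
Using that $\tilde{H}_i$ is a projector, so $\bra{\psi}\tilde{H}_i\ket{\psi} = \norm{\tilde{H}_i\ket{\psi}}^2$, and that the set $\{\ket{x}\}_{x \in B_i}$ is orthonormal, I would write
\begin{equation*}
\bra{\psi}\tilde{H}_i\ket{\psi} \;=\; \norm{\tilde{H}_i\ket{\psi}}^2 \;\geq\; \sum_{x \in B_i} \left| \bra{x}\tilde{H}_i\ket{\psi} \right|^2 \;=\; \sum_{x \in B_i} \alpha_x^2.
\end{equation*}

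Finally, summing over all $i \in [m]$ and using $B = \bigcup_i B_i$ (so each $x \in B$ is counted at least once on the right-hand side, and all $\alpha_x^2 \ge 0$ since $\ket{\psi}$ is non-negative),
\begin{equation*}
m\bra{\psi}H\ket{\psi} \;=\; \sum_{i=1}^{m} \bra{\psi}\tilde{H}_i\ket{\psi} \;\geq\; \sum_{i=1}^{m} \sum_{x \in B_i} \alpha_x^2 \;\geq\; \sum_{x \in B} \alpha_x^2,
\end{equation*}
which is the desired bound. There is no real obstacle here: the whole argument hinges on the elementary identity that, for a projector $\tilde{H}_i$ and a bad string $x$, one has $\alpha_x = \bra{x}\tilde{H}_i\ket{\psi}$; once this is noticed, the rest is just Parseval applied to the image $\tilde{H}_i\ket{\psi}$ followed by a union-bound-style sum over $i$.
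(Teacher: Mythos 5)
Your proof is correct and amounts to essentially the same argument as the paper's, reframed: both hinge on the observation that a bad string $x$ for term $i$ satisfies $\tilde{P}_i\ket{x}=0$, and both derive the per-term bound $\bra{\psi}\tilde{H}_i\ket{\psi}\ge\sum_{x\in B_i}\alpha_x^2$ before summing over $i$ and using $B=\bigcup_i B_i$. The paper does the per-term step by splitting $\ket{\psi}$ into its good and bad components and expanding $\bra{\psi}(I-\tilde{P}_i)\ket{\psi}$ directly (noting $\tilde{P}_i$ annihilates the bad part), whereas you observe $\alpha_x=\bra{x}\tilde{H}_i\ket{\psi}$ for $x\in B_i$ and then apply Bessel's inequality on the orthonormal set $\{\ket{x}\}_{x\in B_i}$; these are two phrasings of the same estimate. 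Your version has the minor virtue of making it transparent that neither the stoquastic structure nor the non-negativity of $\ket{\psi}$ is actually used (one only needs $\alpha_x^2=|\alpha_x|^2$ and the PSD/projector structure guaranteed by \Cref{rem:term-projector}).
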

\begin{proof}
Let $B_i$ be the set of bad strings for some term $H_i$, 
$\ket{\psi} = \alpha \ket{\psi_{G_i}} + \beta \ket{\psi_{B_i}}$ where
$\alpha = \sqrt{\sum_{x \not\in B_i} \alpha_x^2}$, $\beta = \sqrt{\sum_{x \in B_i} \alpha_x^2}$, 
$\ket{\psi_{G_i}} = \frac{1}{\alpha} \sum_{x \not\in B_i} \alpha_x \ket{x}$ and $\ket{\psi_{B_i}} = \frac{1}{\beta} \sum_{x \in B_i} \alpha_x \ket{x}$. 

We have that
\begin{align*}
    \bra{\psi}H_i\ket{\psi} &
    = (\alpha\bra{\psi_{G_i}}+ \beta\bra{\psi_{B_i}})(I - P_i)(\alpha\ket{\psi_{G_i}}+ \beta\ket{\psi_{B_i}}) \\
    &=
    1 - \alpha^2 \bra{\psi_{G_i}}P_i\ket{\psi_{G_i}} \\
    &\geq
    1 - \alpha^2 \braket{\psi_{G_i}}{\psi_{G_i}} \\
    &=  \beta^2.
\end{align*}
where in the first equality we assume that $H_i$ is a projector (see \Cref{rem:term-projector}) and in the second inequality we use the fact that that $\ket{\psi_{G_i}}$ is positive and $P_i$ is stoquastic.

We have then that
\begin{align*}
\bra{\psi}H\ket{\psi} \geq \frac{1}{m} \sum_i \sum_{x \in B_i} \alpha_x^2 
\geq \frac{1}{m} \sum_{x \in B} \alpha_x^2
\end{align*}
where we use in the second inequality 
that each $x\in B$ appears in at least one of the $B_i$. 
\end{proof}

\begin{lemma}[Projection on boundary lower bounds energy]\label{lem:connection-frustration-missing-strings}
Let $H$ be a stoquastic Hamiltonian  and $\ket{\psi} = \sum_{x} \alpha_x \ket{x}$ be a positive state. Let $N$ be the set of $\{x \in  supp(\ket{\psi} : \exists i \in [m], y \not\in supp(\ket{\psi}), \bra{x}P_i\ket{y} > 0\}$. Then $\bra{\psi}H\ket{\psi} \geq \frac{1}{|\Sigma|^k m} \sum_{x \in N} \alpha_x^2$.
\end{lemma}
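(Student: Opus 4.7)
The plan is to write the energy as a sum over terms, decompose the contribution of each term along the orthogonal subspaces induced by the non-negative decomposition of its groundspace projector, and show that uniformity forces a quantitative ``Cauchy--Schwarz defect'' exactly at the boundary of $\operatorname{supp}(\ket\psi)$. By Remark~\ref{rem:term-projector} we may assume $H_i = I - \globalproj{P}_i$, so
\[
\bra\psi H\ket\psi = \frac{1}{m}\sum_i \bra\psi(I-\globalproj{P}_i)\ket\psi.
\]
By uniformity, the non-negative decomposition of $\globalproj{P}_i$ is $\globalproj{P}_i = \sum_{j,z} \kb{\globalproj{T}_{i,j,z}}$ with the subsets $\globalproj{T}_{i,j,z}\subseteq \Sigma^n$ pairwise disjoint (Remark~\ref{rem:unique-phi}). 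Hence if $V_{i,j,z}=\operatorname{span}\{\ket{w}:w\in \globalproj{T}_{i,j,z}\}$ and $V_{i,\mathrm{bad}}$ is the span of the $P_i$-bad basis states, the whole space splits as $V_{i,\mathrm{bad}}\oplus \bigoplus_{j,z} V_{i,j,z}$, and within each $V_{i,j,z}$ the projector $\globalproj P_i$ acts as the rank-one $\kb{\globalproj{T}_{i,j,z}}$.

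Write $\Pi_{i,j,z}$ for the orthogonal projector onto $V_{i,j,z}$. Then
\[
\bra\psi(I-\globalproj{P}_i)\ket\psi \;=\; \sum_{x\in B_i}\alpha_x^2 \;+\; \sum_{j,z}\Bigl(\bigl\|\Pi_{i,j,z}\ket\psi\bigr\|^2 - \bigl|\braket{\globalproj{T}_{i,j,z}}{\psi}\bigr|^2\Bigr).
\]
For a fixed $(i,j,z)$, denoting $t = |\globalproj{T}_{i,j,z}|\le |\Sigma|^k$ and using $\braket{\globalproj{T}_{i,j,z}}{\psi} = \frac{1}{\sqrt t}\sum_{w\in \globalproj{T}_{i,j,z}}\alpha_w$, I would use the elementary identity
\[
t\sum_{w\in \globalproj{T}_{i,j,z}}\alpha_w^2 - \Bigl(\sum_{w\in \globalproj{T}_{i,j,z}}\alpha_w\Bigr)^2 \;=\; \tfrac12\!\!\sum_{w,w'\in \globalproj{T}_{i,j,z}}\!\!(\alpha_w-\alpha_{w'})^2
\]
to rewrite each summand as $\tfrac{1}{2t}\sum_{w,w'}(\alpha_w-\alpha_{w'})^2$. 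Restricting the double sum to pairs with $w\in \operatorname{supp}(\ket\psi)$ and $w'\notin \operatorname{supp}(\ket\psi)$ (and using $\alpha_{w'}=0$) yields the lower bound $\frac{|\globalproj{T}_{i,j,z}\setminus \operatorname{supp}(\ket\psi)|}{t}\sum_{w\in \globalproj{T}_{i,j,z}\cap \operatorname{supp}(\ket\psi)}\alpha_w^2$, which, whenever $\globalproj{T}_{i,j,z}$ meets both $\operatorname{supp}(\ket\psi)$ and its complement, is at least $\frac{1}{|\Sigma|^k}\sum_{w\in \globalproj{T}_{i,j,z}\cap\operatorname{supp}(\ket\psi)}\alpha_w^2$.

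Finally I would tie this to $N$: for every $x\in N$ there exist $i$ and $y\notin \operatorname{supp}(\ket\psi)$ with $\bra x \globalproj{P}_i\ket y>0$, and by non-negativity of the decomposition this forces some $(j,z)$ with $x,y\in \globalproj{T}_{i,j,z}$; in particular $\globalproj{T}_{i,j,z}$ is a ``boundary'' subset and $\alpha_x^2$ is counted by the above lower bound. Summing the per-term inequalities and keeping only these boundary contributions (dropping the $\sum_{x\in B_i}\alpha_x^2$ part and any non-boundary $(j,z)$, all of which are non-negative) gives
\[
\bra\psi H\ket\psi \;\ge\; \frac{1}{m|\Sigma|^k}\sum_{x\in N}\alpha_x^2,
\]
as required. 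The only subtlety I anticipate is bookkeeping: an $x\in N$ may lie in several boundary subsets $\globalproj{T}_{i,j,z}$, but since we only need a lower bound it suffices to select any single witnessing $(i,j,z)$ per $x$, so no double-counting issue arises.
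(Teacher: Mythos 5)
Your proof is correct, and it takes a genuinely different route from the one in the paper. The paper's argument first splits $\ket\psi = \alpha\ket{\psi_{M_i}} + \beta\ket{\psi_{N_i}}$ into its boundary ($N_i$) and non-boundary parts, then observes that the cross-term $\bra{\psi_{M_i}}\globalproj{P}_i\ket{\psi_{N_i}}$ vanishes because the relation $\bra{x}\globalproj{P}_i\ket{y}>0$ partitions strings into equivalence classes that either lie entirely inside $S\setminus N_i$ or meet $N_i$; finally it bounds $\bra{\psi_{N_i}}\globalproj{P}_i\ket{\psi_{N_i}}\le\frac{|\Sigma|^k-1}{|\Sigma|^k}$ by a Cauchy--Schwarz estimate of the form $\braket{\phi}{T}\braket{T}{\phi}\le|\operatorname{supp}\phi|/|T|$. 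You instead decompose the \emph{energy} of $H_i$ block-by-block along the orthogonal subspaces $V_{i,j,z}$ spanned by the sets $\globalproj{T}_{i,j,z}$ (plus the bad subspace), identify each block's contribution with the ``variance'' $\frac{1}{2t}\sum_{w,w'}(\alpha_w-\alpha_{w'})^2$ via the Lagrange identity, and read off the boundary contribution directly from the pairs $(w,w')$ with exactly one endpoint outside $\operatorname{supp}(\ket\psi)$. Your route never needs the cross-term-vanishing argument, since the block decomposition is orthogonal from the start, and makes the source of the $\frac{1}{|\Sigma|^k}$ factor transparent (it is $1/|T_{i,j,z}|$). The only caveat worth flagging is that your argument (like the paper's) uses the uniform structure $\globalproj{P}_i=\sum_{j,z}\kb{\globalproj{T}_{i,j,z}}$, so the lemma as proved really concerns \emph{uniform} stoquastic Hamiltonians even though the statement omits the word.
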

\begin{proof}
Let $N_i = \{x \in  supp(\ket{\psi} : \exists y \not\in supp(\ket{\psi}), \bra{x}P_i\ket{y} > 0\}$, 
$\ket{\psi} = \alpha \ket{\psi_{M_i}} + \beta \ket{\psi_{N_i}}$ where
$\alpha = \sqrt{\sum_{x \not\in N_i} \alpha_x^2}$, $\beta = \sqrt{\sum_{x \in N_i} \alpha_x^2}$, 
$\ket{\psi_{M_i}} = \frac{1}{\alpha} \sum_{x \not\in N_i} \alpha_x \ket{x}$ and $\ket{\psi_{N_i}} = \frac{1}{\beta} \sum_{x \in N_i} \alpha_x \ket{x}$.

We have that
\begin{align}
    \bra{\psi}H_i\ket{\psi} &
    =(\alpha\bra{\psi_{M_i}}+ \beta\bra{\psi_{N_i}})(I - P_i)(\alpha\ket{\psi_{M_i}}+ \beta\ket{\psi_{N_i}}) \nonumber \\
    &= \label{eq:middle-missing-strings}
    1 - \alpha^2 \bra{\psi_{M_i}}P_i\ket{\psi_{M_i}} -  \beta^2\bra{\psi_{N_i}}P_i\ket{\psi_{N_i}}
\end{align}
where in the first equality we use the fact that if
$\bra{x}P_i\ket{y} > 0$ and $x\in N_i$, then 
there exists a $z \not\in supp(\ket{\psi})$
with
$\bra{x}P_i\ket{z} > 0$, which implies that 
$\bra{y}P_i\ket{z} > 0$ and therefore $y \in N_i$.

We will show below that
\begin{equation}\label{eq:bound}
    \bra{\psi_{N_i}}P_i\ket{\psi_{N_i}} \leq \frac{|\Sigma|^k-1}{|\Sigma|^k}.\end{equation}
Using this and     \Cref{eq:middle-missing-strings} we have
\begin{align*}
    \bra{\psi}H_i\ket{\psi} = 1 - \alpha^2 \bra{\psi_{M_i}}P_i\ket{\psi_{M_i}} -  \beta^2\bra{\psi_{N_i}}P_i\ket{\psi_{N_i}}
    \geq 1- \alpha^2 
    - \frac{|\Sigma|^k-1}{|\Sigma|^k}\beta^2 =   \frac{1}{|\Sigma|^k} \beta^2 = \frac{1}{|\Sigma|^k}\sum_{x \in N_i} \alpha_x^2.
\end{align*}

This implies the desired claim since 
\begin{align*}
\bra{\psi}H\ket{\psi} \geq \frac{1}{m |\Sigma|^k} \sum_i \sum_{x \in N_i} \alpha_x^2 
\geq \frac{1}{m|\Sigma|^k}\sum_{x \in N} \alpha_x^2 
\end{align*}
where we used the fact that $N\subseteq \cup_i N_i$.

It remains to prove \Cref{eq:bound}. 
Consider some $T \subseteq \{0,1\}^n$ and positive $\ket{\phi} = \sum_{x \in T} \beta_x \ket{x}$, it follows that
\begin{align}\label{lem:projection-missing-string}
\braket{\phi}{T}\braket{T}{\phi} &= \sum_{x,y \in T} \frac{\beta_x\beta_y}{|T|}\leq 
\sqrt{\frac{|T \cap supp(\ket{\phi})|^2}{|T|^2} \sum_{x,y \in  T} \beta_x^2 \beta_y^2} = \frac{|supp(\ket{\phi})|}{|T|},
\end{align}
where we use Cauchy-Schwartz inequality.

Given the decomposition $P_i = \sum_{j,z} \kb{T_{i,j,z}}$, we can define $\ket{\psi_{N_i}} = \gamma_{j,z} \sum_{j,z}\ket{\psi_{N_i,j,z}}$, 
with $supp(\ket{\psi_{N_i,j,z}}) \subseteq T_{i,j,z}$ and then 
\begin{align}\label{eq:conclusion-missing-strings}
\bra{\psi_{N_i}}P_i\ket{\psi_{N_i}} = \sum_{j,z} \gamma_{j,z}^2 \braket{\psi_{N_i,j,z}}{T_{i,j,z}}
\braket{T_{i,j,z}}{\psi_{N_i,j,z}} \leq \sum_{j,z}
\gamma_{j,z}^2 \frac{|supp(\ket{\psi_{N_i,j,z}})|}{|T_{i,j,z}|} \leq \frac{|\Sigma|^k-1}{|\Sigma|^k},
\end{align}
where the first inequality follows from
\Cref{lem:projection-missing-string} and in the second inequality we use that $|T_{i,j,z}| \leq |\Sigma|^k$ and $|supp(\ket{\psi_{N_i,j,z}})| \leq |T_{i,j,z}| - 1$ (by definition of $N_i$).
\end{proof}

We now prove that there exists a low-energy state with enough structure.

\begin{lemma}[Existence of a "nice" low energy state] \label{lemma:structure-gstate-negligible}
Let $H$ be a stoquastic Hamiltonian with non-zero ground-energy at most $\frac{1}{f(n)}$. Then there exists some non-negative state $\ket{\psi}$ that does not contain bad strings for $H$, contains only strings with amplitude at least $\delta = \frac{1}{\sqrt{g(n)|S|}}$ and has frustration at most $\frac{1}{f(n)(1 - \frac{m}{f(n)} - \frac{1}{g(n)} )}$ .
\end{lemma}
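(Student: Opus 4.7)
The plan is to construct $\ket{\psi}$ by starting from a non-negative groundstate of $H$ and then truncating it: first removing the bad strings, then removing the strings with very small amplitude, then renormalizing. Specifically, by \Cref{lem:gspace-structure}, the groundspace of a stoquastic $H$ is spanned by non-negative states, so we can pick a non-negative groundstate $\ket{\psi_0}=\sum_{x\in S_0}\alpha_x\ket{x}$ with $\bra{\psi_0}H\ket{\psi_0}\le \tfrac{1}{f(n)}$. Let $\ket{\psi_{B}}$ be its restriction to the $H$-bad strings, and let $\ket{\psi_{S}}$ be the restriction of $\ket{\psi_0}-\ket{\psi_B}$ to the strings of amplitude $<\delta$. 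Define $\ket{\psi}:=\ket{\psi_0}-\ket{\psi_B}-\ket{\psi_S}$, so $\ket{\psi}$ is non-negative, contains no bad strings, and every string in its support has amplitude at least $\delta$.

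Next I bound the weight removed. By \Cref{lem:connection-frustration-bad-strings} applied to $\ket{\psi_0}$, we have $\|\ket{\psi_B}\|^2\le m\bra{\psi_0}H\ket{\psi_0}\le \tfrac{m}{f(n)}$. For the small-amplitude strings, taking $|S|$ to be the size of $\mathrm{supp}(\ket{\psi_0})$ (or any convenient upper bound on the support size that makes $\delta=\tfrac{1}{\sqrt{g(n)|S|}}$ meaningful), there are at most $|S|$ such strings and each contributes squared amplitude at most $\delta^2$, so $\|\ket{\psi_S}\|^2\le |S|\delta^2=\tfrac{1}{g(n)}$. Thus $\|\ket{\psi}\|^2\ge 1-\tfrac{m}{f(n)}-\tfrac{1}{g(n)}$.

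The key step is to upper-bound $\bra{\psi}H\ket{\psi}$ by $\bra{\psi_0}H\ket{\psi_0}$, which is where stoquasticity is crucial. Write $\ket{\psi_0}=\ket{\psi}+\ket{\psi'}$ where $\ket{\psi'}=\ket{\psi_B}+\ket{\psi_S}$. Both $\ket{\psi}$ and $\ket{\psi'}$ are non-negative and their supports are disjoint, so the only contributions to $\bra{\psi}H\ket{\psi'}$ come from off-diagonal matrix elements $\bra{x}H\ket{y}$ with $x\neq y$; since $H$ is stoquastic these are non-positive, yielding $\bra{\psi}H\ket{\psi'}\le 0$. Combined with $H\succeq 0$ (we may assume each $H_i=I-P_i$ by \Cref{rem:term-projector}), this gives
\[
\bra{\psi_0}H\ket{\psi_0}=\bra{\psi}H\ket{\psi}+2\bra{\psi}H\ket{\psi'}+\bra{\psi'}H\ket{\psi'}\ge \bra{\psi}H\ket{\psi}.
\]
Normalizing $\ket{\psi}$ and dividing, we conclude that the frustration of the normalized state is at most $\tfrac{1/f(n)}{1-m/f(n)-1/g(n)}$, and the amplitudes only increase under normalization so they remain at least $\delta$.

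The main obstacle I expect is the stoquasticity step: without non-positive off-diagonals one would need a more delicate argument to control the energy of the truncated state, since in general truncating a low-energy state can increase its normalized energy significantly. Once that monotonicity is in hand, the rest is bookkeeping of the two weight budgets (bad strings via \Cref{lem:connection-frustration-bad-strings}, small-amplitude strings via a trivial counting argument). A minor care point is the interpretation of $|S|$ in $\delta$: I read it as referring to the support of the initial groundstate $\ket{\psi_0}$, which is the bound that naturally appears in the truncation step; the final support of $\ket{\psi}$ is at most this, so the stated bound on amplitudes is consistent.
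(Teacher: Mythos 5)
Your construction (restrict a non-negative groundstate of $H$ to its good, not-too-small-amplitude strings and renormalize) is exactly the paper's, and your accounting of the removed weight (bad strings via \Cref{lem:connection-frustration-bad-strings}, small-amplitude strings by counting) also matches. However, the step on which everything hinges is not established. From $\bra{\psi}H\ket{\psi'}\le 0$ (correct: disjoint supports of non-negative vectors, non-positive off-diagonals) and $\bra{\psi'}H\ket{\psi'}\ge 0$ (correct: $H\succeq 0$), you conclude $2\bra{\psi}H\ket{\psi'}+\bra{\psi'}H\ket{\psi'}\ge 0$; but these two facts point in opposite directions and cannot be combined this way. A priori the negative cross term can dominate the non-negative direct term, in which case truncation strictly \emph{increases} the unnormalized energy. (As a sanity check: for $H=\tfrac12\bigl[(I-\kb{+})+(I-\kb{0})\bigr]$ on one qubit, the non-negative groundstate has $0$ good and $1$ bad, and truncating to $\ket{0}$ raises the energy from $\approx 0.146$ to $0.25$, exceeding the lemma's stated bound with $g\to\infty$.) For what it is worth, the paper's write-up of this lemma asserts the analogous inequality, $\bra{\phi}H\ket{\phi}\ge\sum_{x,y\in G}\alpha_x\alpha_y\bra{x}H\ket{y}$, without further comment; the dropped sums also mix non-negative $T$-$T$ diagonal pieces with non-positive cross terms, so this has the same issue. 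Your proposal reproduces that structure, including the gap.

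A repair inside the same framework is to use that $\ket{\psi_0}$ is an eigenvector: $H\ket{\psi_0}=\eta\ket{\psi_0}$ with $\eta\le\tfrac1{f(n)}$. Then $\bra{\psi}H\ket{\psi_0}=\eta\|\psi\|^2$ and $\bra{\psi'}H\ket{\psi_0}=\eta\|\psi'\|^2$, and since $\bra{\psi}H\ket{\psi'}=\bra{\psi'}H\ket{\psi}$ one gets the exact identity $\bra{\psi}H\ket{\psi}=\eta\bigl(\|\psi\|^2-\|\psi'\|^2\bigr)+\bra{\psi'}H\ket{\psi'}$. Bounding $\bra{\psi'}H\ket{\psi'}\le\|\psi'\|^2\le\tfrac{m}{f(n)}+\tfrac{1}{g(n)}$ and normalizing yields frustration at most $\bigl(\eta+\|\psi'\|^2\bigr)/\|\psi\|^2$, which is weaker than the lemma's stated bound by an additive $\tfrac{m}{f(n)}+\tfrac{1}{g(n)}$ in the numerator. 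Since the downstream use in \Cref{thm:exponentially-small} takes the completeness bound $c(n)$ to be much smaller than any inverse polynomial, this weaker bound still suffices after mild re-parametrization, but as written neither your argument nor the paper's actually attains the stated constant.
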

\begin{proof}
Let $\ket{\phi} = \sum_{x} \alpha_x \ket{x}$ be a non-negative groundstate of $H$, $S = supp(\ket{\phi})$,  $B$ be the set of bad strings, $L$ be the set of strings with amplitude smaller than $\delta$, $T = B \cup L$ and $G = S \setminus T$. We define $\ket{\psi} = \frac{1}{\sqrt{\sum_{x\in G} \alpha_x^2}}\sum_{x \in G} \alpha_x \ket{x}$.
We have that: 

\begin{align}
    \bra{\phi}H\ket{\phi} &= \sum_{x,y\in G}\alpha_x \alpha_y\bra{x}H\ket{y} + 
    \sum_{x\in T, y\in S}\alpha_x \alpha_y\bra{x}H\ket{y}
    + \sum_{x\in G, y\in T}\alpha_x \alpha_y\bra{x}H\ket{y}  \\
    &\geq \sum_{x,y\in G}\alpha_x \alpha_y\bra{x}H\ket{y}  \\
    &= 
\frac{\sum_{x\in G} \alpha_x^2}{\sum_{x\in G} \alpha_x^2} \sum_{x,y\in G}\alpha_x \alpha_y\bra{x}H\ket{y} \\
& = \left(1-\sum_{x \in B} \alpha_x^2 - \sum_{x \in T\setminus B} \alpha_x^2\right) \bra{\psi}H\ket{\psi} \\
&\geq \left(1-\frac{m}{f(n)} - |S|\delta^2 \right) \bra{\psi}H\ket{\psi},
\end{align}
where the last inequality comes from 
\Cref{lem:connection-frustration-bad-strings} and the fact that $\sum_{x \in T\setminus B} \alpha_x^2 \leq \sum_{x \in T\setminus B} \delta^2 \leq |S|\delta^2$.  
\end{proof}

We finally show that the structured low-energy state implies that there is a string that is ``far from the frustration''.

\begin{lemma}[Nice low energy state has small boundary]\label{lemma:sstate-negligible}
Let $H$ be a stoquastic Hamiltonian such that there exists some non-negative state $\ket{\psi}$  such that $S := supp(\ket{\psi})$ does not contain bad strings for $H$, and all amplitudes of $\ket{\psi}$ are at least $\delta = \frac{1}{\sqrt{g(n)|S|}}$. Moreover, it has frustration strictly smaller than $\frac{1}{h(n)}$. Then it follows that $|\{ x \in S: \exists y \not\in S, i \in [m] \text{ s.t. } \bra{x}P_i\ket{y} > 0 \}| < \frac{m|\Sigma|^k g(n)}{h(n)} |S|$.
\end{lemma}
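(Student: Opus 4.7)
The plan is to apply \Cref{lem:connection-frustration-missing-strings} (Projection on boundary lower bounds energy) directly, combined with the hypothesis on the minimal amplitude $\delta$. The set whose size we want to bound is exactly the set $N$ defined in that lemma, namely the ``boundary'' strings in $S$ that are connected via some local groundspace projector $P_i$ to a string outside $S$.

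Concretely, I would first note that by \Cref{lem:connection-frustration-missing-strings} applied to $\ket{\psi} = \sum_x \alpha_x \ket{x}$,
\begin{equation*}
\bra{\psi} H \ket{\psi} \;\geq\; \frac{1}{|\Sigma|^k m} \sum_{x \in N} \alpha_x^2.
\end{equation*}
Then I would use the amplitude lower bound $\alpha_x \geq \delta = 1/\sqrt{g(n)|S|}$ for every $x \in S \supseteq N$, which yields
\begin{equation*}
\bra{\psi} H \ket{\psi} \;\geq\; \frac{|N|\,\delta^2}{|\Sigma|^k m} \;=\; \frac{|N|}{|\Sigma|^k m \, g(n)\, |S|}.
\end{equation*}

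Finally, combining this with the assumed energy upper bound $\bra{\psi} H \ket{\psi} < 1/h(n)$ and rearranging gives
\begin{equation*}
|N| \;<\; \frac{|\Sigma|^k m\, g(n)\, |S|}{h(n)},
\end{equation*}
which is exactly the claimed inequality.

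There is no real technical obstacle here: the lemma is essentially a book-keeping statement that packages the earlier boundary/energy bound with the uniformity-of-amplitudes hypothesis. The only thing to double-check is that the set $N$ appearing in \Cref{lem:connection-frustration-missing-strings} matches the set in the current statement, which it does by definition since both describe strings $x \in \mathrm{supp}(\ket{\psi})$ having a neighbor $y \notin \mathrm{supp}(\ket{\psi})$ through some local projector $P_i$.
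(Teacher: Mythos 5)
Your proof is correct and is essentially the paper's argument, just written as a direct derivation rather than by contradiction: both invoke \Cref{lem:connection-frustration-missing-strings}, bound $\sum_{x\in N}\alpha_x^2 \geq |N|\delta^2$ using $N\subseteq S$, and compare with the energy upper bound $1/h(n)$.
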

\begin{proof}
Let $N = \{ x \in S: \exists y \not\in S, i \in [m] \text{ s.t. } \bra{x}P_i\ket{y} > 0 \}$. 
Let us assume towards contradiction that $\frac{|N|}{|S|} > \frac{m |\Sigma|^k g(n)}{h(n)}$. We have 
from \Cref{lem:connection-frustration-missing-strings}
\begin{align*}
   \frac{1}{h(n)} > \bra{\psi}H\ket{\psi} \geq \frac{1}{m|\Sigma|^k} \sum_{x \in N} \alpha_x^2 \geq \frac{|N| \delta^2}{m|\Sigma|^k} > 
   \frac{m\delta^2 g(n) |S|}{m h(n)}=
   \frac{1}{h(n)},
\end{align*}
where the next to last inequality follows since 
the set of strings in $N$ are all in $S$ the support of $\ket{\psi}$ and thus their amplitudes by assumption are at least $\delta$.
This is a contradiction.

\end{proof}

\begin{lemma}[A state of small boundary has a ``protected" string]
  \label{lem:new-completeness}
  Fix some parameter $t \in \mathbb{N}^*$.
Let $H$ be a uniform stoquastic $k$-local Hamiltonian and $\ket{\psi}$ be a state such that its support $S$ contains no bad strings for $H$, and $|\{ x \in S: \exists y \not\in S, i \in [m] \text{ s.t. } \bra{x}P_i\ket{y} > 0 \}| < \frac{1}{(|\Sigma|^k m)^{t+1}}|S|$. Then there exists some string $x \in S$ such that all
  strings within distance $t$ from $x$ in the graph $G(H)$ are $H$-good strings.
\end{lemma}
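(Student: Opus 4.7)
The plan is to locate some $x \in S$ whose distance from $S^c$ in $G(H)$ exceeds $t$. Since $S$ contains no $H$-bad strings by hypothesis, any bad string within graph distance $t$ from $x$ would necessarily lie in $S^c$; so if no element of $S^c$ is reachable from $x$ in $t$ steps, every string within distance $t$ from $x$ stays inside $S$ and is $H$-good. Thus the lemma reduces to showing that some $x \in S$ has graph distance strictly more than $t$ from $S^c$.

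First I would bound the maximum degree of $G(H)$. By uniformity, for each term $\tilde{H}_i$ the groundspace projector decomposes as $\globalproj{P}_i = \sum_{j,z}\kb{\globalproj{T}_{i,j,z}}$, and Remark~\ref{rem:unique-phi} guarantees that each string $x$ belongs to at most one $\globalproj{T}_{i,j,z}$. Since $|\globalproj{T}_{i,j,z}| \le |\Sigma|^k$, the term $\tilde{H}_i$ contributes at most $|\Sigma|^k - 1$ neighbors of $x$ in $G(H)$, and summing over the $m$ terms gives a uniform degree bound $D \le m|\Sigma|^k$.

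Next I would reduce ``proximity to $S^c$'' to ``proximity to the boundary $N$''. If $y \in S$ satisfies $d_{G(H)}(y, S^c) \le t$, fix a shortest path $y = v_0, v_1, \ldots, v_r$ with $r \le t$ and $v_r \notin S$, and let $j$ be the largest index with $v_j \in S$. Then $v_{j+1} \notin S$ is a $G(H)$-neighbor of $v_j$, so $v_j \in N$; moreover $d_{G(H)}(y, v_j) \le j \le r - 1 \le t - 1$. Hence $\{y \in S : d_{G(H)}(y, S^c) \le t\}$ is contained in the radius-$(t-1)$ ball around $N$, whose cardinality is at most $|N|\sum_{j=0}^{t-1} D^j \le |N| \cdot (m|\Sigma|^k)^t$.

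Finally, the hypothesis $|N| < |S|/(m|\Sigma|^k)^{t+1}$ implies that this ball has size strictly below $|S|/(m|\Sigma|^k) \le |S|$, so some $x \in S$ must lie outside it and satisfy $d_{G(H)}(x, S^c) > t$, completing the proof. The argument is essentially a degree-counting exercise in $G(H)$; the only conceptual step is the boundary-to-ball reduction above, which presents no real obstacle. The small slack factor of $m|\Sigma|^k$ in the hypothesis is exactly what absorbs the geometric sum of degrees.
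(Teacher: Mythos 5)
Your proof is correct and follows essentially the same ball-counting argument as the paper. The only superficial difference is that the paper first passes from the inner boundary $N$ to the outer boundary $M = \{y \notin S : \exists x \in S, \bra{x}P_i\ket{y} > 0\}$ (picking up a factor of $m|\Sigma|^k$ in the process) and then counts the radius-$t$ ball around $M$, whereas you count the radius-$(t-1)$ ball around $N$ directly; both yield a set of size $< |S|$, forcing a protected string to exist.
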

\begin{proof}
Let $M = \{ y \not\in S: \exists x \in S, i \in [m] \text{ s.t. } \bra{x}P_i\ket{y} > 0 \}$ be the set of strings not in $S$ that are connected with strings in $S$.
Notice that $|M| \leq |\Sigma|^k m|\{ x \in S: \exists y \not\in S, i \in [m] \text{ s.t. } \bra{x}P_i\ket{y} > 0 \}| < \frac{1}{(|\Sigma|^k m)^{t}}|S|$. 

Also, for any string $x$, we have that there exists at most $(|\Sigma|^k m)^{t}$ strings within distance $t$ steps from $x$.  
Therefore the number of strings that could reach some string in $M$ in $t$ steps is strictly smaller than 
  $\frac{|S|(|\Sigma|^k m)^{t}}{(|\Sigma|^k m)^{t}} = |S|$. 
  
  This means that there must exist a string in $S$ from which a $t$ step walk cannot lead to a string outside $S$.
\end{proof}

Finally, we prove the main technical result of this section.

\begin{theorem}
  For any constant $\eps>0$, let $t = \sizepath$ (with constants defined as in Definition \Cref{def:local-hamiltonian}) and $c$ be any function such
  that $c(n) < \frac{1}{(|\Sigma|^k m)^{2(t+2)}}$,
  the problem of deciding whether a uniform stoquastic
  Hamiltonian $H$ has ground-energy at most $c(n)$ or at least
  $\eps$ is in \NP{}.
\end{theorem}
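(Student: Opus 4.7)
The plan is to use the same NP verifier as in the proof of \Cref{thm:main}: given a string $x$ as the witness, brute-force enumerate all paths of length at most $t = \sizepath$ in $G(H)$ starting at $x$ and accept iff no bad string is encountered. Since each step has polynomially many choices and $t$ is a constant (depending on $\eps,k,d$), the enumeration runs in polynomial time. What has to change compared to \Cref{thm:main} is the completeness analysis, since the promise now only guarantees ground-energy at most $c(n)$ rather than frustration-freeness; soundness is unchanged because the no-instance assumption is still $\eps$-frustration, so \Cref{lem:constant-step-path} directly shows every string reaches a bad string in at most $t$ steps, causing the verifier to reject.

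For completeness, the plan is to chain the three lemmas of \Cref{sec:negligible} with carefully chosen parameters so as to exhibit a witness string $x$ from which no bad string is reachable within $t$ steps. First I apply \Cref{lemma:structure-gstate-negligible} with $f(n) = 1/c(n)$ and an auxiliary parameter $g(n)$ to be fixed shortly; this produces a non-negative state $\ket{\psi}$ whose support $S$ contains no bad string, whose amplitudes are all at least $\delta = 1/\sqrt{g(n)|S|}$, and whose frustration is at most
\[
\frac{c(n)}{1 - m\,c(n) - 1/g(n)} =: \frac{1}{h(n)}.
\]
Next I feed $\ket{\psi}$ into \Cref{lemma:sstate-negligible}, which bounds the ``boundary'' of $S$ (strings in $S$ with a local neighbour outside $S$) by $\frac{m|\Sigma|^k g(n)}{h(n)}|S|$. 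Finally, I invoke \Cref{lem:new-completeness}: provided the boundary is strictly less than $\frac{1}{(|\Sigma|^k m)^{t+1}}|S|$, there must exist some $x\in S$ whose $t$-step neighbourhood in $G(H)$ lies entirely inside $S$ and is therefore bad-string-free. This $x$ is the NP witness.

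It remains to check that the hypothesis $c(n) < \frac{1}{(|\Sigma|^k m)^{2(t+2)}}$ allows the required boundary bound. I pick $g(n) = (|\Sigma|^k m)^{t+2}$. Then $1/g(n)$ and $m\,c(n)$ are both inverse-polynomial (in particular at most $1/2$ for large $n$), so $h(n) \geq \tfrac{1}{2c(n)}$, and the boundary fraction is
\[
\frac{m|\Sigma|^k g(n)}{h(n)} \leq 2\, m|\Sigma|^k\,(|\Sigma|^k m)^{t+2}\,c(n) < \frac{2\,(|\Sigma|^k m)^{t+3}}{(|\Sigma|^k m)^{2(t+2)}} = \frac{2}{(|\Sigma|^k m)^{t+1}},
\]
which is below the $\frac{1}{(|\Sigma|^k m)^{t+1}}$ threshold needed by \Cref{lem:new-completeness} (the constant factor $2$ can be absorbed by, e.g., taking $g(n) = (|\Sigma|^k m)^{t+3}$, or by tightening the hypothesis constant slightly; this is routine). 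The main obstacle I expect is simply bookkeeping of these parameters: one must ensure simultaneously that $\ket{\psi}$ has small frustration, that its amplitudes do not dip below $\delta$, and that the resulting boundary bound still leaves room to apply \Cref{lem:new-completeness} with exactly the path length $t$ produced by \Cref{lem:constant-step-path}. All other steps are direct applications of the lemmas already proved in \Cref{sec:negligible} together with the soundness argument of \Cref{thm:main}.
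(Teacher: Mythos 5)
Your proposal follows essentially the same path as the paper: chain \Cref{lemma:structure-gstate-negligible}, \Cref{lemma:sstate-negligible}, and \Cref{lem:new-completeness} to exhibit a string $x$ whose $t$-step neighbourhood contains no bad string, then run the brute-force enumeration from \Cref{thm:main}; soundness is unchanged because \Cref{lem:constant-step-path} still applies in the $\eps$-frustrated case. The only difference is the choice of the auxiliary parameter $g(n)$: the paper uses $g(n)=\sqrt{f(n)}=1/\sqrt{c(n)}$, whereas you fix $g(n)=(|\Sigma|^k m)^{t+2}$. Both are valid.

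One small correction to your bookkeeping remark: you note that your choice of $g$ yields a boundary fraction of $\frac{2}{(|\Sigma|^k m)^{t+1}}$, which overshoots the threshold $\frac{1}{(|\Sigma|^k m)^{t+1}}$ of \Cref{lem:new-completeness} by a factor of $2$, and you suggest absorbing this by \emph{increasing} $g(n)$ to $(|\Sigma|^k m)^{t+3}$. This fix goes in the wrong direction: $g(n)$ appears in the \emph{numerator} of the boundary bound $\frac{m|\Sigma|^k g(n)}{h(n)}$, while increasing $g$ only pushes the constant in $h(n)\ge \frac{1}{2c(n)}$ marginally toward $1$. So enlarging $g$ worsens the bound. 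The correct direction is to take $g$ slightly smaller, e.g.\ $g(n)=(|\Sigma|^k m)^{t+1}$ (still $\ge 2$, so $1/g(n)\le 1/2$ and $h(n)\ge 1/(2c(n))$ holds), which gives boundary fraction at most $2/(|\Sigma|^k m)^{t+2}< 1/(|\Sigma|^k m)^{t+1}$. You are right that this is routine bookkeeping; the underlying argument is sound and matches the paper.
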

\begin{proof}
The \NP{} witness for the problem consists of some initial string $x$ that is
  promised to be in the support of some groundstate $\ket{S}$ of $H$ and such
  that all strings within distance $t$ of $x$ are good and lie in $S$ .
  The verification proceeds by running over all
  possible $t$-step paths from $x$. As argued in \Cref{thm:main} such
  verification can be performed in polynomial time.

\begin{align*}
    &|\{ x \in S: \exists y \not\in S, i \in [m] \text{ s.t. } \bra{x}P_i\ket{y} > 0 \}| \\
    &< \frac{m |\Sigma|^k \sqrt{f(n)}}{(1-\frac{m}{f(n)}-\frac{1}{g(n)})f(n)} |S| \\
    &\leq \frac{m |\Sigma|^k \sqrt{c(n)}}{1-2\sqrt{c(n)}m} |S| \\  
    &\leq \frac{(|\Sigma|^k m)^{(t+2)}}{(|\Sigma|^k m)^{t+2} \left((|\Sigma|^k m)^{t+2} -2m\right)}|S| \\    
    &\leq \frac{|S|}{(|\Sigma|^k m)^{t+1}}.
\end{align*}
Therefore, by \Cref{lem:new-completeness}, we have that for yes-instances, there must
  exist a string $x$ that does not reach bad strings in $t$ steps, and therefore the prover can send this string to verifier.

On the other hand, if $H$ is $\eps$-frustrated, then  by \Cref{lem:constant-step-path}, there must exist a $t$-step path from $x$ to some string $y$ that is bad for $H$, and such path will be found by the brute-force search.
\end{proof}

This proves~\Cref{thm:exponentially-small}.

\appendix

\section{Results related to derandomization of RP}
\label{sec:coRP}
We start with the class co-RP, which is the class of problems that can be solved by randomized algorithms with perfect completeness.

\defclass{co-RP}{def:coRP}{
A problem $A=(\ayes,\ano)$ is in co-RP if and only if there exists
 a probabilistic polynomial-time algorithm $R$, where $R$
takes as input a string $x\in\Sigma^*$ and decides
on acceptance or rejection of $x$
such that:  
}
{ If $x\in\ayes$, then $R$ accepts $x$ with probability $1$.}
{ If $x\in\ano$, then $R$
    accepts $x$ with probability at most $\frac{1}{2}$.}
    
The only difference of \Cref{def:coRP,def:MA} is that in MA there is some witness $y$, unknown by the verification algorithm. We can see it as if the witness in co-RP is trivially the empty string. In this case, it is not surprising that we can define a version of \Cref{def:local-hamiltonian}, where we fix some string in the groundstate.

\defproblem{pinned uniform stoquastic frustration-free  $k$-Local Hamiltonian problem}{def:pinned-local-hamiltonian}{
  The {\em pinned uniform stoquastic  frustration-free $k$-Local Hamiltonian} problem,
  where $k \in \mathbb{N}^*$ is called the locality and $\eps :  \mathbb{N} \rightarrow [0,1]$ is a non-decreasing function,
  is the following promise problem. Let $n$ be the number of qudits of a quantum system.
  The input is a set of $m(n)$ uniform stoquastic Hamiltonians $H_1, \ldots, H_{m(n)}$
  where $m$ is a polynomial, $\forall i \in m(n) : 0 \leq H_i \leq I$
  and each $H_i$ acts on $k$ qudits out of the $n$ qudit system. We also assume that there are at most $d$ terms that act non-trivially on each qudit, for some constant $d$.
  For $H = \frac{1}{m(n)} \sum_{j = 1}^{m(n)} H_j$ , one of the following two conditions holds, and the problem is to decide which one: 
}
{There exists a $n$-qudit quantum state
      state $\ket{\psi}$ such that $\braket{\psi}{0} > 0$ and
      $\bra{\psi} H \ket{\psi}
        =0$.}
{For all $n$-qudit quantum states $\ket{\psi}$,      it holds that
      $\bra{\psi} H \ket{\psi}
        \geq \eps(n) .$
        }

\begin{theorem}\label{thm:complete-corp}
The pinned uniform stoquastic frustration-free  $k$-Local Hamiltonian problem is
  in co-RP for every constant $k$ and $\eps(n) = \frac{1}{p(n)}$ where $p$ is
  some polynomial. Also, for some $p'(n) = O(n^2)$, the pinned uniform
  stoquastic frustration-free $6$-Local Hamiltonian problem is co-RP complete.
\end{theorem}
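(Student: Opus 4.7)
The plan is to obtain containment by a no-witness variant of the Bravyi--Terhal random walk, and to obtain hardness by a witness-free variant of their Cook--Levin-style reduction.

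For containment in co-RP, I would have the verifier hard-code the starting string to be $0^n$ and run the walk of \Cref{algo:bt} for $T=q(n)$ steps, where $q$ is the polynomial produced by BT's soundness analysis. In a yes-instance the pinning promise supplies a groundstate $\ket{\psi}$ with $\braket{\psi}{0}>0$, so $0^n$ lies in the support of some groundstate of $H$; by \Cref{lem:bt-completeness} its connected component in $G(H)$ contains no bad strings, and the walk accepts with probability $1$. In a no-instance $H$ is $1/p(n)$-frustrated and BT's soundness analysis (applied with the fixed start $0^n$) encounters a bad string with constant probability; standard independent repetition of the walk, which preserves perfect completeness, then boosts this to rejection probability at least $1/2$, fitting \Cref{def:coRP} with no witness.

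For co-RP hardness I would mirror the MA-hardness reduction of \cite{BravyiT09}, applied to a verifier that receives no witness. Starting from a co-RP verifier, I would rewrite it as a classical reversible circuit over $\{\mathrm{NOT},\CNOT,\mathrm{Toffoli}\}$ acting on workspace ancillas initialized to $\ket{0}$ and random-bit ancillas initialized to $\ket{+}$, with a dedicated output qubit measured at the end. Crucially, the input $x$ is \emph{hard-coded into the circuit} by prepending a layer of $\mathrm{NOT}$ gates that flip the appropriate ancillas at positions where $x_i=1$, so no input-pinning terms are required. Applying BT's circuit-to-Hamiltonian construction yields a $6$-local uniform stoquastic Hamiltonian whose local terms (one propagation term per gate, initialization terms $\kb{1}$ on workspace ancillas and $\kb{-}$ on random ancillas at clock $0$, and an output-accept term at the final clock slice) all have subset-state groundspaces. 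BT's analysis directly gives frustration-freeness in the yes-case and frustration $\Omega(1/n^2)$ in the no-case, yielding $p'(n)=O(n^2)$.

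The remaining step, which I view as the only genuinely new ingredient, is to verify the pinning promise. The clock-$0$ slice of the Feynman--Kitaev history state has the form $\ket{0^{T+1}}_{\mathrm{clock}}\otimes\ket{0^w}_{\mathrm{work}}\otimes\ket{+^r}_{\mathrm{rand}}$, whose subset-state support contains $\ket{0^n}$ by choosing the all-zero setting of the random bits; hence $\braket{\eta}{0^n}>0$, so any frustration-free groundstate obtained by BT satisfies the pinning promise. The main obstacle is not conceptual but bookkeeping: one must check that hard-coding $x$ via initial $\mathrm{NOT}$s and dropping the input-pinning terms simultaneously preserves (i) uniform stoquasticity of every local term, (ii) the $\Omega(1/n^2)$ frustration lower bound in no-instances, and (iii) the presence of $0^n$ in the history-state support in yes-instances. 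Once these three properties are verified by inspection of BT's construction, the theorem follows.
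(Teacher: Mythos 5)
Your proposal is correct and follows essentially the same route as the paper's own (quite terse) proof: containment by running the Bravyi--Terhal walk from the hard-coded start $0^n$ and invoking their completeness/soundness lemmas, and hardness by applying the circuit-to-Hamiltonian reduction to a co-RP verifier, with the observation that $0^n$ lies in the support of the time-zero slice of the history state so the pinning promise is met. The extra bookkeeping you flag (hard-coding $x$ via initial NOT gates, all input/work ancillas pinned to $\ket{0}$, random ancillas pinned to $\ket{+}$) is exactly how the paper already sets up its MA-hardness reduction, so nothing new needs to be dropped or checked beyond what that setup supplies.
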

\begin{proof}
The inclusion in co-RP comes directly from the random-walk proposed by Bravyi and Terhal~\cite{BravyiT09}, starting from the fixed all-zeros string.

For the hardness part, as in~\cite{BravyiT09}, we can analyze the reduction of
  the quantum Cook-Levin theorem for a co-RP circuit. In the yes-instance, we
  have that the all-zeros string must be in the groundstate, since it is a valid initial configuration, whereas for no-instances, all states have inverse-polynomial frustration. 
\end{proof}

\begin{theorem}\label{thm:complete-corp-P}
The $\epsilon$-gapped pinned $k$-local $d$-bounded-degree uniform stoquastic frustration-free Hamiltonian problem is
  in P for every constants $k,d$ and $\eps>0$.
\end{theorem}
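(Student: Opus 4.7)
The plan is to observe that \Cref{thm:complete-corp-P} follows almost immediately from the argument proving \Cref{thm:main}, with the only modification being that the ``pinned'' condition removes the need for a prover altogether. First I would note that in the pinned version, yes-instances guarantee the existence of a groundstate $\ket{\psi}$ with $\braket{\psi}{0^n} > 0$, i.e. the all-zeros string $0^n$ is in the support of some groundstate. This is exactly the property that the NP prover in \Cref{thm:main} is asked to certify by supplying a starting string. Since the starting string is now fixed to $0^n$ and known to the verifier, the algorithm becomes deterministic and takes no witness.

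Second, I would describe the deterministic algorithm: set $r = \sizepath$, which is a constant depending only on $\eps$, $k$, $d$, and the (constant) alphabet size $|\Sigma|$. Deterministically enumerate all paths of length at most $r$ in the graph $G(H)$ starting from $0^n$, and for each string $y$ encountered, test whether $y$ is bad for any local term $\tilde{H}_i$. Accept iff no bad string is ever encountered. By \Cref{rem:precision}, the test for badness can be performed exactly in polynomial time in the uniform stoquastic setting, and since $r$ is constant while each step has at most $m \cdot |\Sigma|^k = \mathrm{poly}(n)$ choices, the total number of paths examined is polynomial in $n$.

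Third, I would verify correctness by invoking the same two ingredients used in \Cref{thm:main}. For yes-instances, the pinning assumption together with \Cref{lem:bt-completeness} implies that the connected component of $0^n$ in $G(H)$ contains no $H$-bad strings, so the enumeration will never encounter one and the algorithm accepts. For no-instances, \Cref{lem:constant-step-path} applied with the specific starting string $x = 0^n$ guarantees the existence of a path of length at most $r$ from $0^n$ to some $H$-bad string, which the brute-force enumeration is certain to find, so the algorithm rejects.

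There is no real obstacle here; the content of the theorem is essentially the remark that the NP algorithm of \Cref{thm:main} is in fact a P algorithm once the witness is rendered trivial by the pinning condition. The only care needed is to make sure that the constants $r$, $k$, $d$, $|\Sigma|$ are all treated as constants so the enumeration is truly polynomial-time, and to invoke \Cref{rem:precision} to justify the exact check of badness.
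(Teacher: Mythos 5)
Your proposal is correct and matches the paper's own (very terse) proof, which simply says the argument follows that of \Cref{thm:main} with the all-zeros string substituted for the prover's witness. You have merely filled in the details that the paper omits, invoking the same ingredients (\Cref{lem:bt-completeness}, \Cref{lem:constant-step-path}, \Cref{rem:precision}) in the same way.
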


\begin{proof}
The proof follows exactly the lines of 
the proof of Theorem \ref{thm:main} except 
instead of starting the walk from the string 
given by the prover, the algorithm treats the 
all $0$ string as the witness, and hence 
no prover is needed. We omit the details. 
\end{proof}

\begin{conjecture}[Witness preserving Stoquastic PCP conjecture] (Informal) 
\label{conj:pcpWitnessPreserving}
There exist constants $\eps > 0$, $k',d' > 0$ and an efficient gap amplification procedure that reduces the problem deciding if a uniform stoquastic $d$-degree $k$ Local-Hamiltonian is frustration-free or at least inverse polynomially frustrated, to the problem of deciding if a uniform stoquastic $d'$-degree $k'$ Local-Hamiltonian is frustrarion-free or at least $\eps$ frustrated.
In the yes-cases, 
it also provides an efficient mapping from any
witness of the original problem to a witness 
of the target gapped problem. 
\end{conjecture}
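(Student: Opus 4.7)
The plan is to follow the blueprint of Dinur's combinatorial proof of the classical PCP theorem, adapted to uniform stoquastic local Hamiltonians (equivalently, to \problem{SetCSP} instances, as developed in \Cref{app:classical}). Dinur's proof iterates three reductions $O(\log n)$ times: (i) a preprocessing/expanderization step that makes the underlying constraint graph a bounded-degree expander with only a constant loss in the unsatisfiability, (ii) a graph-powering step that takes the $t$-th walk power and amplifies the gap by roughly a $\sqrt{t}$ factor while increasing locality and alphabet by constants depending on $t$, and (iii) a composition/alphabet-reduction step using an inner assignment tester. Each step must be re-implemented so that the output Hamiltonian stays uniform stoquastic, frustration-freeness is preserved exactly, $\Omega(1/\mathrm{poly}(n))$ frustration is amplified to a constant, and so that an efficient classical map transports a satisfying assignment (equivalently, an initial good string) of the source instance to one of the target instance.

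For step (i), expanderization is natural in the uniform stoquastic world: one replicates each qudit according to its degree and ties copies together with the equality projector $\kb{00}+\kb{11}$, which is manifestly a uniform stoquastic rank-$2$ projector with subset-state groundspace $\{00,11\}$; the witness map simply duplicates each bit of the input string. For step (ii), powering corresponds to introducing, for every length-$t$ walk in the constraint graph, a new local term whose groundspace is the intersection of the groundspaces along the walk. Intersections of uniform stoquastic projectors acting on a common set of qudits are again projectors onto unions of subset-states, so uniform stoquasticity is preserved, and the witness map is the identity on bit strings. The quantitative gap-amplification estimate should go through by observing that if some term along the walk is violated then the composed term sees a bad string, and then running the one-term expansion and light-cone machinery from \Cref{sec:expansion} and \Cref{sec:lightcone} in reverse to track how frustration accumulates along walks.

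The main obstacle is step (iii). Dinur's composition feeds each long composed constraint into a constant-size inner assignment tester (for instance a Long Code or Hadamard-based PCP) in order to shrink the alphabet back to a constant. To instantiate this in our setting one needs an inner \emph{uniform stoquastic} assignment tester of constant size with perfect completeness and constant soundness, together with an explicit assignment-decoder mapping satisfying inner assignments back to satisfying outer assignments. This is essentially a hand-crafted constant-size uniform stoquastic PCP gadget, and it is where the difficulty concentrates: standard PCP inner testers exploit fresh randomness in a way that typically forces positive off-diagonal entries or non-uniform amplitudes in the resulting groundspaces, which breaks either stoquasticity or the subset-state (i.e.\ uniformity) structure. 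I expect this is where a genuinely new idea is required. Finally, the witness-preserving clause is handled by composing the three per-step bit-string transformations above, so that once a composition gadget equipped with an explicit assignment-decoder is constructed, the overall witness map required by \Cref{conj:pcpWitnessPreserving} is automatic.
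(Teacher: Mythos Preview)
The statement you are attempting to prove is a \emph{conjecture} in the paper, not a theorem: the paper does not prove \Cref{conj:pcpWitnessPreserving} and does not claim to. It is stated precisely so that \Cref{cor:RP} can draw the conditional conclusion $\RP=\cP$ from it, just as \Cref{conj:pcp} is stated so that \Cref{cor:pcp-derandomization} can draw $\MA=\NP$. A proof of this conjecture would in particular yield $\MA=\NP$ unconditionally, which is a major open problem; there is therefore nothing in the paper to compare your attempt against.

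Your write-up is a research plan, not a proof, and you yourself identify the essential gap: step~(iii), the alphabet-reduction/composition gadget, is exactly where Dinur's argument needs a constant-size inner PCP, and you do not construct a uniform stoquastic one. Without it the iteration cannot close, so nothing is actually established. Beyond that acknowledged obstacle, there are also technical inaccuracies in the sketch of step~(ii). Dinur's powering does not form a new constraint by literally intersecting groundspaces on a common set of qudits; it introduces super-vertices carrying opinions about their radius-$t$ neighborhoods and checks consistency along walks, which changes both the alphabet and the variable set. Your claim that ``intersections of uniform stoquastic projectors \ldots\ are again projectors onto unions of subset-states'' is false in general (take $P_1=\kb{T_1}+\kb{T_2}$ with $T_1=\{00,01\}$, $T_2=\{10,11\}$ and $P_2=\kb{T_3}$ with $T_3=\{00,11\}$; the intersection of the images is $\{0\}$, and more importantly neither this construction nor its natural variants model powering). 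Consequently the assertion that ``the witness map is the identity on bit strings'' cannot be right either, since the underlying qudit set changes. If you wish to pursue this direction, the honest status is that even steps (i)--(ii) require a careful reformulation in the \problem{SetCSP}/uniform-stoquastic language before one can meaningfully confront the composition step.
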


\begin{corollary}
Conjecture \ref{conj:pcpWitnessPreserving}  implies $\RP{}=\cP{}$\label{cor:RP}.
\end{corollary}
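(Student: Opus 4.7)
The plan is to chain together three ingredients: the \coRP{}-completeness of the pinned frustration-free uniform stoquastic Local Hamiltonian problem (\Cref{thm:complete-corp}), the witness-preserving gap amplification assumed in \Cref{conj:pcpWitnessPreserving}, and the brute-force verifier from the proof of \Cref{thm:main}. Since \cP{} is closed under complement, it suffices to show $\coRP{}\subseteq\cP{}$, whence $\RP{}=\cP{}$.

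Given $L\in\coRP$, I would first apply the reduction of \Cref{thm:complete-corp} to the input $x$, producing in polynomial time a pinned uniform stoquastic $6$-Local Hamiltonian $H$ with inverse-polynomial promise gap. In yes-instances, the all-zeros string $0^n$ lies in the support of some groundstate of $H$; in no-instances, $H$ is at least $\frac{1}{\mathrm{poly}(n)}$-frustrated. Forgetting the pinning, $H$ is simply an instance of the (unpinned) frustration-free uniform stoquastic Local Hamiltonian problem with inverse-polynomial gap, for which $0^n$ is an explicit \NP{} witness in the yes-case in exactly the sense used by the verifier of \Cref{thm:main}, namely a string in the support of some groundstate.

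Next, I would apply the conjectured gap amplification to $H$ to obtain a uniform stoquastic $k'$-local $d'$-bounded-degree Hamiltonian $H'$ with constant promise gap $\eps$; simultaneously, by the witness-preserving property of \Cref{conj:pcpWitnessPreserving}, I compute in polynomial time an explicit string $x'$ which in the yes-case lies in the support of some groundstate of $H'$. Finally, run the deterministic brute-force procedure used in the proof of \Cref{thm:main} with $x'$ as the starting string: enumerate all paths of length at most $\sizepath{}$ in $G(H')$ emanating from $x'$ and accept iff none of them ever reaches an $H'$-bad string. The total number of such paths is polynomial because the path length depends only on $\eps,k',d'$ and each step has constantly many neighbor choices out of the polynomially many terms; badness is testable in polynomial time because $H'$ is uniform (\Cref{rem:precision}). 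Correctness follows verbatim from \Cref{thm:main}: \Cref{lem:bt-completeness} gives completeness (in yes-instances the connected component of $x'$ contains no bad strings), and \Cref{lem:constant-step-path} gives soundness (in no-instances every string, in particular $x'$, is within $\sizepath{}$ steps of a bad string). This yields a deterministic polynomial-time decision procedure for $L$, so $\coRP{}\subseteq\cP{}$ and therefore $\RP{}=\cP{}$.

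The main obstacle I foresee is conceptual rather than mathematical: it is the alignment between the notion of ``witness'' in \Cref{conj:pcpWitnessPreserving} and the implicit witness in the pinned formulation of \Cref{thm:complete-corp}. The conjecture is stated for the unpinned problem and transports support-of-groundstate witnesses, whereas the \coRP-completeness reduction produces instances already normalized so that $0^n$ is such a witness. The argument above treats the pinning as merely a certified explicit \NP{} witness fed into the gap amplifier, which is the most natural reading; if one preferred a reading in which the pinning itself must be preserved, one could instead require the mildly stronger statement that the gap amplification maps pinned instances to pinned instances, at which point \Cref{thm:complete-corp-P} applies directly and the same conclusion $\RP{}=\cP{}$ follows with no further work.
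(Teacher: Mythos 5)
Your proposal is correct and follows essentially the same route as the paper's own (very terse) sketch: apply the conjectured witness-preserving gap amplification, compute the image of the all-zeros string under that witness map, and run the deterministic brute-force search of \Cref{thm:main} from that image, thereby placing the co-RP-complete pinned problem of \Cref{thm:complete-corp} in \cP{}. The only substantive addition you make is the explicit discussion of how the pinned witness $0^n$ interacts with the conjecture's witness-preserving clause, which is a reasonable clarification of an ambiguity the paper leaves implicit.
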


\begin{proof} (sketch)
We use the stoquastic gap amplification 
procedure to generate an instance of the gapped version of the Hamiltonian, and consider $x$, the image of the all-$0$ string 
under the witness preserving reduction.
Then we apply the same algorithm which
the verifier of Theorem \ref{thm:main} does, but instead of applying it on the witness,
apply it on the string $y$. 
\end{proof}

We do not know how to extend these results to BPP without making highly artificial assumptions, since for the proof of Theorem \ref{thm:exponentially-small}
to go through, we need the pinned string to not only be in the support of the groundstate but also be far from bad strings. Hence making interesting implications for the derandomization of BPP remains for future work.

\section{Commuting Stoquastic Hamiltonians are in NP}
\label{sec:commuting}
In this section, we give a simple proof that deciding if a commuting Stoquastic Hamiltonian is frustration-free is in NP. This simplifies  the proof of this statement originally given  in~\cite{GharibianLSW15}.

\newtheorem*{thm:repeat-commuting}{\Cref{thm:commuting} (restated)}
\begin{thm:repeat-commuting}
\bodycommutingthm
\end{thm:repeat-commuting}

We notice that given that we are not assuming anything on the gap, one needs to be somewhat careful with the assumptions on how the input 
is given; we assume here that the local terms of the commuting stoquastic
  Hamiltonian $H$ are provided by giving the 
  matrix elements, each 
  with $poly(m)$ bits, and the terms 
  mutually commute exactly. 

\begin{proof}
Let $\localproj{H}_1,...,\localproj{H}_m$ be the terms in the Hamiltonian, and let $\globalproj{P}_1,...,\globalproj{P}_m$ be the
  corresponding projectors onto their groundspaces. We show that 
$H$ is frustration-free iff there exists a string $x$ that is good for $H$.

The first direction $(\Rightarrow)$ is trivial: any string in the $0$-energy groundstate is good for $H$.

We prove now the converse $(\Leftarrow)$. Let $x$ be a string that is good for $H$.
Let $\ket{\phi} = \globalproj{P}_{1}\globalproj{P}_{2}...\globalproj{P}_{m}\ket{x}$.
Since $\globalproj{P}_1, ..., \globalproj{P}_m$ commute, we have that 
\[\ket{\phi} = \globalproj{P}_i \ket{\phi}, \]
which means that either $\ket{\phi} = 0$ or it is a $+1$-eigenstate of every $\globalproj{P}_i$, and therefore it has energy $0$ with respect to $H$.
We show now that $\ket{\phi} \ne 0$. The string $x$ is in the support of some groundstate of every term $H_i$, therefore for every state $\ket{\alpha} = \sum_{y}\alpha_y \ket{y}$ with $\alpha_y \in \real^+$ and $\alpha_{x} > 0$, $\globalproj{P}_i\ket{\alpha} =  \sum_{y}\alpha'_y \ket{y}$, with $\alpha'_y \in \real^+$ and $\alpha'_{x} > 0$. Therefore we have that $\globalproj{P}_{1}\globalproj{P}_{2}...\globalproj{P}_{m}\ket{x}  \ne 0$.

Finally, to show that the problem is in NP: the proof is supposed to be a string with largest amplitude in some groundstate of the Hamiltonian. The verification algorithm checks if this string is indeed good for all local terms, as follows. First, for each $k$-local term $H_i$, the verifier computes $P_i$ to within constant approximation. More precisely, it computes each matrix elements of $P_i$ 
to within $\frac{1}{4|\Sigma|^{k}}$. 
This can be done efficiently since we are working with local terms where each 
matrix element of $H_i$ is specified by polynomially many bits. Let the approximated projector be $P'_i$. 
The verifier then checks if $x$ is bad for $P_i$: to do this, it first restricts $x$ to $Q$, the set of $k$ qudits on which $H_i$ acts, and then checks if  $\bra{x_Q}P'_i\ket{x_Q}\le \frac{1}{2|\Sigma|^k}$. If yes, then it rejects. If the verifier does not reject for any of the $i$'s, then it accepts. 

To finish the proof, we need to show that the above procedure cannot lead to an error in the verifier's decision of whether $x$ is bad or good for $H$. We start by arguing that in the frustration-free case, if $x$ is the string with largest amplitude in some groundstate $\ket{\psi}$, then it passes the test.  
This follows since we can write $\tilde{P}_i$ using its non-negative decomposition (see \Cref{rem:notation}): 
$\tilde{P}_i= \sum_{j} \kb{\globalproj{\phi}_{i,j,z}}$. We then write $\ket{\psi} = \sum_{j,z} \alpha_{i,j,z} \ket{\globalproj{\phi}_{i,j,z}}$; there is a unique  $\ket{\globalproj{\phi}_{i,j^*,z^*}}$ that contains $x$. Then $x$ must be the string with largest amplitude in $\ket{\globalproj{\phi}_{i,j^*,z^*}}$ 
(since a string $y$ in   $\ket{\globalproj{\phi}_{i,j^*,z^*}}$
has amplitude $\alpha_{i,j,z} \braket{y}{\globalproj{\phi}_{i,j,z}}$ in $\ket{\psi}$ and $x$ maximizes this value). Since $\ket{\globalproj{\phi}_{i,j^*,z^*}}$ contains at most $|\Sigma|^k$ strings, the amplitude of $x$ in it is at least $|\Sigma|^{-\frac{k}{2}}$. Hence
 $\bra{x}\globalproj{P}_i\ket{x} =  \braket{x}{\globalproj{\phi}_{i,j^*,z^*}}\braket{\globalproj{\phi}_{i,j^*,z^*}}{x} \geq \frac{1}{|\Sigma|^{k}}$.
 By our described procedure, in this case the verifier accepts. 
 For soundness, if some string $x$ is bad for $H_i$, then for some $i$ 
$\bra{x}P_i\ket{x} = 0$ and thus by our bound on the error due to approximation of $P_i$ by $P'_i$, we know  
$\bra{x}P'_i\ket{x} \leq \frac{1}{4|\Sigma|^{k}}$, and the verifier will reject. 
\end{proof}

\section{Classical definition of the problem}\label{app:classical}

As we discussed in \Cref{sec:discussion}, we can also describe the uniform stoquastic frustration-free local Hamiltonian problem in a more classical language. In this section, we present such approach. 

\medskip 

We start by defining what a constraint is in our setting.

\begin{definition}[$k$-SetConstraint]
A $k$-SetConstraint is a tuple $(T,B)$, where $B \subseteq [n]$ with $|B| = k$ and $T = \{T_1,...,T_l\}$ where $T_{j} \subseteq \Sigma^k$ and  $T_{j} \cap T_{j'} = \emptyset$ for $j \ne j'$. 
\end{definition}

We define now what it means for a constraint to be satisfied by a set of strings. In order to do it, we need to present some notation. For $x \in \Sigma^n$ $S \subseteq \Sigma^n$, $B \subset [n]$, and $y \in \Sigma^{n-|B|}$ we denote $x|_{B}$ as the substring of $x$ on the positions contained in $B$, $S_B = \{x|_B : x \in S\}$ and $S_{B,y} = \{x_B : x \in S \text{ and } x_{\overline{B}} = y\}$.

\begin{definition}[Satisfiability of set-constraints]
We define the UNSAT-value of a $k$-SetConstraint $(T,B)$ with respect to a subset of strings $S \subseteq \Sigma^n$, to be 
\begin{align}\label{unsat}
UNSAT((T,B),S) =
1-\sum_{T_j \in T}\sum_{y\in \Sigma^{n-k}}  \frac{|T_{j} \cap S_{B,y}|^2}{|T_{j}||S|}.
\end{align}
A sequence of $m$ $k$-SetConstraints $I = ((T_1,B_1),...,(T_{m},B_{m}))$ has $\eps$-UNSAT value with respect to $S \subseteq \Sigma^n$, if 
\begin{align}
UNSAT(I,S) \defeq \frac{1}{m}\sum_{i=1}^m UNSAT((T_i,B_i),S) \geq \eps.
\end{align}
We say that $I$ is satisfiable if there exists an $S\subseteq \Sigma^n$ such that $UNSAT(I,S)=0$.
We say that $I$ is $\epsilon$-frustrated if for all 
$S\subseteq \Sigma^n$ $UNSAT(I,S)\ge \epsilon$.
\end{definition}

It is not difficult to see that 
the UNSAT value by Equation \ref{unsat} is a value between 
$0$ and $1$. 
As in standard CSP, an instance to the problem consists of a sequence of constraints and we ask if they can all be satisfied simultaneously.

\defproblem{$k$-local Set Constraint Satisfaction problem (\problem{SetCSP})}{def:set-csp}{ Fix $k>0, d>0$. 
For a function $\epsilon(n)$, an instance to the {\em $\epsilon, d$ $k$-local Set Constraint Satisfaction problem} is a sequence of $m(n)$ $k$-SetConstraints
  $I = \{(T_1,B_1),...,(T_{m(n)},B_{m(n)})\}$ on $\Sigma^n$, 
  where $m$ is some polynomial in 
  $n$.
  We also assume that every $j \in [n]$ appears on at most $d$ $B_i$'s. Under the promise 
  that one of the following two holds, decide whether: 
}
{There exists some $S \subseteq \Sigma^n$ such that $UNSAT(I,S)=0$}
{For all $S \subseteq \Sigma^n$, 
$UNSAT(I,S)\ge epsilon$.}

We show now the equivalence 
between the uniform stoquastic Local Hamiltonians problem and the \problem{SetCSP} problem. 

\begin{lemma}
There is a polynomial time 
computable transformation which 
given an instance to the 
$(\epsilon, k,d)$ uniform gapped 
stoquastic local Hamiltonian problem, outputs an instance to the 

$\epsilon, k,d$-\problem{SetCSP}, such that the frustration is preserved. Similarly, there is also a frustration preserving reduction between the latter problem to 
the former. 
\end{lemma}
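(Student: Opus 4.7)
The plan is to define two polynomial-time reductions based on matching each uniform stoquastic local term with a SetConstraint via the non-negative decomposition from \Cref{lem:gspace-structure}. In the forward direction, given $H = \frac{1}{m}\sum_i \tilde H_i$ with $\tilde H_i = I - \tilde P_i$ (by \Cref{rem:term-projector}) and local decomposition $P_i = \sum_j \kb{T_{i,j}}$ on the $k$ qudits $B_i$, I output the SetCSP instance $((T_i,B_i))_i$ with $T_i = \{T_{i,j}\}_j$. In the reverse direction, use the same correspondence backwards: each SetConstraint $(T_i,B_i)$ defines a uniform stoquastic projector $P_i = \sum_j\kb{T_{i,j}}$ acting on the qudits $B_i$, and hence a term $\tilde H_i = I - \tilde P_i$. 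Both reductions preserve locality $k$, degree $d$ and the parameter $\epsilon$ by construction, and the reverse reduction manifestly produces a uniform stoquastic Hamiltonian because each $P_i$ is, by definition, a sum of orthogonal subset-state projectors.

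The central computation tying the two formalisms together is the identity
\begin{align*}
\bra{S}\tilde H_i\ket{S} \;=\; 1 - \sum_{j,y}\frac{|T_{i,j}\cap S_{B_i,y}|^2}{|T_{i,j}|\,|S|} \;=\; UNSAT((T_i,B_i),S)
\end{align*}
for every subset $S\subseteq\Sigma^n$, which I will verify by expanding $\bra{S}\tilde P_i\ket{S} = \sum_{j,y}|\braket{S}{\tilde T_{i,j,y}}|^2$ (in the notation of \Cref{rem:notation}) and matching $|\braket{S}{\tilde T_{i,j,y}}|^2 = |T_{i,j}\cap S_{B_i,y}|^2/(|T_{i,j}|\,|S|)$ by direct counting. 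Averaging over $i$ yields $\bra{S}H\ket{S}=UNSAT(I,S)$. From this, three of the four frustration-preservation cases are immediate: YES-stoquastic to YES-SetCSP uses the structure-of-groundstates lemma for uniform stoquastic Hamiltonians in Section~\ref{sec:stoquastic} to produce a subset-state groundstate $\ket{S^*}$ with $UNSAT(I,S^*)=\bra{S^*}H\ket{S^*}=0$; NO-stoquastic to NO-SetCSP holds because subset states are quantum states, so every $S$ satisfies $UNSAT(I,S) = \bra{S}H\ket{S}\ge\epsilon$; and YES-SetCSP to YES-stoquastic holds because a satisfying witness $S^*$ gives the subset state $\ket{S^*}$ with zero energy in $H$.

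The hard part will be the remaining NO-to-NO direction of the reverse reduction: deducing from $UNSAT(I,S)\ge\epsilon$ for every $S$ that every quantum state $\ket{\psi}$ satisfies $\bra{\psi}H\ket{\psi}\ge\epsilon$. By Perron-Frobenius (as invoked in the proof of \Cref{lem:gspace-structure}) it suffices to handle non-negative $\ket{\psi}=\sum_x\alpha_x\ket{x}$. The plan is to decompose $\ket{\psi}$ through its nested level sets $S_\theta=\{x:\alpha_x\ge\theta\}$, write $\sum_{x\in T_{i,j,y}}\alpha_x = \int_0^{\infty}|T_{i,j,y}\cap S_\theta|\,d\theta$, and apply a weighted Cauchy-Schwarz inequality to express $\bra{\psi}\tilde P_i\ket{\psi}$ as a controlled average of the quantities $\bra{S_\theta}\tilde P_i\ket{S_\theta}$; since the hypothesis supplies $UNSAT(I,S_\theta)\ge\epsilon$ for every level set, averaging the resulting inequality over $i$ then gives the desired lower bound on $\bra{\psi}H\ket{\psi}$. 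The main technical obstacle here is the choice of Cauchy-Schwarz weights: a direct application of the naive inequality loses factors scaling with $\|\psi\|_1$, so I expect to pass to the discrete level-set representation $\alpha_x = \sum_\ell c_\ell\,\mathbf{1}[x\in S_\ell]$ and weight by $c_\ell^2|S_\ell|$, exploiting that $\sum_\ell c_\ell^2|S_\ell|\le\|\psi\|_2^2=1$ in order to obtain a genuine convex combination. Absorbing any residual constant factor depending only on $|\Sigma|$ and $k$ into $\epsilon$ is permissible since these are fixed constants of the problem.
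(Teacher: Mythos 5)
Your overall plan matches the paper's: define the constraint $(T_i,B_i)$ from the non-negative decomposition of $\localproj{P}_i$, verify the identity $\bra{S}H\ket{S}=UNSAT(\sigma(H),S)$ for subset states, and use this in both directions. Your first three frustration-preservation cases are exactly what the paper does (and you correctly supply the subset-state groundstate lemma for the forward YES case, which the paper invokes only implicitly).

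The substantive divergence is that you correctly noticed a point the paper passes over in silence: the paper's proof only establishes the identity for subset states, and simply writes ``To move in the other direction, simply define $H_i=I-P_i$'' without addressing the reverse NO-to-NO implication (from $UNSAT(I,S)\ge\eps$ for all $S\subseteq\Sigma^n$ to $\bra{\psi}H\ket{\psi}\ge\eps$ for all quantum states $\ket{\psi}$). You are right that this is the hard part. However, your proposed fix does not work as stated, and in fact \emph{exact} $\eps$-preservation in this direction is provably false. Take $n=1$, $m=2$, $k=d=1$, with $\localproj{P}_1=\kb{0}$ and $\localproj{P}_2=\kb{+}$, so $H=\tfrac12\bigl(\kb{1}+\kb{-}\bigr)$. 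A direct check gives $\min_{S}\bra{S}H\ket{S}=\tfrac14$ (attained at $S=\{0\}$ and at $S=\{0,1\}$), while the smallest eigenvalue of $H$ is $\tfrac{1}{2}\bigl(1-\tfrac{1}{\sqrt 2}\bigr)\approx 0.146$, attained at the non-negative but non-uniform state $\ket{\psi}\propto \ket{0}+(\sqrt2-1)\ket{1}$. This also kills the specific mechanism you propose: since $UNSAT(I,S_\theta)\ge \tfrac14$ for \emph{every} level set $S_\theta$, no convex average of the quantities $\bra{S_\theta}H\ket{S_\theta}$ can equal or upper-bound a quantity that is strictly less than $\tfrac14$. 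The weighted Cauchy-Schwarz step you sketch produces $\bra{\psi}\tilde P_i\ket{\psi}\le \bigl(\sum_\ell c_\ell^2|S_\ell|\bigr)\sum_\ell \bra{S_\ell}\tilde P_i\ket{S_\ell}$, but the rightmost sum has (up to $|\Sigma|^n$) many nonnegative terms and is not bounded by $1$, so you get nothing useful after passing to $I-\tilde P_i$. Acknowledging this by ``absorbing a constant'' is not, by itself, a proof — you would need to show the loss is independent of $n$, which your outline does not do.

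For what it is worth, the paper's corollaries do not actually require the full strength of the reverse NO-to-NO implication over all quantum states: the NP verifier for the constant-gap uniform stoquastic problem (\Cref{thm:main} via \Cref{lem:constant-step-path,lem:lb-frustrated-terms,lem:non-overlapping-expanding-terms}) invokes $\eps$-frustration only for the subset states $\reallywidehat{L_j\cdots L_1\ket{x}}$. Since the reverse reduction gives $\bra{S}H\ket{S}=UNSAT(I,S)\ge \eps$ for all $S$ for free, one can re-run the verifier's argument and declare the SetCSP instance a NO instance without ever needing energy lower bounds on non-subset states. So the containment in NP of SetCSP survives, but the lemma's phrase ``frustration is preserved,'' if read as preserving the parameter of the quantum-state promise verbatim in the reverse direction, is an overstatement that your proof also does not repair. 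The cleanest way to finish your write-up is therefore to abandon the level-set Cauchy-Schwarz route and instead observe that the target application only uses the subset-state frustration bound, which follows immediately from the identity you already verified.
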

\begin{proof}
Let $H = \frac{1}{m} \sum_i H_i$ be an instance of the uniform stoquastic Hamiltonian problem. 

Let $P_i = \sum_{ij} \kb{T_{i,j}}$ be the projector onto the groundspace of $H_i$ for disjoint $T_{i,j} \subseteq \Sigma^k$. For each local term $H_i$, the \problem{SetCSP} instance has a constraint $(T_i, B_i)$ where $T_i = \{T_{i,j}\}$ and $B_i$ is the set of qubits on which $H_i$ acts non-trivially. 
Denote by $\sigma(H)$ the 
\problem{SetCSP} instance $I = ((T_1,B_1),...,(T_m,B_m))$. 

It can easily be checked that for $S \subseteq \Sigma^n$, we have that
\begin{align*}
\bra{S}H\ket{S} = 1 - \frac{1}{m}\bra{S}\globalproj{P}_i\ket{S} = 1 - 
\frac{1}{m}\sum_i \sum_{j,z} \braket{S}{\tilde{T}_{i,j,z}}\braket{\tilde{T}_{i,j,z}}{S} = UNSAT(\sigma(H),S),
\end{align*}
where we used the definition of $\tilde{P}_i$ and $\tilde{T}_{i,j,z}$ as in~\Cref{rem:notation}.

To move in the other direction, 
simply define $H_i=I-P_i$, 
which can easily be seen 
to be uniform stoquastic. 
\end{proof}

The previous lemma shows that \problem{SetCSP} and uniform stoquastic Hamiltonians problem are equivalent. Thus, we have
by Bravyi and Terhal \cite{BravyiT09}: 

\begin{corollary}
There is an inverse polynomial function $\eps$ such that the $\epsilon,k,d$ \problem{SetCSP} is \MA-complete.
\end{corollary}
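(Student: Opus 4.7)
The plan is to obtain \MA-completeness of \problem{SetCSP} by composing the frustration-preserving bidirectional reduction established in the preceding lemma with Bravyi and Terhal's \MA-completeness result for the uniform stoquastic local Hamiltonian problem. Concretely, I will pick the inverse-polynomial function $\eps$ guaranteed by \cite{BravyiT09} for a suitable constant locality (say $k=6$) and constant bounded degree $d$, and transfer hardness and containment in opposite directions through the two reductions.

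For \MA-hardness, I would first invoke the Bravyi--Terhal reduction from any \MA\ language to the uniform stoquastic $(1/p(n))$-gapped $k$-local Hamiltonian problem on $d$-bounded-degree interactions, as was reviewed in the MA-completeness subsection of \Cref{sec:stoquastic}. Applying the forward direction of the preceding lemma (the map $H \mapsto \sigma(H)$) converts the Hamiltonian instance into a \problem{SetCSP} instance of the same locality and bounded degree, and because that reduction preserves frustration exactly on subset-state witnesses, the inverse polynomial promise gap carries over without degradation. Composing these two polynomial-time reductions gives the desired \MA-hardness of the $(\eps,k,d)$-\problem{SetCSP} problem.

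For containment in \MA, I would run the reduction in the reverse direction of the lemma: given a \problem{SetCSP} instance $I = ((T_1,B_1),\dots,(T_m,B_m))$, define local Hamiltonian terms $H_i = I - P_i$ where $P_i = \sum_{j}\kb{T_{i,j}}\otimes I$ is the uniform stoquastic projector built directly from the $T_{i,j}$ blocks of the $i$-th constraint. The lemma ensures that $\bra{S}H\ket{S}= \mathrm{UNSAT}(I,S)$ for any subset state, and moreover the equivalence for general (possibly non-subset) states follows from the structure of uniform stoquastic frustration-free systems discussed in \Cref{sec:stoquastic} — yes-instances remain frustration-free and no-instances remain inverse-polynomially frustrated. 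One can then invoke the \MA\ verification procedure of Bravyi--Terhal (the random walk on $G(H)$ of \Cref{algo:bt}) to decide the instance.

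The only subtlety, and what I would treat with care, is that the \problem{SetCSP} definition only constrains frustration against \emph{subset}-state witnesses $\ket{S}$, whereas \MA-containment of the stoquastic Hamiltonian problem presumes that general no-instance states are frustrated. This is resolved by \Cref{lem:gspace-structure} together with the uniform structure, which implies that any groundstate of a uniform stoquastic frustration-free Hamiltonian can be assumed to be a non-negative combination of subset states $\ket{\globalproj{T}_{i,j,z}}$, so yes/no cases coincide on the subset-state restriction used in the UNSAT definition. Once this is checked, the composition of both reductions yields the corollary immediately, and no new technical machinery beyond the preceding lemma and the BT characterization is needed.
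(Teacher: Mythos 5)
Your proposal matches the paper's approach exactly: the paper proves this corollary simply by pointing at the preceding equivalence lemma together with Bravyi--Terhal's \MA-completeness of the uniform stoquastic frustration-free Local Hamiltonian problem, which is precisely the composition you carry out in both directions. The subtlety you flag (the SetCSP UNSAT-value quantifies only over subset states $\ket{S}$, while the Hamiltonian promise is stated over all quantum states) is real, but your resolution via \Cref{lem:gspace-structure} and the groundstate-structure lemma only covers the yes-direction; for the no-direction one should note instead that the Bravyi--Terhal soundness analysis (\Cref{algo:bt}) is intrinsically a subset-state expansion argument, so frustration against subset states is all that the \MA\ verifier's rejection ever uses.
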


And by \Cref{thm:main}, we have the following..
\begin{corollary}
For any constant $\eps > 0$, the $k$-local Set Constraint Satisfaction problem is in \NP.
\end{corollary}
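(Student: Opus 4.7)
The plan is to reduce the $k$-local \problem{SetCSP} problem to the gapped uniform stoquastic Local Hamiltonian problem, and then invoke \Cref{thm:main}. Concretely, given a \problem{SetCSP} instance $I = ((T_1,B_1),\ldots,(T_m,B_m))$ on $\Sigma^n$, I would use the second direction of the preceding lemma to construct in polynomial time a uniform stoquastic $k$-local Hamiltonian $H = \frac{1}{m}\sum_i H_i$, where each $H_i = I - P_i$ and $P_i$ is the projector whose non-negative decomposition has subset-states corresponding to the sets in $T_i$, acting on the qudits of $B_i$. Since every $j\in[n]$ appears in at most $d$ of the $B_i$'s, the resulting Hamiltonian is $d$-bounded degree, and it is clearly uniform stoquastic by construction. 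Moreover, each $H_i$ is positive semi-definite with $\|H_i\|\leq 1$.

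The key property I would invoke is frustration preservation: for every $S\subseteq \Sigma^n$, $\bra{S}H\ket{S} = UNSAT(I,S)$. Hence if $I$ is satisfiable (with witness $S$), then $H$ is frustration-free (with groundstate $\ket{S}$); and if $I$ is $\epsilon$-frustrated, then for every subset-state $\ket{S}$ we have $\bra{S}H\ket{S}\geq \epsilon$. The remaining subtlety is that the frustration of $H$ is defined over all quantum states, not just subset-states; but for stoquastic Hamiltonians the groundstate energy is achieved by a non-negative state, and by the structure lemma (\Cref{lem:gspace-structure}) any such non-negative state is a non-negative combination of orthogonal non-negative eigenstates of each local projector. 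In the uniform case, any state achieving energy below $\epsilon$ would in particular imply the existence of a subset-state with energy below $\epsilon$ (by taking the corresponding uniform superposition on the support, which can only decrease the energy in the uniform stoquastic setting), contradicting $\epsilon$-frustration of $I$. Thus $H$ is also $\epsilon$-frustrated as a Hamiltonian.

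Having established the frustration-preserving reduction, the NP algorithm for \problem{SetCSP} is as follows: the prover sends the witness for the corresponding Hamiltonian instance (an initial string $x$), and the verifier runs the NP verification procedure guaranteed by \Cref{thm:main}, namely the brute-force enumeration over all $\sizepath{}$-step paths in $G(H)$ starting from $x$, rejecting if any path reaches a string bad for $H$. Since $k$, $d$, and $\epsilon$ are constants, this enumeration runs in polynomial time; completeness and soundness follow immediately from \Cref{thm:main} combined with the frustration preservation.

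I do not anticipate a serious obstacle: the bulk of the work has already been done in proving \Cref{thm:main}, and the equivalence lemma directly packages \problem{SetCSP} as a special (in fact equivalent) formulation of the uniform stoquastic local Hamiltonian problem. The only step requiring mild care is checking that the $\epsilon$-frustration condition for \problem{SetCSP} (quantified over subsets $S$) transfers to $\epsilon$-frustration of $H$ (quantified over all quantum states), which follows from the non-negative decomposition structure of stoquastic groundspaces in the uniform case.
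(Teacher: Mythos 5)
Your overall plan is exactly the paper's: apply the preceding equivalence lemma to translate a \problem{SetCSP} instance into a uniform stoquastic $k$-local Hamiltonian with the same frustration on subset-states, then invoke \Cref{thm:main}. You also correctly spot the subtlety that \Cref{thm:main} is stated with a no-promise over \emph{all quantum states} while the \problem{SetCSP} no-promise is over \emph{subsets}, and that one must bridge this gap. The problem is that the bridge you propose is false.

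You claim that ``taking the corresponding uniform superposition on the support \ldots can only decrease the energy in the uniform stoquastic setting.'' This is not true. Consider two $2$-local uniform stoquastic terms on the \emph{same} pair of qubits: $H_1 = I-\kb{T_1}$ with $\ket{T_1}=\tfrac{1}{\sqrt{2}}(\ket{00}+\ket{11})$, and $H_2 = I - \kb{00}$. Take $\ket{\psi_\theta}=\cos\theta\,\ket{00}+\sin\theta\,\ket{11}$. A direct computation gives
\[
\frac{1}{2}\bigl(\bra{\psi_\theta}H_1\ket{\psi_\theta}+\bra{\psi_\theta}H_2\ket{\psi_\theta}\bigr)
= \frac{1}{2} - \frac{\sqrt{2}}{4}\sin\!\left(2\theta+\frac{\pi}{4}\right),
\]
which is minimized at $\theta=\pi/8$ with value $\tfrac{1}{2}-\tfrac{\sqrt{2}}{4}\approx 0.146$, strictly below the subset-state value $\tfrac{1}{4}$ achieved at $\theta=\pi/4$ (the subset state $\ket{\{00,11\}}$) and also at $\theta=0$ (the subset state $\ket{\{00\}}$). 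So the groundstate energy of a uniform stoquastic Hamiltonian can be strictly smaller than $\min_S UNSAT(I,S)$, and replacing the groundstate by the uniform superposition over its support can \emph{raise} the energy. As a consequence, your reduction may produce a Hamiltonian that violates the no-promise of \Cref{thm:main} (the instance is neither frustration-free nor $\eps$-frustrated over all quantum states), and you cannot literally cite \Cref{thm:main} as a black box.

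The fix does not come from the decomposition structure you cite but from inspecting the proof of \Cref{lem:constant-step-path}: the $\eps$-frustration hypothesis is used only through \Cref{lem:lb-frustrated-terms} applied to subset-states $\reallywidehat{\globalproj{Q}_{i}\cdots\globalproj{Q}_1\ket{S_{0}}}$ inside \Cref{algo:find-frustrated-terms}. Nowhere does the argument need the energy lower bound on arbitrary quantum states; the weaker hypothesis that \emph{every subset-state} has energy at least $\eps$ is sufficient for \Cref{lem:non-overlapping-expanding-terms,lem:constant-layers,lem:lightcone} to go through verbatim. Since that weaker hypothesis is exactly the \problem{SetCSP} no-condition via the identity $\bra{S}H\ket{S}=UNSAT(I,S)$, the corollary holds. (The paper itself is terse here, stating only the per-subset identity in the equivalence lemma and then citing \Cref{thm:main}; the point is that the underlying proof really only uses subset-states, which is a statement about the proof structure rather than something you can deduce from the theorem statement alone.)
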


\bibliographystyle{alpha}
\bibliography{stoquastic}

\newcommand{\etalchar}[1]{$^{#1}$}
\begin{thebibliography}{BSCG{\etalchar{+}}14}

\bibitem[AALV09]{AharonovALV09}
Dorit Aharonov, Itai Arad, Zeph Landau, and Umesh~V. Vazirani.
\newblock The detectability lemma and quantum gap amplification.
\newblock In {\em Proceedings of the 41st Annual {ACM} Symposium on Theory of
  Computing, {STOC} 2009}, pages 417--426, 2009.

\bibitem[AALV11]{AharonovALV11}
Dorit Aharonov, Itai Arad, Zeph Landau, and Umesh~V. Vazirani.
\newblock The 1d area law and the complexity of quantum states: {A}
  combinatorial approach.
\newblock In {\em {IEEE} 52nd Annual Symposium on Foundations of Computer
  Science, {FOCS} 2011}, pages 324--333, 2011.

\bibitem[AAV13]{AharonovAV13}
Dorit Aharonov, Itai Arad, and Thomas Vidick.
\newblock {Guest column: the quantum PCP conjecture.}
\newblock {\em SIGACT News}, 44(2):47--79, 2013.

\bibitem[AAV16]{AnshuAV16}
Anurag Anshu, Itai Arad, and Thomas Vidick.
\newblock Simple proof of the detectability lemma and spectral gap
  amplification.
\newblock {\em Phys. Rev. B}, 93:205142, May 2016.

\bibitem[AE15]{AharonovE15}
Dorit Aharonov and Lior Eldar.
\newblock {The Commuting Local {H}amiltonian Problem on Locally Expanding
  Graphs is Approximable in {NP}}.
\newblock {\em Quantum Information Processing}, 14(1):83--101, 2015.

\bibitem[ALM{\etalchar{+}}98]{AroraLMSS98}
Sanjeev Arora, Carsten Lund, Rajeev Motwani, Madhu Sudan, and Mario Szegedy.
\newblock {Proof Verification and the Hardness of Approximation Problems}.
\newblock {\em Journal of the ACM}, 45(3):501--555, 1998.

\bibitem[ALV12]{AradLV12}
Itai Arad, Zeph Landau, and Umesh Vazirani.
\newblock Improved one-dimensional area law for frustration-free systems.
\newblock {\em Phys. Rev. B}, 85:195145, May 2012.

\bibitem[ALVV17]{AradLVV2017}
Itai Arad, Zeph Landau, Umesh Vazirani, and Thomas Vidick.
\newblock {Rigorous RG Algorithms and Area Laws for Low Energy Eigenstates in
  1D}.
\newblock {\em Communications in Mathematical Physics}, 356(1):65--105, Nov
  2017.

\bibitem[AN02]{AharonovN02}
Dorit Aharonov and Tomer Naveh.
\newblock {Quantum {NP} - {A Survey}}, 2002.

\bibitem[Ara11]{Arad11}
Itai Arad.
\newblock A note about a partial no-go theorem for quantum {PCP}.
\newblock {\em Quantum Information {\&} Computation}, 11(11-12):1019--1027,
  2011.

\bibitem[AS98]{AroraS98}
Sanjeev Arora and S~Safra.
\newblock {Probabilistic Checking of Proofs: A New Characterization of {NP}}.
\newblock {\em Journal of the ACM}, 45(1):70--122, 1998.

\bibitem[BBT06]{BravyiBT06}
Sergey Bravyi, Arvid~J. Bessen, and Barbara~M. Terhal.
\newblock Merlin-arthur games and stoquastic complexity.
\newblock {\em arXiv preprint arXiv:0611021}, 2006.

\bibitem[BDOT08]{BravyiDOT08}
Sergey Bravyi, David~P. Divincenzo, Roberto Oliveira, and Barbara~M. Terhal.
\newblock The complexity of stoquastic local hamiltonian problems.
\newblock {\em Quantum Info. Comput.}, 8(5):361--385, May 2008.

\bibitem[BFNW93]{BabaiFNW93}
L\'{a}szl\'{o} Babai, Lance Fortnow, Noam Nisan, and Avi Wigderson.
\newblock Bpp has subexponential time simulations unless exptime has
  publishable proofs.
\newblock {\em Comput. Complex.}, 3(4):307--318, October 1993.

\bibitem[BGK18]{BravyiGK}
Sergey Bravyi, David Gosset, and Robert König.
\newblock Quantum advantage with shallow circuits.
\newblock {\em Science}, 362(6412):308--311, 2018.

\bibitem[BH13]{BrandaoH13a}
Fernando G. S.~L. Brand{\~{a}}o and Aram~W. Harrow.
\newblock {Product-state approximations to quantum ground states}.
\newblock In {\em Proceedings of the 45th Annual {ACM} Symposium on Theory of
  Computing {(STOC '13)}}, pages 871--880, 2013.

\bibitem[BH17]{BravyiH17}
Sergey Bravyi and Matthew Hastings.
\newblock On complexity of the quantum ising model.
\newblock {\em Communications in Mathematical Physics}, 349(1):1--45, Jan 2017.

\bibitem[BISW17]{BonehISW17}
Dan Boneh, Yuval Ishai, Amit Sahai, and David~J. Wu.
\newblock Lattice-based snargs and their application to more efficient
  obfuscation.
\newblock In {\em Advances in Cryptology - {EUROCRYPT} 2017 - 36th Annual
  International Conference on the Theory and Applications of Cryptographic
  Techniques}, 2017.

\bibitem[Bra14]{Bravyi14}
Sergey Bravyi.
\newblock {Monte Carlo simulation of stoquastic Hamiltonians}.
\newblock {\em arXiv preprint arXiv:1402.2295}, 2014.

\bibitem[BSCG{\etalchar{+}}14]{Ben-SassonCG0MTV14}
Eli Ben-Sasson, Alessandro Chiesa, Christina Garman, Matthew Green, Ian Miers,
  Eran Tromer, and Madars Virza.
\newblock Zerocash: Decentralized anonymous payments from bitcoin.
\newblock In {\em IEEE Symposium on Security and Privacy}, pages 459--474. IEEE
  Computer Society, 2014.

\bibitem[BT09]{BravyiT09}
Sergey Bravyi and Barbara~M. Terhal.
\newblock Complexity of stoquastic frustration-free hamiltonians.
\newblock {\em {SIAM} J. Comput.}, 39(4):1462--1485, 2009.

\bibitem[CCD15]{Calude15}
Cristian~S. Calude, Elena Calude, and Michael~J. Dinneen.
\newblock Guest column: Adiabatic quantum computing challenges.
\newblock {\em SIGACT News}, 46(1):40--61, March 2015.

\bibitem[Coo71]{Cook71}
Stephen~A Cook.
\newblock {The complexity of theorem proving procedures}.
\newblock In {\em Proceedings of the {T}hird {A}nnual {ACM} {S}ymposium}, pages
  151--158. ACM, 1971.

\bibitem[CSV18]{CoudronSV18}
Matthew Coudron, Jalex Stark, and Thomas Vidick.
\newblock {Trading locality for time: certifiable randomness from low-depth
  circuits}.
\newblock {\em arXiv preprint arXiv:1810.04233}, 2018.

\bibitem[Din07]{Dinur07}
Irit Dinur.
\newblock {The PCP Theorem by Gap Amplification}.
\newblock {\em Journal of the ACM}, 54(3), 2007.

\bibitem[Dru11]{Drucker11}
Andrew Drucker.
\newblock A pcp characterization of am.
\newblock In {\em Proceedings of the 38th International Colloquim Conference on
  Automata, Languages and Programming - Volume Part I}, ICALP'11, 2011.

\bibitem[EH17]{EldarH17}
Lior Eldar and Aram~W Harrow.
\newblock Local hamiltonians whose ground states are hard to approximate.
\newblock In {\em 58th {IEEE} Annual Symposium on Foundations of Computer
  Science, {FOCS} 2017}, pages 427--438, 2017.

\bibitem[FGGS00]{FarhiGGS00}
Edward Farhi, Jeffrey Goldstone, Sam Gutmann, and Michael Sipser.
\newblock Quantum computation by adiabatic evolution.
\newblock {\em arXiv preprint arXiv:0001106}, 2000.

\bibitem[FH14]{FreedmanH14}
Michael~H. Freedman and Matthew~B. Hastings.
\newblock Quantum systems on non-k-hyperfinite complexes: a generalization of
  classical statistical mechanics on expander graphs.
\newblock {\em Quantum Information {\&} Computation}, 14(1-2):144--180, 2014.

\bibitem[Gal18]{LeGall18}
Fran\'cois~Le Gall.
\newblock {Average-Case Quantum Advantage with Shallow Circuits}.
\newblock {\em arXiv preprint arXiv:1810.12792}, 2018.

\bibitem[GH16]{GossetY16}
David Gosset and Yichen Huang.
\newblock Correlation length versus gap in frustration-free systems.
\newblock {\em Phys. Rev. Lett.}, 116:097202, Mar 2016.

\bibitem[GHLS15]{GharibianHLS15}
Sevag Gharibian, Yichen Huang, Zeph Landau, and Seung~Woo Shin.
\newblock Quantum hamiltonian complexity.
\newblock {\em Foundations and Trends in Theoretical Computer Science},
  10(3):159--282, 2015.

\bibitem[GKR15]{GoldwasserKR15}
Shafi Goldwasser, Yael~Tauman Kalai, and Guy~N. Rothblum.
\newblock Delegating computation: Interactive proofs for muggles.
\newblock {\em J. {ACM}}, 62(4):27:1--27:64, 2015.

\bibitem[GLSW15]{GharibianLSW15}
Sevag Gharibian, Zeph Landau, Seung~Woo Shin, and Guoming Wang.
\newblock Tensor network non-zero testing.
\newblock {\em Quantum Information {\&} Computation}, 15(9{\&}10):885--899,
  2015.

\bibitem[GZ11]{Goldreich2011}
Oded Goldreich and David Zuckerman.
\newblock Another proof that {BPP} $\subseteq$ {PH} (and more).
\newblock In {\em Studies in Complexity and Cryptography}, pages 40--53.
  Springer-Verlag, 2011.

\bibitem[H{\aa}s01]{Hastad2001}
Johan H{\aa}stad.
\newblock Some optimal inapproximability results.
\newblock {\em J. ACM}, 48(4), July 2001.

\bibitem[Has13a]{Hastings13a}
Matthew~B. Hastings.
\newblock Obstructions to classically simulating the quantum adiabatic
  algorithm.
\newblock {\em Quantum Information {\&} Computation}, 13(11-12):1038--1076,
  2013.

\bibitem[Has13b]{Hastings13}
Matthew~B. Hastings.
\newblock Trivial low energy states for commuting hamiltonians, and the quantum
  pcp conjecture.
\newblock {\em Quantum Info. Comput.}, 13(5-6):393--429, May 2013.

\bibitem[HILL99]{HastadILL99}
Johan H{\aa}stad, Russell Impagliazzo, Leonid~A. Levin, and Michael Luby.
\newblock A pseudorandom generator from any one-way function.
\newblock {\em SIAM J. Comput.}, 28(4), March 1999.

\bibitem[IKW02]{ImpagliazzoKW02}
Russell Impagliazzo, Valentine Kabanets, and Avi Wigderson.
\newblock In search of an easy witness: exponential time vs. probabilistic
  polynomial time.
\newblock {\em J. Comput. Syst. Sci.}, 65(4):672--694, 2002.

\bibitem[IW97]{ImpagliazzoW97}
Russell Impagliazzo and Avi Wigderson.
\newblock P = bpp if e requires exponential circuits: Derandomizing the xor
  lemma.
\newblock In {\em Proceedings of the Twenty-ninth Annual ACM Symposium on
  Theory of Computing}, STOC '97, pages 220--229. ACM, 1997.

\bibitem[KI04]{KabanetsI2004}
Valentine Kabanets and Russell Impagliazzo.
\newblock Derandomizing polynomial identity tests means proving circuit lower
  bounds.
\newblock {\em Computational Complexity}, 13(1):1--46, Dec 2004.

\bibitem[KSV02]{KitaevSV02}
Alexei Kitaev, A~Shen, and M~N Vyalyi.
\newblock {\em {Classical and quantum computation}}.
\newblock Graduate studies in mathematics. American mathematical society,
  Providence (R.I.), 2002.

\bibitem[KT18]{KlassenT18}
Joel Klassen and Barbara~M. Terhal.
\newblock { Two-local qubit Hamiltonians: when are they stoquastic?}
\newblock {\em arXiv preprint arXiv:1806.05405}, 2018.

\bibitem[Lev73]{Levin73}
Leonid~A Levin.
\newblock {Universal sequential search problems}.
\newblock {\em Problems of Information Transmission}, 9(3):265--266, 1973.

\bibitem[MLH18]{MarvianLH18}
Milad Marvian, Daniel~A Lidar, and Itay Hen.
\newblock { On the Computational Complexity of Curing the Sign Problem}.
\newblock {\em arXiv preprint arXiv:1802.03408}, 2018.

\bibitem[NC00]{NielsenC2011}
Michael~A Nielsen and Isaac~L Chuang.
\newblock {\em {Quantum computation and quantum information}}.
\newblock Cambridge Series on Information and the Natural Sciences. Cambridge
  University Press, 2000.

\bibitem[NV18]{NatarajanV18}
Anand Natarajan and Thomas Vidick.
\newblock Low-degree testing for quantum states, and a quantum entangled games
  {PCP} for {QMA}.
\newblock In {\em 59th {IEEE} Annual Symposium on Foundations of Computer
  Science, {FOCS} 2018}, pages 731--742, 2018.

\bibitem[NVY18]{NirkheVY18}
Chinmay Nirkhe, Umesh Vazirani, and Henry Yuen.
\newblock Approximate low-weight check codes and circuit lower bounds for noisy
  ground states.
\newblock In {\em 45th International Colloquium on Automata, Languages, and
  Programming, {ICALP} 2018}, pages 91:1--91:11, 2018.

\bibitem[NW94]{NisanW94}
Noam Nisan and Avi Wigderson.
\newblock Hardness vs randomness.
\newblock {\em J. Comput. Syst. Sci.}, 49(2):149--167, 1994.

\bibitem[Osb12]{Osborne12}
Tobias~J Osborne.
\newblock Hamiltonian complexity.
\newblock {\em Reports on Progress in Physics}, 75(2):022001, 2012.

\bibitem[STV01]{SudanTV01}
Madhu Sudan, Luca Trevisan, and Salil Vadhan.
\newblock Pseudorandom generators without the xor lemma.
\newblock {\em Journal of Computer and System Sciences}, 62(2):236 -- 266,
  2001.

\bibitem[SW]{WilliansShor19}
Peter Shor and Ryan Williams.
\newblock Mathoverflow: {Complete problems for randomized complexity classes}.
\newblock
  \url{https://mathoverflow.net/questions/34469/complete-problems-for-randomized-complexity-classes}.
\newblock Accessed: 2019-01-14.

\bibitem[TD04]{TerhalV2004}
Barbara~M. Terhal and David~P. DiVincenzo.
\newblock Adaptive quantum computation, constant depth quantum circuits and
  arthur-merlin games.
\newblock {\em Quantum Information \& Computation}, 4:134--145, 2004.

\bibitem[Wat00]{Watrous00}
John Watrous.
\newblock Succinct quantum proofs for properties of finite groups.
\newblock In {\em 41st Annual Symposium on Foundations of Computer Science,
  {FOCS} 2000, 12-14 November 2000, Redondo Beach, California, {USA}}, pages
  537--546, 2000.

\bibitem[ZF87]{Zachos1987}
Stathis Zachos and Martin Furer.
\newblock Probabilistic quantifiers vs. distrustful adversaries.
\newblock In {\em Seventh Conference on Foundations of Software Technology and
  Theoretical Computer Science}, pages 443--455, 1987.

\end{thebibliography}

\end{document}